\DeclareSymbolFont{calletters}{OMS}{cmsy}{m}{n}
\DeclareSymbolFontAlphabet{\mathcal}{calletters}
\Crefname{algocf}{Algorithm}{Algorithms}
\newtheorem{theorem}{Theorem}[section]
\newtheorem{lemma}[theorem]{Lemma}
\newtheorem{corollary}[theorem]{Corollary}
\newtheorem{proposition}[theorem]{Proposition}
\theoremstyle{definition}
\newtheorem{definition}[theorem]{Definition}
\theoremstyle{remark}
\newtheorem{claim}[theorem]{Claim}
\newcommand{\E}{\mathbb{E}}
\newcommand{\F}{\mathbb{F}}
\newcommand{\calB}{\mathcal{B}}
\newcommand{\calC}{\mathcal{C}}
\newcommand{\calD}{\mathcal{D}}
\newcommand{\calE}{\mathcal{E}}
\newcommand{\calF}{\mathcal{F}}
\newcommand{\calL}{\mathcal{L}}
\newcommand{\calO}{\mathcal{O}}
\newcommand{\calR}{\mathcal{R}}
\newcommand{\calS}{\mathcal{S}}
\newcommand{\calY}{\mathcal{Y}}
\newcommand{\domain}{D}
\newcommand{\range}{R}
\DeclarePairedDelimiterX{\norm}[1]{\lVert}{\rVert}{#1}
\newcommand{\floor}[1]{\left\lfloor #1 \right\rfloor}
\newcommand{\ceil}[1]{\left\lceil #1 \right\rceil}
\newcommand\ignore[1]{\section{Preliminaries}
\label{sec:prelims}
We define the binary entropy function $H_2:[0,1]\rightarrow \mathbb{R}$, by $H_2(p)=- p\log_2 p - (1-p) \log_2 (1-p)$.

\begin{proposition}[Shannon]
\label{prop:shannon}
The number of subsets of $[k]$ of size at most $\alpha k$ is at
most $2^{H_2(\alpha)\, k}$.
\end{proposition}

\begin{definition}
An $m \times n$ matrix is \emph{$(g,h,c)$-rigid} iff every $k \times w$ submatrix where $k \leq g$ and $w \geq n-h$ has rank at least $ck$. We call $(g,h,1)$-rigid matrices $(g,h)$-rigid.
\end{definition}

Matrix rigidity is a robust notion of rank and is an important property for proving time-space and cumulative complexity lower bounds for linear algebra. Fortunately, Yesha gives an explicit example of such a matrix and Abrahamson proved that there are many rigid square matrices.

\begin{proposition}[Lemma 3.2 in \cite{Yes84}]\label{prop:DFT-rigid}
The $n\times n$ Discrete Fourier Transform (DFT) matrix is $(n/4,n/4, 1/2)$ rigid.
\end{proposition}

\begin{proposition}[Lemma 4.3 in \cite{Abr91}]\label{prop:rigid-matrices}
There is a constant $\gamma \in (0, \frac{1}{2})$ such that at least a $1-d^{-1}(2/3)^{\gamma n}$ fraction of the matrices over $\domain^{n \times n}$ with $|\domain|=d$ are $(\gamma n,\gamma n)$-rigid.
\end{proposition}

\subsection{Time space tradeoffs for multi-output functions}\label{subsec:prelim-ts-tradeoffs}

\paragraph{Unitary quantum circuits with oracle states}
Throughout this paper, we consider quantum circuits that seek to compute target functions $f: \domain^n \to \range^{m}$ (or functions
$f:\domain^n\rightarrow \mathcal{P}(\range)$ where the requirement is to
output at least $m$ elements of
$f(x)$ if they exist). Let $d=|\domain|$ and assume the existence of some canonical bijective map $\nu: \domain \to \set{0,\ldots,d-1}$ that gives us an ordering on the elements of $\domain$.
A $T$-query quantum circuit $\calC$ is specified using input independent unitaries $U_0, \ldots, U_T$.
These unitaries define a sequence of quantum states $\ket{\psi_1}_\calC, \ldots \ket{\psi_T}_\calC$ that an algorithm enters during its execution. When it is ambiguous, we use the subscript $\calC$ to denote the partial trace of $\ket{\psi_t}$ that keeps only the qubits involved in the state of the query algorithm.
Note that even though $\ket{\psi_t}$ is always a pure state, $\ket{\psi_t}_\calC$ is often a mixed state.
We can think of each of these states $\ket{\psi_t}_\calC$ as a linear combination of basis vectors $\ket{i,p,w}$ where $i$ represents an index to query, $p$ represents a phase for the query, and $w$ contains all the remaining qubits of the state.

\begin{sloppypar}
Similar to \cite{Amb02, Zha19,HM21}, we define a general oracle operator $\calO$ that interacts with an input register that starts in a state $\ket{\psi_0}_\calO$.
When it is ambiguous, we use the subscript $\calO$ to denote the partial trace of $\ket{\psi_t}$ that keeps only the qubits involved in the state of the  oracle containing the input.
Given a distribution $\calD$ over $D^n$, we can make $\ket{\psi_0}_\calO = \sum_{X \in D^n} \sqrt{\Pr_{X' \sim \calD}[X' = X]} \ket{X}$ to represent an input sampled from $\calD$.
We define our oracle operator $\calO$ as
\begin{equation*}\calO \ket{i,p,w}\ket{X} = \omega_d^{x_i p} \ket{i,p,w}\ket{X}.\end{equation*}
Thus the joint state of the input and quantum circuit at the end of the computation is given by
\begin{math}\ket{\psi_T} = U_T \calO \ldots \calO U_0 \ket{\psi_0}\end{math}
where $\ket{\psi_0}=\ket{0}_\calC \otimes \ket{\psi_0}_\calO$.
\end{sloppypar}

The output of the quantum circuit is determined by measuring the work register of $\ket{\psi_T}_\calC$ in the standard basis and applying some input-independent post-processing function $q$ to interpret the result as an output $\tau \in R^J$ where $J \subseteq [m]$.
The correctness of these output values is then determined by measuring the input registers in the standard basis to obtain the input $X$ and evaluating whether $\tau$ is consistent with $f(X)$,
which we denote by writing $\tau \| f(X)$.
In general we can define the projector $\Pi_{k}$ where:
\begin{equation}\label{eq:output-k}
\Pi_{k} = \mkern-40mu \sum_{\substack{i,p,w,x_1, \ldots, x_n \\ \text{s.t. } q(w) \| f(x_1, \ldots, x_n) \\ \text{and } |q(w)| \geq k}} \mkern-40mu \ket{i,p,w,x_1,\ldots,x_n}\bra{i,p,w,x_1,\ldots,x_n}
\end{equation}
The probability that the circuit produces a correct partial assignment of at least $k$ output values is given by $\norm{\Pi_{k} \ket{\psi_T}}^2$.
For a given partial assignment $q(w)$ to some outputs, we can define $\Pi_{q(w)}$ to be the projection onto the values of $\ket{X}$ where $q(w) \| f(X)$.
More specifically we have that:
\begin{equation}\label{eq:output-project}
    \Pi_{q(w)} = \mkern-40mu \sum_{\substack{x_1, \ldots, x_n\\ \text{s.t. }q(w)\| f(x_1,\ldots,x_n})} \mkern-40mu \ket{x_1,\ldots, x_n} \bra{x_1, \ldots, x_n}
\end{equation}
By construction when $q$ always produces a partial assignment of at least $k$ elements we have that
\begin{math}
\Pi_k = \sum_{i,p,w} \ket{i,p,w}\bra{i,p,w} \otimes \Pi_{q(w)}.
\end{math}

\paragraph{Space-bounded quantum computation}
As described above, we think of space-bounded quantum circuits as starting in the all $\ket{0}$ state and cycling between applying input queries $\calO$, and arbitrary input-independent computation $U_t$.
Unlike in the unitary circuit model, we allow our space-bounded quantum circuits to make intermediate measurements after applying each $U_t$ as shown in \cref{fig:quantum-circuit}.
Adopting the notation of \cite{BK23}, we will consider the set of consecutive $\calO$, $U_t$ and measurement gates as layer $L_t$.
As was done in \cite{HM21}, we will assume that the quantum query circuit has a dedicated register containing a boolean flag and a potential output $(i,y_i) \in [m] \times \range$.
After each query $\calO$ and subsequent unitary operation $U_t$, the flag register is measured in the standard basis.
Should the outcome $1$ be obtained, the output register is measured in the standard basis and interpreted as an output pair $(i,y_i)$ which is written to a write-only tape.
Otherwise, the circuit produces no output during this layer.
The space of layer $L_t$ is the number of qubits that are passed from layer $L_t$ to $L_{t+1}$ and is denoted $S_t$.
We define the space of a circuit as the maximum space of any layer, the time as the total number of layers, and the cumulative memory as the sum over all the $S_t$.
Thus the space needed to store the input and output is not included in this model.

Intermediate measurements enable circuits to produce parts of their output early and discard unnecessary ancillary qubits.
Similar to the disjoint collisions bound in \cite{HM21}, our results in \cref{sec:mat-vec,sec:mat-mult} apply to quantum circuits without any required structure on their output order.
Thus, as long as the circuits produce the correct output value for each index $i$, they may do so during arbitrary layers of the circuit that may depend on the chosen input.
However, as was the case in \cite{KSdW07}, our results for  quantum Boolean matrix multiplication in \cref{sec:quantum-boolean} apply to a more restricted model of computation where the choice of when to produce each output value is independent of the input.
In this \emph{output-oblivious} model, quantum circuits do not have a flag register.
Instead, on predefined layers the quantum circuit measures the output register in the standard basis and interprets the result as an element of $\range$ corresponding to a fixed output index.
This output-oblivious ordering restricts the set of allowed algorithms and is necessary to prove our key lemmas associated with Boolean matrix multiplication.

\begin{figure}
    \centering
    \includegraphics[width=.65 \textwidth]{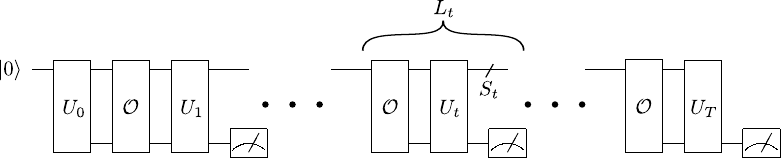}
    \caption{A general quantum circuit with $T$ queries.}
    \label{fig:quantum-circuit}
\end{figure}

\paragraph{Space-bounded classical computation}
One can view our classical lower bounds in \cref{sec:quantum-boolean} as applying to layered \emph{branching programs}~\cite{BC82}  where the space bound corresponds to the logarithm of the width of the program and the time corresponds to the number of layers.
Output in a branching program is produced along the edges and written to a write-only output tape.
Thus the space bound of a classical computation only considers the $S$ bits of internal state maintained by the device and not the size of its read-only input or write-only output.
Our results for classical Boolean matrix multiplication in \cref{sec:quantum-boolean} apply to an output-oblivious model, which corresponds to branching programs that must produce outputs for the same output index regardless of which edge is taken between two layers.

\paragraph{The Borodin-Cook method}
The Borodin-Cook method provides a general framework for proving time-space tradeoff lower bounds for multi-output problems, those for which every input
vector in $\domain^n$ is associated with some fixed set of possible output values from set $\range$ and the objective is to output at least $m$ of these output values.
As discussed earlier these can be functions
$f:\domain^n\rightarrow \range^m$, or
$f:\domain^n \rightarrow \mathcal{P}(\range)$ where the requirement is
to produce at least $m$ elements
of $f(\domain^n)$, if they exist\footnote{There is a more general version where the query algorithm is only required to produce these $m$ outputs with some sufficiently high probability but we focus on the simpler form}.

The property of the function $f$ that
enables the Borodin-Cook method to be used is the following\footnote{We do not specify an upper limit on the possible $k\le m$ in this informal statement. The exact range for which it holds will impact the space bounds for which the tradeoff holds.} for some well-behaved function $h(k,n)$:
\begin{description}
    \item[(*)] 
    Let $c=c(\domain)>1$.
Any classical query algorithm that makes at
most $t\le h= h(k,n)$ queries for an input distribution $\mathcal{D}$ on
$\domain^n$, correctly produces $k$ correct output values of $f$ with
probability at most $c^{-k}$.
\end{description}
With this property, Borodin and Cook showed that one directly obtains a classical time-space tradeoff for computing $f$ of the form $T\cdot S=\Omega(m \, h(S/(\log c),n)\, \log c)$ for time $T$ that is $n^{O(1)}$ and space $S$ as follows:
\begin{proof}[Proof sketch]
Choose $k$ with $\log n\le k\le m$ such that $2^S\cdot T\cdot c^{-k}< 1$;
then $k$ is roughly $S/(\log c)$.

Divide the $T$ query steps into disjoint blocks of
$h=h(k,n)$ queries each and assume that $T$ is a multiple of $h$, without loss of generality.
Since $m$ outputs must be produced on
all inputs in $\domain^n$ and there
are $T/h$ blocks,
for $T< m h/k$, which is $\Theta(m\, h \log c\ /S)$, for every execution on every input there must some
block where at least $k$ correct outputs are produced.  

However, since the space is at most
$S$ there are at most $2^S$ configurations of the states that the algorithm could have been in at the beginning of each time block.  Since \textbf{(*)} says that any fixed block can produce at least $k$ output values correctly
with probability at most $c^{-k}$ 
under $\mathcal{D}$, by a union bound the total probability that some fixed block
produces at least $k$ correct output values
is at most $2^S  c^{-k}<1/T$ by our
choice of $k$.
Since there are only $T/h$ blocks, the
probability that there is one of them that produces
$k$ correct answers is $<1$.

Therefore $T$ must be $\Omega(mh\log c\ /S)$ as required.
\end{proof}

For quantum algorithms, Klauck et al.~\cite{KSdW07} observed that
one could use a result by Aaronson in place of the union bound over the $2^S$ classical
state configurations at the start of each block in the Borodin-Cook method.

\begin{proposition}[\cite{Aar05}]\label{prop:quant-union}
Let $\calC$ be a quantum circuit, $\rho$ be an $S$-qubit (possibly mixed) state, and $\pi_{\textrm mix}$ be the $S$-qubit maximally mixed state. If $\calC$ starting in initial state $\rho$ produces some output $z$ with probability $p$, then $\calC$ starting in state $\pi_{\textrm mix}$ will produce $z$ with probability $q$ which is at least $p/2^{S}$.
\end{proposition}
We include a stand-alone derivation here for completeness.
\begin{proof}
    Without loss of generality we can assume $\calC$ performs no measurements until the end of the circuit.
    Thus we can think of $\calC$ as representing a unitary operator $U$.
    Let $\Pi_z$ be the projection onto output states of $\calC$ that cause the circuit to output the value $z$.
    Then $p_z =\text{Tr}[ \Pi_z U \rho U^\dagger]$.
    By the spectral decomposition theorem we can represent $\rho$ as a convex combination of some set of orthogonal pure states $\rho = \sum_{i \in [2^S]} \lambda_i \ket{\varphi_i}\bra{\varphi_i}$.
    Since the maximally mixed state can be represented as $\pi_{\textrm mix} = \sum_{i\in[2^S]} (1/2^S) \ket{\varphi_i}\bra{\varphi_i}$ we have that:
    \begin{align*}
        q &= \text{Tr}[ \Pi_z U \pi_{\textrm mix} U^\dagger]\\
        &= \text{Tr}[ \Pi_z U \left(\sum_{i\in [2^S]} \frac{1}{2^S} \ket{\varphi_i}\bra{\varphi_i}\right) U^\dagger]\\
        &= \frac{1}{2^S} \text{Tr}[\sum_{i\in 2^S} \bra{\varphi_i} U^\dagger \Pi_z U \ket{\varphi_i}]\\
        &\geq \frac{1}{2^S} \text{Tr}[\sum_{i\in 2^S} \lambda_i \bra{\varphi_i} U^\dagger \Pi_z U \ket{\varphi_i}]\\
        &= \frac{1}{2^S} \text{Tr}[\Pi_z U \left(\sum_{i \in [2^S]} \lambda_i \ket{\varphi_i}\bra{\varphi_i}\right) U^\dagger]\\
        &= \frac{1}{2^S} \text{Tr}[\Pi_z U \rho U^\dagger] = p/2^S
    \end{align*}
    Where the inequality comes from the fact that $\bra{\varphi} U^\dagger \Pi_z U \ket{\varphi} \geq 0$ for any state $\ket{\varphi}$.
\end{proof}

With this they showed that essentially the same paradigm could be used to give similar time-space tradeoff lower bounds for quantum algorithms if one can
prove a quantum analog of \textbf{(*)}.
One subtlety that arises from the quantum version of the Borodin-Cook method is that often the quantum version of \textbf{(*)} is proven in a non-space-bounded unitary circuit model without intermediate measurements.
By using the deferred measurement principle, we can see that lower bounds on the success probability of short quantum circuits in this model imply equally tight lower bounds in the space-bounded model where we directly apply the Borodin-Cook method.}
\pgfplotsset{compat=1.18} 
\begin{document}

\title{Quantum Time-Space Tradeoffs for Matrix Problems
}

\author{
Paul Beame\thanks{Research supported by NSF grants CCF-2006359 and CCF-2422205}\\Computer Science \& Engineering\\University of Washington \and Niels Kornerup\thanks{Research supported by Schmidt Sciences Polymath award to David Soloveichik and the LDRD Program at Sandia National Laboratories. Sandia is managed and
operated by NTESS under DOE NNSA contract DE-NA0003525}\\Sandia National Laboratories
\and Michael Whitmeyer\thanks{Research supported by NSF grants CCF-2006359 and CCF-2422205 in addition to Simons Foundation grant  928589}
\\Computer Science \& Engineering\\University of Washington
}
\date{\today}

\maketitle
\thispagestyle{empty}

\begin{abstract}
We consider the time and space required for quantum computers to solve a wide variety of problems involving matrices, many of which have only been analyzed classically in prior work.
Our main results show that for a range of linear algebra problems---including matrix-vector product, matrix inversion, matrix multiplication and powering---existing classical time-space tradeoffs, several of which are tight for every space bound, also apply to quantum algorithms with at most a constant factor loss.
For example, for almost all fixed matrices $A$,
including the discrete Fourier transform (DFT) matrix, we prove that quantum circuits with at most $T$ input queries and $S$ qubits of memory require $T=\Omega(n^2/S)$ to compute matrix-vector product $Ax$ for $x \in \{0,1\}^n$. We similarly prove that matrix multiplication for
$n\times n$ binary matrices
requires
$T=\Omega(n^3 / \sqrt{S})$.
Because many of our lower bounds are matched by deterministic algorithms with the same time and space complexity, our results show that quantum computers cannot provide any asymptotic advantage for
these problems with any space bound.

We obtain matching lower bounds for the stronger notion of quantum cumulative memory complexity---the sum of the space per layer of a circuit.

We also consider Boolean (i.e. AND-OR) matrix multiplication and matrix-vector products, improving the previous quantum time-space tradeoff lower bounds for $n\times n$ Boolean matrix multiplication to $T=\Omega(n^{2.5}/S^{1/4})$ from $T=\Omega(n^{2.5}/S^{1/2})$.

Our improved lower bound for Boolean matrix multiplication is based on a new 
coloring argument that extracts more from the strong
direct product theorem that was the basis for prior work.
To obtain
our tight lower bounds for linear algebra
problems, we require much stronger bounds
than strong direct product theorems.   We obtain these bounds
by adding a new bucketing method to the quantum recording-query technique of Zhandry that lets us apply classical arguments to upper bound the success probability of quantum
circuits.
\end{abstract}
\newpage\setcounter{page}{1}
\section{Introduction}

Matrix computations are among the most fundamental
computational problems and are critically important in areas such as numerical and scientific computing, optimization, and machine learning.
If quantum computers can be shown to have a significant advantage over classical computations for these types of problems then it would open up a wide range of applications for such devices.

Prior work has shown that non-standard versions of matrix problems may indeed admit exponential or large polynomial quantum advantage:
For any efficiently implementable operator $M$, the HHL algorithm of Harrow, Hassidim, and Lloyd~\cite{HHL09} (with the improvements of \cite{CKS15}) 
can efficiently $\epsilon$-approximate the value of $x^\dagger M x$ for the
solution $x$ of a well-conditioned linear system.
However, it is important to note that this algorithm requires the input to be presented in an unconventional format.

Many extensions of the HHL algorithm have also been proposed that can be elegantly described in the quantum singular value transform (qSVT) framework first described in \cite{LC19} and popularized by \cite{GSLW19}.  
Despite initial hope of exponential speed-up, a series of papers by Tang and co-authors, and others~(e.g. \cite{Tang19, CGA+20-2,CGA+20,GST22, BT23,DBLP:conf/icml/ChepurkoCHLW22})
has shown that, by providing classical algorithms a comparable input format to the HHL algorithm, these quantum algorithms can be replaced by classical ones with only a polynomial blowup in the running time, 
although this polynomial is not always small.

This body of work still begs the question: What is the conventional quantum complexity of standard
classical problems like explicitly computing linear-system solutions, multiplying or inverting matrices, computing matrix-vector products, and computing the low rank approximation of a matrix?

By the polynomial method, we know that computing a single inner product (or parity) of
$n$-bit vectors requires $\Omega(n)$ quantum queries~\cite{BBC+01}, but linear algebra computations generally involve $\Omega(n)$ or 
$\Omega(n^2)$ such computations.
Sherstov~\cite{DBLP:journals/siamcomp/Sherstov12}, generalizing results of Klauck, \v{S}palek, and de Wolf~\cite{KSdW07} for the OR function,
gave a strong direct product lower bound for
quantum query complexity proved using the polynomial
method, which yields strong lower bounds for 
inner products involving many \emph{disjoint} input vectors.   However, the matrix problems in linear 
algebra are very far from direct product problems: 
The vectors involved are highly correlated with
each other, so this prior work does not shed light on the key question of whether quantum 
algorithms provide any advantage for general linear algebra.

In this paper, we resolve
these questions for quantum computation of a
wide array of linear algebra problems,
proving lower bounds for
quantum computation
that are asymptotically the same as the best
classical 
lower bounds. 
Since many of the problems also have 
deterministic algorithms whose resource
usage matches the lower bounds, our results show
that there is provably no asymptotic quantum advantage at all in solving these linear algebra problems!

As with the study of classical computation involving super-linear time lower bounds, we consider quantum algorithms in which we 
limit the number of qubits of memory and hence produce quantum time-space tradeoffs.
That is, for each fixed bound on the amount
of memory allowed, we derive asymptotically 
the same time lower bound for the quantum 
algorithm as one would get for the time lower
bound on classical algorithms with the same number of
classical bits.
In many ways, quantum memory is an even more critical resource than classical memory since it is a measure of the maximum number of qubits that maintain coherence at any time during the algorithm's execution.
For this reason the first general-purpose fault-tolerant quantum computers will likely have very limited memory and only be able to execute low depth quantum circuits.
As such, it is crucial to consider both the time and space complexity for quantum algorithms.

We prove our lower bounds for quantum computation in a query model where algorithms are able to perform arbitrary input-independent unitary transformations on their state between quantum queries to their input.
This is a sufficiently general model that our lower bounds also apply to any reasonable model of quantum computation---including quantum circuits where the (classical) input is stored in quantum-readable read only memory (QROM).

The keys to proving our time-space tradeoffs are new results proving much stronger lower bounds than strong direct product theorems for matrix-vector products and matrix multiplication. While our bounds have the same form as strong
direct product theorems (the success probability decays exponentially with the number of outputs), they also apply with almost completely overlapping sets of inputs, in contrast to the disjoint
inputs that are necessary to apply direct product theorems.

While there is a large body of work proving strong classical
time-space tradeoffs~(e.g. \cite{Tom78, BFK+79, Yes84, BC82, DBLP:conf/focs/Abrahamson90,Abr91, DBLP:journals/siamcomp/Beame91,DBLP:journals/tcs/MansourNT93}) and a 
large body of work analyzing unrestricted quantum query algorithms versus their classical randomized counterparts~(e.g. \cite{DJ92, BV97, Sim97, BBC+01, Amb02, SS06, Spa08, She11}),  there are just a few previous papers that
analyze the quantum memory required to make use of these quantum queries.
Klauck, \v{S}palek, and de Wolf~\cite{KSdW07} extended the classical method of Borodin and Cook~\cite{BC82} for
proving time-space tradeoffs to quantum circuits
using a new strong direct product theorem for quantum query algorithms computing the OR function. 
They showed that algorithms making $T$ quantum
queries and using $S$ qubits of quantum memory require $T=\Theta(n^{1.5}/S^{1/2})$ to sort lists of length $n$,  and 
require $T=\Omega(n^{2.5}/S^{1/2})$
to compute $n\times n$ Boolean matrix product. 
Ambainis, \v{S}palek, and de Wolf~\cite{ASdW09} extended this direct
product approach to 2-sided error algorithms computing $k$-threshold functions which allowed them to produce similar trade-off lower bounds for systems of linear inequalities/equalities (though these have the drawback, unlike the other results, that the hard function for space $S$ depends on the space bound).  This approach, based on an extension of the adversary method using eigenspace analysis, was very difficult to apply.

As a result, further study of quantum time-space tradeoff lower bounds languished until it was enabled by an idea of Zhandry~\cite{Zha19} who, motivated by understanding quantum algorithms interacting with random function oracles, developed an approach to understanding quantum query algorithms using a \emph{compressed oracle} and Fourier analysis.
This views computations in a \emph{recording query} basis that allow one to keep track of a quantum query algorithm as a superposition of basis states that have a natural classical query interpretation.  It has 
been applied to finding multi-way collisions~\cite{LiuZ18} and to inverting a random permutation~\cite{Rosmanis21}.  
This greatly simplifies the analysis of quantum query algorithms and can be applied to many lower bound methods that use randomly chosen inputs rather than being limited to cryptographic applications.

Extending Zhandry's approach, Hamoudi and Magniez~\cite{HM21} applied
an even cleaner expression of the method, using phase oracles with the recording query basis rather than Fourier analysis, and extended it using biased random inputs to
derive query lower bounds in a regime of exponentially small 
success probability.  They used this to obtain time-space tradeoff lower bounds, proving that any quantum algorithm that finds $K$ disjoint collisions in an input of length $n$ with $T$ quantum queries and $S$ qubits of memory must have
$T=\Omega(KN^{1/3}/S^{1/3})$.  They also re-derived the earlier sorting lower bound using this method.

\paragraph{Our linear algebra lower bounds and methods}

Time-space trade-off lower bounds
for linear algebraic problems were
among the first to be studied for classical
computation~\cite{Yes84} after the first bounds
for sorting.  
The strongest classical results are due
to Abrahamson~\cite{Abr91} who developed a powerful general method based on matrix rigidity. 
This yields state-of-the-art lower bounds for computation of Fourier transforms, convolution, matrix-vector products, matrix multiplication,  matrix inversion, matrix powering, and linear system solving.
The lack of any analogous results for quantum
computation has been a substantial gap in our understanding~\footnote{Over a field of $>n$ elements one can reduce $n\times n$ Boolean matrix multiplication to ordinary multiplication of $0$-$1$ matrices but the lower bound is inherently too weak because in the Boolean case each output bit is a disjointness function of its inputs and hence can be computed using only $O(\sqrt{n})$ quantum queries using 
Grover's algorithm (\cite{Gro96}).}.   

Our results show that all the linear algebraic time-space tradeoff lower bounds shown by Abrahamson~\cite{Abr91} also apply
to quantum computation even when the quantum circuit can adaptively decide when to produce output based on the observed input.
Since many of these classical lower bounds are tight, our results directly imply that there is no hybrid classical-quantum algorithms with a polynomial advantage for these problems unlike the query bounds for search and collision finding in \cite{HamoudiLS22}.
Using the generic results in \cite{BK23}, we also prove asymptotically equivalent lower bounds on the stronger notion of quantum cumulative memory complexity for these problems.
We include a table of our time-space tradeoff lower bounds in \cref{table:results}.

{\small
\begin{table}
\begin{tabular}{@{}lll@{}}
\toprule
Problem                                & Quantum  Lower Bound          & Source     \\ \midrule
Matrix Multiplication $f(A,B) = AB$             & $T=\Theta(n^3 \sqrt{\log d\ / S})$ & \cref{thm:mat-mult}       \\ \midrule
Matrix Squaring $f(A) = A^2$                   & $T = \Theta(n^3 \sqrt{\log d\ / S})$ & \cref{cor:mat-square} \\ \midrule
Matrix Triple Product $f(A,B,C) = ABC$                       & $T=\Theta(n^4 \log d\ / S)$        &  \cref{cor:mat-mat-mat-product-ts-lb}       \\ \midrule
Matrix Cubing $f(A) = A^3$                         & $T=\Theta(n^4 \log d\ / S)$         & \cref{cor:mat-cube-ts-lb}       \\ \midrule
Matrix Inversion $f(A) = A^{-1}$                     & $T=\Omega(n^4 \log d\  / S)$        & \cref{cor:mat-invert-ts-lb}       \\ \midrule
System of Linear Equations $f(A, y) = A^{-1} y$                 & $T=\Omega(n^3 \log d\  / S)$        & \cref{cor:system-eqn-cm-lb}       \\ \midrule
Matrix-Vector Product $f(x) = Ax$          & $T=\Theta(n^2 \log d\ /S)$           & \cref{thm:time-space-matrix-vector}       \\ \midrule
Discrete Fourier Transform $f(x) = Wx$             & $T=\Theta(n^2 \log d\ /S)$           & \cref{cor:DFT-ts-lb}       \\ \midrule
Convolution $f(u,v) = u*v$                           & $T=\Theta(n^2 \log d\ /S)$           & \cref{cor:convo-ts-lb}       \\\midrule
Binary Integer Multiplication                & $T=\Omega(n^2/(S \log^2 n))$        &  \cref{cor:bin-mult-ts-lb}       \\\midrule \midrule
Boolean Matrix Multiplication $f(A,B) = A \bullet B$   & $T=\Omega(n^{2.5} / S^{0.5})$          & \cite{KSdW07}   \\
                                       & $T=\Omega(n^{2.5}/ S^{0.25})$            & \cref{thm:quantum-booolean-matrix}       \\ 
                      \hfill Classical & $T = \Omega(n^3 / S)$                   & \cite{KSdW07,DBLP:conf/focs/Abrahamson90} \\
                     \hfill Classical & $T = \Omega(n^{3.5} / S)$  for $S\ge cn$                  & \cite{DBLP:conf/focs/Abrahamson90} \\
                      \hfill Classical & $T = \Theta(n^3 / S^{0.5})$             & \cref{thm:classical-boolean-matrix} \\\midrule
Boolean Matrix Squaring & $T= \Omega(n^{2.5} / S^{0.25})$ & \cref{cor:bool-matrix-sq} \\ \midrule
\end{tabular}
\caption{Summary of our quantum lower bounds, along with prior work.
Inputs are assumed to be of length $n$ vectors or $n \times n$ matrices.
Our linear algebra bounds apply for input elements coming from any fixed subset $\domain$ of a field with $d= |\domain|$. These are the first quantum time-space lower bounds for all of these problems other than Boolean matrix multiplication. Problems with deterministic classical query algorithms given in \cite{JS82} and \cite{Abr91} that match our quantum query lower bounds are denoted with $\Theta$ notation instead of $\Omega$. Constructions of the matching query algorithms can be found in \cref{sec:query-algs}.}
\label{table:results}
\end{table}
}

As discussed already, we need a much stronger lower bound method than any derivable from strong
direct product theorems.
We do this by the adding new ideas to the compressed oracle/recording query approach of Zhandry~\cite{Zha19} as extended and applied by
Magniez and Hamoudi~\cite{HM21}.
Thus far, the compressed oracle method has used a two-step pattern:   First, identify a notion of unusual progress of a quantum algorithm towards a solution  (i.e., the partial information so far is more determinative of the answer than one might expect) and show that the
total amplitude of states where this occurs is small,
Second, show that the total amplitude of the quantum states where many outputs are produced without
unusual progress can be bounded; this latter part
has used ideas with classical analogs that can 
be applied by breaking the algorithm's final state into mutually orthogonal components, each with small amplitude on the correct answers.   

However, in our case with linear algebra problems, there
is no form of unusual progress and also no clear way to break up the problem into mutually orthogonal basis states.    
Thus, neither part of the pattern seems to work.
Instead, we can use the recording query framework to characterize how much a quantum circuit can know about its input.
We use the triangle inequality to bucket amplitude from the algorithm's state into a small number of non-orthogonal components (or buckets) that share some set of inputs that they know nothing about.
We can then apply a classical argument showing that each component must have small amplitude on the correct answers.
By finding a way to divide the state into a small number of buckets that each have small amplitude on correct answers, we can obtain tight lower bounds.
The properties required of this division become more subtle as we move
to the problem of matrix multiplication, where in order to get small amplitude, we need to contend with a partition featuring significantly more parts.

\paragraph{Improved bounds for Boolean matrix operations}

Here we improve the previous lower bound for quantum algorithms computing 
Boolean matrix multiplication given in~\cite{KSdW07} from
$T=\Omega(n^{2.5}/S^{1/2})$ to $T=\Omega(n^{2.5}/S^{1/4})$.
We do this using a more sophisticated embedding of the $k$-fold direct product
of OR functions into an arbitrary subset of $k$ outputs of Boolean matrix multiplication.   The embedding hinges on the number of colors needed for a certain kind of
partial coloring of subsets $E$ of the $n\times n$ grid.
The exponents of $n$ and $S$ in our lower bound are optimal for the general quantum circuit model to which it applies. 

Our lower bounds also lead to  improving the classical lower bound
tradeoff of $T=\Omega(n^3/S)$ for circuits shown in~\cite{KSdW07} to $T=\Omega(n^3/S^{1/2})$.   (In these bounds, $T$ is circuit depth
and $S$ is
circuit width.)
Just as with our quantum lower bound, this has optimal exponents for $n$ and $S$, achieving the goal of 
Klauck, \v{S}palek, and de Wolf~\cite{KSdW07} who suggested that $T^2 S=\Omega(n^6)$ was a likely tight tradeoff for classical computation of Boolean matrix multiplication.
It is strictly larger almost everywhere than a classical lower bound of $T=\Omega(n^3/S)$ 
for $S\le n^{0.5}$ and $T=\Omega(n^{3.5}/S)$ for $S\ge n$ for Boolean matrix multiplication on branching programs (a more general model than circuits) due to 
Abrahamson~\cite{DBLP:conf/focs/Abrahamson90} that is tight almost surely
for input matrices whose entries are 1 with probability $1/\sqrt{n}$
independently.

Finally, we make a small adjustment to convert the Boolean matrix-vector lower bounds and lower bounds for systems of inequalities given in
\cite{KSdW07} and \cite{ASdW09}, respectively, so that the
problems that are shown hard for space $S$ do not depend on $S$.

\subsection{The quantum recording query technique}
Here we review the methods developed in \cite{Zha19, HM21} that allow us to analyze what a quantum circuit learns about its input by making quantum queries.
We will assume that the input state $\ket{\psi_0}_\calO$ is the equal superposition state over all inputs, although \cite{Zha19, HM21, Rosmanis21} generalize this method to other input distributions.
We can exchange the general query operator $\calO$ for the uniform input distribution with a recording query operator $\calR$ that we define as follows:

\begin{definition}[adapted from \cite{HM21}]\label{def-recording-query-oracle}
Let $\domain$ be the input alphabet, $d= |\domain|$, and $\nu$ be our choice of canonical bijection between $\domain$ and $\{0,\ldots, d-1\}$.
We define $\calS_1$ to be the unitary operator that maps
\begin{displaymath}
    \calS_1: \begin{cases} \ket{\perp} &\longrightarrow \frac{1}{\sqrt{d}} \sum_{y \in \domain} \ket{y} \\ \frac{1}{\sqrt{d}} \sum_{y \in \domain} \ket{y} &\longrightarrow \ket{\perp} \\ \frac{1}{\sqrt{d}} \sum_{y \in \domain} \omega_{d}^{p\,\nu(y)} \ket{y} &\longrightarrow \frac{1}{\sqrt{d}} \sum_{y \in \domain} \omega_{d}^{p\,\nu(y)} \ket{y} \; \forall p \in \{1, \ldots, d -1\}.\end{cases}
\end{displaymath}
Let $\calS = (I)_{i,p,w}\otimes (\calS_1^{\otimes n})_{x_1, \ldots, x_n}$ and $\calO$ be the standard oracle operator that maps the basis state 
\begin{displaymath}
    \ket{i,p,w,x_1, \ldots, x_n} \longrightarrow \omega_{d}^{p\,\nu(x_i)} \ket{i,p,w,x_1,\ldots,x_n}.
\end{displaymath}
Then the \emph{recording query oracle operator} $\calR$ is defined as $\calS \calO \calS$.
\end{definition}
$\calS_1$ introduces $\perp$ as a new value for the input registers.
Intuitively, the $\perp$ symbol indicates that the algorithm does not know anything about that register of the oracle.
Hence by adding and correctly manipulating the $\perp$ symbols in the oracle's registers, we can record what the algorithm knows about the input. Since $\calS^2 = I$, we can exactly characterize how the states of quantum circuits with oracles $\calO$ and $\calR$ relate to one another.

\begin{proposition}[Theorem 3.3 in \cite{HM21}]\label{prop:recording-equiv}
Let $\calC$ be a quantum circuit that for each $j\le t$ applies unitary $U_j$ after the $j$-th query. Let $\calS$ be the unitary operation and $\calR$ be the recording query oracle from \cref{def-recording-query-oracle}. Let
\begin{align*}
    \ket{\psi_t} &= U_{t} \calO U_{t-1} \ldots U_1 \calO U_0 \left(\ket{0}_{i,p,w} \otimes \frac{1}{d^{n/2}} \sum_{x_1, \ldots, x_n \in \domain} \ket{x_1, \ldots, x_n}_{x_1, \ldots, x_n}\right)\\
    \ket{\phi_t} &= U_t \calR U_{t-1} \ldots U_1 \calR U_0 \left(\ket{0}_{i,p,w} \otimes \ket{\perp}_{x_1, \ldots, x_n} \right)
\end{align*}
be the states of $\calC$ with oracle $\calO$ or $\calR$ respectively. Then $\ket{\psi_t} = \calS \ket{\phi_t}$.
\end{proposition}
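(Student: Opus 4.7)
The plan is to prove the identity by induction on $t$, reducing everything to three preliminary facts: (i) $\calS^2 = I$, (ii) $U_j$ commutes with $\calS$ for every $j$, and (iii) $\calS$ sends the $\ket{\perp}^{\otimes n}$ input state to the uniform superposition $\frac{1}{d^{n/2}} \sum_{x_1,\ldots,x_n} \ket{x_1,\ldots,x_n}$.

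First I would verify these three facts. For (i), I would observe that $\calS_1$ is defined on the vectors $\{\ket{\perp}\} \cup \{\frac{1}{\sqrt{d}}\sum_y \omega_{d}^{p\,\nu(y)}\ket{y} : p \in \{0,\ldots,d-1\}\}$; these form an orthonormal basis of a $(d+1)$-dimensional space (the $d$ Fourier vectors are orthonormal via standard character sums, and $\ket{\perp}$ is orthogonal to the $\domain$ register by construction). Since $\calS_1$ swaps $\ket{\perp}$ with the $p=0$ Fourier vector and fixes each remaining Fourier vector, squaring returns every basis vector to itself, so $\calS_1^2 = I$ and hence $\calS^2 = I$. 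For (ii), since each $U_j$ has been extended to act as identity on the input registers while $\calS$ acts only on those registers, the two operators commute by construction. For (iii), the first case of the definition gives $\calS_1\ket{\perp} = \frac{1}{\sqrt{d}}\sum_y \ket{y}$, so applying $\calS = I_{i,p,w} \otimes \calS_1^{\otimes n}$ distributes over the tensor product.

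For the induction, the base case $t=0$ follows from (ii) and (iii): $\calS\ket{\phi_0} = \calS U_0(\ket{0}_{i,p,w}\otimes \ket{\perp}^{\otimes n}) = U_0 \calS(\ket{0}_{i,p,w}\otimes\ket{\perp}^{\otimes n}) = \ket{\psi_0}$. For the inductive step, assume $\ket{\psi_{t-1}} = \calS\ket{\phi_{t-1}}$; then unfolding $\calR$ and using the three facts yields
\begin{align*}
    \calS\ket{\phi_t} &= \calS U_t \calR \ket{\phi_{t-1}} = \calS U_t (\calS \calO \calS) \ket{\phi_{t-1}} \\
    &= U_t\, \calS^2\, \calO\, \calS \ket{\phi_{t-1}} = U_t \calO \calS \ket{\phi_{t-1}} = U_t \calO \ket{\psi_{t-1}} = \ket{\psi_t},
\end{align*}
which closes the induction.

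I expect the only genuinely subtle step to be the verification that $\calS_1^2 = I$: one has to check that the three defining cases cover a complete orthonormal basis of the $(d+1)$-dimensional input-register space, and that the two non-identity cases are consistent inverses (the $p=0$ Fourier vector really being the image of $\ket{\perp}$ under the first case), so that $\calS_1$ is a well-defined unitary involution rather than a partially specified map. Once that is in hand, the rest is a mechanical combination of $\calR = \calS \calO \calS$, the commutation $[\calS, U_t] = 0$, and $\calS^2 = I$.
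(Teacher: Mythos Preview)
The paper does not give its own proof of this proposition; it is quoted as Theorem~3.3 of \cite{HM21} and used as a black box (the paper only remarks, just after \cref{def-recording-query-oracle}, that $\calS^2 = I$). Your argument is correct and is exactly the standard proof: the three facts $\calS^2=I$, $[\calS,U_j]=0$, and $\calS(\ket{0}\otimes\ket{\perp}^{\otimes n})=\ket{0}\otimes d^{-n/2}\sum_x\ket{x}$ immediately give the induction, and your treatment of the only nontrivial point---that $\calS_1$ is a well-defined unitary involution on the $(d{+}1)$-dimensional space---is accurate.
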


In other words, it is impossible to distinguish the final state $\ket{\psi_T}$ of a circuit with standard oracle $\calO$ from the output with recording oracle $\calR$ if we apply $\calS$ to the registers of $\calR$ after the final query.
Thus we can conclude that the success probability of a quantum circuit with $T$ queries producing a partial assignment of $k$ correct output values is given by $\norm{\Pi_{k} \ket{\psi_T}}^2 = \norm{\Pi_{k} \calS \ket{\phi_T}}^2$.
Note that while $\ket{\phi_T}$ may have inputs in the $\perp$ state, \cref{prop:recording-equiv} tells us that $\calS \ket{\phi_T}$ will never have an input in the $\perp$ state.
This means that when considering recording query oracles, it is safe to keep our current definitions of $\Pi_{k}$ and $\Pi_{q(w)}$ which will always project out any basis state where an input is assigned to $\perp$.
We will leverage the following property of $\ket{\phi_T}$ to bound the success probability of quantum circuits with at most $T$ queries.

\begin{definition}
Let $\Gamma_t$ be the set of all elements
$(\domain\cup \{\bot\})^n$ with at most $t$ non-$\bot$ 
elements.   This is the set of indices for all
recording query basis states associated with
quantum algorithms that make at most $t$ queries.
\end{definition}

\begin{proposition}[Fact 3.2 in \cite{HM21}]\label{prop:max-queried}
The state $\ket{\phi_t}$ from \cref{prop:recording-equiv} is a linear combination of basis states $\ket{i, p, w, x_1, \ldots, x_n}$ where $(x_1, \ldots, x_n) \in \Gamma_t$.  
\end{proposition}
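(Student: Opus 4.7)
The plan is to proceed by induction on $t$. For the base case $t=0$, the state is $\ket{0}_{i,p,w} \otimes \ket{\perp}_{x_1,\ldots,x_n}$, in which zero of the input registers differ from $\perp$. For the inductive step, I would argue that the action of $U_{t-1}$ is irrelevant since by definition it acts as identity on the input registers, so if $\ket{\phi_{t-1}}$ is a combination of basis states with at most $t-1$ non-$\perp$ registers, so is $U_{t-1}\ket{\phi_{t-1}}$. Hence it suffices to show that a single application of $\calR$ can increase the number of non-$\perp$ input registers by at most one (per basis state).

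The key structural claim I would establish is that $\calR$ decomposes as
$$\calR \;=\; \sum_{i} \ket{i}\bra{i} \otimes \calR_i,$$
where $\calR_i$ acts nontrivially only on the phase register and on the single input register $x_i$, and as identity on all other input registers $x_j$ with $j\neq i$. To see this, write $\calO = \sum_i \ket{i}\bra{i}\otimes \calO_i$, where $\calO_i$ applies the phase $\omega_d^{p\,\nu(x_i)}$ and touches only the $i$-th input register, and note that $\calS = I_{i,p,w}\otimes \bigotimes_{j=1}^n \calS_1^{(j)}$. Since $\calO_i$ commutes with $\calS_1^{(j)}$ whenever $j\neq i$, and $\calS_1^2 = I$, conjugating $\calO_i$ by $\calS$ cancels out the $\calS_1^{(j)}$ factors for $j\neq i$ and leaves $\calR_i = \calS_1^{(i)} \calO_i \calS_1^{(i)}$.

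Given this decomposition, consider any basis state $\ket{i_0,p,w,x_1,\ldots,x_n}$ appearing in $\ket{\phi_{t-1}}$ with at most $t-1$ of the $x_j$'s different from $\perp$. Applying $\calR$ leaves all registers $x_j$ for $j\neq i_0$ untouched (so their $\perp$/non-$\perp$ status is preserved in the computational basis $\{\ket{\perp}\}\cup\{\ket{y}: y\in\domain\}$), and replaces $\ket{x_{i_0}}$ with some superposition over $\{\ket{\perp}\}\cup\{\ket{y}\}$. In the worst case the one register $x_{i_0}$ switches from $\perp$ to a value in $\domain$, increasing the non-$\perp$ count by exactly one; otherwise the count stays the same or decreases. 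By linearity, $\calR\ket{\phi_{t-1}}$ is a linear combination of basis states with at most $t$ non-$\perp$ registers, completing the induction.

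The only subtlety worth flagging is that we must work in the computational basis $\{\ket{\perp}\}\cup\{\ket{y}\}_{y\in\domain}$ for the input registers—because the notion of "different from $\perp$" is defined per basis state—while $\calR$ itself is more naturally described in the Fourier-like basis used to define $\calS_1$. The bookkeeping above avoids this tension by factoring $\calR$ along the index register first, so that only a single input register can rotate between the $\perp$ subspace and its orthogonal complement per query.
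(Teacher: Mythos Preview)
Your argument is correct. The paper itself does not prove this proposition---it is cited without proof as Fact~3.2 from \cite{HM21}---but your induction together with the factorization $\calR=\sum_i\ket{i}\bra{i}\otimes\calR_i$ (with $\calR_i=\calS_1^{(i)}\calO_i\calS_1^{(i)}$ acting only on the $i$-th input register) is exactly the standard reasoning, and is consistent with the explicit single-register description of $\calR$ the paper records in \cref{prop:query-approx}. One cosmetic slip: in the inductive step the relevant unitary is $U_t$, not $U_{t-1}$, since $\ket{\phi_t}=U_t\,\calR\,\ket{\phi_{t-1}}$; this does not affect the argument, as $U_t$ is also identity on the input registers.
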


For the bounds in \cite{HM21} it is essential to bound how the state of $\ket{\phi}_\calO$ can change after each query. For our use of the recording query technique, this detailed analysis is not necessary. 
Nevertheless, we state the following proposition here for completeness.

\begin{proposition}[Lemma 4.1 in \cite{HM21}]
\label{prop:query-approx}
Let $\domain$ be the input alphabet, $d= |\domain|$, and $\nu$ be our choice of canonical bijection between $\domain$ and $\{0,\ldots, d-1\}$.
If the recording query operator $\mathcal{R}$  is applied to a
basis state $\ket{i,p,w,x_1,\ldots,x_n}$  where $p \ne 0$ then the register $\ket{x_i}$ is mapped to
\begin{equation}
    \begin{cases}
        \sum_{y\in D}\frac{\omega_d^{p\, \nu(y)}}{\sqrt{d}}\ket{y}&\textrm{if }x_i=\bot\\
        (1-\frac{2}{d})\omega_d^{p\, \nu(x_i)}\ket{x_i} + \frac{1}{d} \ket{x_i}+ \frac{\omega_d^{p\, \nu(x_i)}}{\sqrt{d}} \ket{\bot}+ \sum_{y\in D\setminus \{x_i\}} \frac{1-\omega_d^{p\, \nu(y)}-\omega_d^{p\, \nu(x_i)}}{d}\ket{y}&\textrm{otherwise.}
    \end{cases}
\end{equation}
If $p = 0$ then the register remains unchanged.
\end{proposition}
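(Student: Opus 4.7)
The plan is to reduce the proposition to a single-register calculation and then carry out the Fourier analysis of $\mathcal{S}_1$ on the relevant basis states. Because $\mathcal{S}=(I)_{i,p,w}\otimes\mathcal{S}_1^{\otimes n}$ acts as the identity on the $(i,p,w)$ registers and as a tensor product on the input registers, and because $\mathcal{O}$ multiplies a basis state $\ket{i,p,w,x_1,\ldots,x_n}$ only by the single scalar $\omega_d^{p\,\nu(x_i)}$ (taking $\omega_d^{p\,\nu(\perp)}:=1$), only the $i$-th input register can change. So it suffices to compute $\mathcal{S}_1\,\mathcal{O}_{i,p}\,\mathcal{S}_1\ket{x_i}$, where $\mathcal{O}_{i,p}$ multiplies $\ket{y}$ by $\omega_d^{p\,\nu(y)}$ and fixes $\ket{\perp}$.

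The case $p=0$ is immediate: $\mathcal{O}_{i,0}=I$, so $\mathcal{R}=\mathcal{S}^2=I$ on the relevant register. For $p\neq 0$ and $x_i=\perp$, the first $\mathcal{S}_1$ sends $\ket{\perp}\mapsto \tfrac{1}{\sqrt d}\sum_y\ket{y}$; the oracle turns this into $\tfrac{1}{\sqrt d}\sum_y \omega_d^{p\,\nu(y)}\ket{y}$, which is exactly one of the eigenstates of $\mathcal{S}_1$ with eigenvalue $+1$ (by the third line of the definition of $\mathcal{S}_1$). Applying $\mathcal{S}_1$ again leaves it unchanged, giving the stated formula.

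The main work is the case $p\neq 0$ with $x_i\neq\perp$. Here I would introduce the Fourier-like states $\ket{\tilde q}:=\tfrac{1}{\sqrt d}\sum_y\omega_d^{q\,\nu(y)}\ket{y}$ for $q=0,\ldots,d-1$, on which $\mathcal{S}_1$ acts diagonally (swapping $\ket{\tilde 0}$ with $\ket{\perp}$ and fixing $\ket{\tilde q}$ for $q\neq 0$). Inverse-Fourier expanding $\ket{x_i}=\tfrac{1}{\sqrt d}\sum_{q=0}^{d-1}\omega_d^{-q\,\nu(x_i)}\ket{\tilde q}$ and applying $\mathcal{S}_1$ yields
\[
\mathcal{S}_1\ket{x_i}=\tfrac{1}{\sqrt d}\ket{\perp}+\tfrac{d-1}{d}\ket{x_i}-\tfrac{1}{d}\sum_{y\neq x_i}\ket{y},
\]
after using $\sum_{q=0}^{d-1}\omega_d^{q\,k}=d\,\mathbbm{1}[k\equiv 0]$. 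Then $\mathcal{O}_{i,p}$ multiplies each $\ket{y}$ by $\omega_d^{p\,\nu(y)}$ and fixes $\ket{\perp}$. To apply the second $\mathcal{S}_1$ cleanly, I would rewrite the resulting state as
\[
\mathcal{O}_{i,p}\mathcal{S}_1\ket{x_i}=\tfrac{1}{\sqrt d}\ket{\perp}+\omega_d^{p\,\nu(x_i)}\ket{x_i}-\tfrac{1}{\sqrt d}\ket{\tilde p},
\]
since the phased sum $\tfrac{1}{d}\sum_{y\neq x_i}\omega_d^{p\,\nu(y)}\ket{y}$ combines with the $\ket{x_i}$ term by adding and subtracting $\tfrac{1}{d}\omega_d^{p\,\nu(x_i)}\ket{x_i}$. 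Applying $\mathcal{S}_1$ to this three-term expression is now easy: the first term becomes $\tfrac{1}{d}\sum_y\ket{y}$, the last term is fixed (as $p\neq 0$), and $\mathcal{S}_1\ket{x_i}$ is the expression above. Collecting coefficients of $\ket{\perp}$, $\ket{x_i}$, and $\ket{y}$ for $y\neq x_i$ separately gives exactly the formula in the proposition.

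The only real obstacle is bookkeeping in the last case: one must correctly handle the three different types of terms ($\ket{\perp}$, $\ket{x_i}$, and $\ket{y}$ for $y\neq x_i$) that appear at each of the three stages, together with the asymmetry between $\ket{x_i}$ and the other standard-basis states introduced by the first $\mathcal{S}_1$. Using the repackaging into $\ket{\tilde p}$ at the intermediate step keeps the computation short and avoids the kind of error that the "fixed" in the proposition title hints was present in the original version.
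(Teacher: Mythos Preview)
The paper does not give its own proof of this proposition; it is quoted from Hamoudi--Magniez and stated only for completeness. Your derivation is correct and is essentially the standard computation: reduce to the single queried register via the tensor structure of $\mathcal{S}$ and the fact that $\mathcal{O}$ is a phase depending only on $x_i$, work in the Fourier basis $\ket{\tilde q}$ on which $\mathcal{S}_1$ acts by swapping $\ket{\perp}\leftrightarrow\ket{\tilde 0}$ and fixing $\ket{\tilde q}$ for $q\ne 0$, and use the repackaging $\mathcal{O}_{i,p}\mathcal{S}_1\ket{x_i}=\tfrac{1}{\sqrt d}\ket{\perp}+\omega_d^{p\nu(x_i)}\ket{x_i}-\tfrac{1}{\sqrt d}\ket{\tilde p}$ so that the final application of $\mathcal{S}_1$ is a clean three-term calculation whose coefficients match the stated formula.
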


\section{Quantum matrix vector products}\label{sec:mat-vec}
In this section, we consider the task of --- for a fixed matrix $A \in \F^{m \times n}$ --- computing the function $f_A(x) = Ax$ for inputs $x \in \domain^{m}$ (where $\domain$ is a fixed subset of $\F$) using a quantum circuit.
We note that this is a fundamentally harder task than is considered in many quantum machine learning papers (for example \cite{HHL09}) as we require the circuit to output a classical vector $y \in \F^{n}$ rather than either a quantum state encoding the entries of $y$ in the amplitudes or an estimate of $y^\dag M y$.
Also unlike many prior quantum time-space tradeoffs, including sorting \cite{KSdW07, HM21, BK23} and boolean matrix multiplication \cite{KSdW07} (and our \cref{thm:quantum-booolean-matrix}), our matrix vector product and matrix multiplication lower bounds apply to circuits that can adaptively decide when to produce each output based on the observed inputs.
Time-space lower bounds against such quantum circuits were first described in \cite{HM21} for the multiple disjoint collisions problem, although they were not able to show such a result for sorting.
Similar to \cite{HM21} we are able to lower bound these circuits by identifying a single hard distribution over the inputs that applies to any set of outputs.

\begin{theorem}
\label{thm:time-space-matrix-vector}
    Let $m\le n^r$ for some constant $r$ and $2\le d\le n^n$.
    There is a constant $C>0$ such that the following holds:
    Let $A$ be an $m\times n$ matrix over a field $\F$ that is $(g,h,c)$-rigid.  Then any quantum circuit using time $T$ and space $S< \frac{c}{6(r+6)} g \log_2 d$ that computes the function $f_A: \domain^n \to \F^m$ for $\domain\subseteq \F$ with $d = |\domain|$ given by $f_A(x) = Ax$ with success probability larger than $2^{-S}$ requires that $T\ge C m h \log d \ / S$.
\end{theorem}

When the fixed matrix $A$ is sufficiently rigid, for example when both $g$ and $h$ are linear in $n$ as is the case with the DFT matrix per \cref{prop:DFT-rigid} or a random matrix with high probability per \cref{prop:rigid-matrices}, this lower bound becomes $\Omega(mn \log d)$ provided that $S$ is at most some
constant times
$n\log d$ which is essentially a trivial
constraint for the problem.
This bound is tightly matched by a classical query algorithm in \cref{prop:mat-vec-upper-bound-alg}.

This theorem follows from the following key lemma, proven in \cref{subsec:quant-mat-vec-lemma}, which lets us bound the number of correct output values produced by a shallow quantum circuit.

\begin{lemma} \label{lem:matvec}
    Let $A$ be any $(k,h,c)$-rigid $m \times n$ matrix over a finite field $\F$ and let $f_A:\domain^n \to \F^m$ for $\domain \subseteq \F$ be defined by $f_A(x) = Ax$. 
    Then for $\alpha > 0$ and for input $x$ sampled
    uniformly from $\domain^n$ and any quantum circuit $\calC$ with at most $\alpha h$ queries to $x$, the probability that $\calC$ produces  $k$ correct output values of $f_A(x)$ is 
    at most $\lceil h/(ck)\rceil^2\,  (4^{H_2(\alpha)}/|\domain|^{1-\alpha})^{ck}$.
\end{lemma}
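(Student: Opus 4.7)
The approach combines the recording-query reduction with a two-level bucketing argument driven by the rigidity of $A$. \cref{prop:recording-equiv} reduces the success probability to $\norm{\Pi_k\,\calS\,\ket{\phi_T}}^2$, and \cref{prop:max-queried} bounds $\ket{\phi_T}$'s support to basis states whose input register has at most $T=\alpha h$ positions different from $\perp$. To handle the dynamic-output model, for each $w$ with $|q(w)|\geq k$ I fix a canonical $k$-subset $J^\ast(w)\subseteq J(w)$ and let $\tau^\ast(w)=q(w)|_{J^\ast(w)}$; since $\Pi_{q(w)}\leq \Pi_{J^\ast(w),\tau^\ast(w)}$, where $\Pi_{J^\ast,\tau^\ast}$ projects the input register onto $\{x\in\domain^n:(Ax)_{J^\ast}=\tau^\ast\}$, the problem reduces to showing, uniformly in $(J^\ast,\tau^\ast)$ with $|J^\ast|=k$, that
\begin{equation*}
\norm{\Pi_{J^\ast,\tau^\ast}\,\calS\,\ket{v}}^2 \;\leq\; \lceil h/(ck)\rceil\cdot\bigl(2^{H_2(\alpha)}/d^{\,1-\alpha}\bigr)^{ck}\cdot\norm{\ket{v}}^2
\end{equation*}
for every input-register state $\ket{v}$ supported on basis states with at most $T$ non-$\perp$ entries; summing this over the orthogonal $(i,p,w)$-sectors of $\ket{\phi_T}$ and using $\norm{\ket{\phi_T}}^2=1$ then yields the lemma.

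\textbf{Bucketing.} Fix $(J^\ast,\tau^\ast)$. Using $(k,h,c)$-rigidity on the rows indexed by $J^\ast$, I greedily build $r=\lceil h/(ck)\rceil$ pairwise-disjoint column sets $C_1,\ldots,C_r\subseteq[n]$ of size $ck$ with $\mathrm{rank}(A_{J^\ast}[C_i])=ck$: before selecting $C_i$ at most $(r{-}1)ck\leq h$ columns have been removed, so rigidity guarantees rank $\geq ck$ on the remainder and supplies $ck$ independent columns for $C_i$. For any basis state with non-$\perp$ positions $K$ ($|K|\leq \alpha h$), averaging $|K\cap C_i|$ over the $r$ disjoint $C_i$ gives some $i$ with $|K\cap C_i|\leq \alpha ck$. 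Decompose $\ket{v}=\sum_{i=1}^r\ket{v_i}$ by assigning each basis state to the smallest such $i$, then further split $\ket{v_i}=\sum_L\ket{v_i^L}$ over subsets $L\subseteq C_i$ with $|L|\leq\alpha ck$, where $\ket{v_i^L}$ collects basis states with $K\cap C_i=L$; by \cref{prop:shannon} the number of such $L$ is at most $2^{H_2(\alpha)ck}$. Two applications of triangle inequality plus Cauchy-Schwarz (the outer one over $i$ contributing the factor $r$, the inner over $L$ contributing $2^{H_2(\alpha)ck}$), combined with the orthogonality of these decompositions, reduce everything to the per-piece inequality $\norm{\Pi_{J^\ast,\tau^\ast}\calS\ket{v_i^L}}^2 \leq d^{-(ck-|L|)}\norm{\ket{v_i^L}}^2$, which is at most $d^{-(1-\alpha)ck}\norm{\ket{v_i^L}}^2$ since $|L|\leq\alpha ck$.

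\textbf{Core amplitude bound and main obstacle.} For the per-piece bound, write $\ket{v_i^L}=\sum_{\sigma\in\domain^L}\ket{\sigma}_L\otimes\ket{\perp\cdots\perp}_{C_i\setminus L}\otimes\ket{u^\sigma}_{B'}$ with $B'=[n]\setminus C_i$. Because $A_{J^\ast}[C_i]$ is an invertible $ck\times ck$ matrix, $\Pi_{J^\ast,\tau^\ast}$ forces $x_{C_i}$ to equal the unique value $y(x_{B'}):=A_{J^\ast}[C_i]^{-1}(\tau^\ast-A_{J^\ast}[B']x_{B'})$ for each $x_{B'}\in\domain^{B'}$. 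Using the explicit formulas $\calS_1\ket{\sigma_j}=d^{-1/2}\ket{\perp}+\ket{\sigma_j}-d^{-1}\sum_y\ket{y}$ and $\calS_1\ket{\perp}=d^{-1/2}\sum_y\ket{y}$, the inner product $\bra{y(x_{B'}),x_{B'}}\calS\ket{v_i^L}$ expands as $\sum_\sigma a_\sigma(x_{B'})\cdot\langle x_{B'}|\calS_{B'}|u^\sigma\rangle$ where $a_\sigma$ factorizes as $d^{-(ck-|L|)/2}\prod_{j\in L}(\delta_{y(x_{B'})_j,\sigma_j}-1/d)$; an elementary calculation gives $\sum_\sigma|a_\sigma(x_{B'})|^2\leq d^{-(ck-|L|)}$, so Cauchy-Schwarz across $\sigma\in\domain^L$ followed by summing over $x_{B'}\in\domain^{B'}$ (using unitarity of $\calS_{B'}$ to absorb $\sum_\sigma\norm{\ket{u^\sigma}}^2=\norm{\ket{v_i^L}}^2$) yields the claim. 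The main obstacle I anticipate is this amplitude computation: the mixed action of $\calS_1$ (simultaneously emitting $\ket{\perp}$ and a $\domain$-residual) must combine cleanly across the $ck$ registers of $C_i$ so that the rigidity-forced single value $y(x_{B'})$ pairs correctly with $\sigma$ on the known positions $L$, while the $B'$-unitarity of $\calS$ absorbs the remaining norms exactly. A minor technicality is that $y(x_{B'})$ may fail to lie in $\domain^{C_i}$, but dropping the corresponding indicator only weakens the resulting bound.
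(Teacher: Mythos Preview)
Your proposal is correct and follows essentially the same route as the paper: recording-query reduction, a greedy rigidity-based construction of $\lceil h/(ck)\rceil$ disjoint full-rank column blocks, a two-level bucketing by (block, intersection pattern) yielding $\lceil h/(ck)\rceil\cdot 2^{H_2(\alpha)ck}$ buckets, and a per-bucket amplitude bound of $d^{-(1-\alpha)ck}$ combined via triangle inequality and Cauchy--Schwarz. The one notable difference is in the per-bucket step: you explicitly expand the $\calS_1$-action on the non-$\perp$ positions $L\subseteq C_i$ and bound $\sum_\sigma |a_\sigma|^2$, whereas the paper's \cref{lem:group} sidesteps this entirely by working only with the all-$\perp$ set $V=C_i\setminus L$ (which still has full column rank $ck-|L|$ since $A_{J^\ast}[C_i]$ does), obtaining the same $d^{-(ck-|L|)}$ bound with a cleaner one-line argument.
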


\begin{sloppypar}
Note: For $\alpha\le 0.0737$ we have $1-\alpha-2H_2(\alpha)>1/6$ and 
hence the bound is at most $\lceil h/(ck)\rceil^2 |\domain|^{-ck/6}$ for $d\ge 2$.
\end{sloppypar}

\begin{proof}[Proof of \cref{thm:time-space-matrix-vector} from \cref{lem:matvec}]
    Let $\calC$ be a quantum circuit with $T$ queries and space $S$ that computes $f_A(x)$ with success probability larger than $2^{-S}$.
    Since $h\le n$, $m\le n^r$ and $S\ge \log_2 n$ we
    only need to consider the case that $T\le n^{r+1}\log_n d\le n^{r+2}$.
    
    
    Let $\alpha=0.0737$. We partition $\calC$ into $\ceil{T/(\alpha h)}$ sub-circuits that each have at most $\alpha h$ queries.
    By combining \cref{prop:quant-union} and \cref{lem:matvec}, we know that each sub-circuit can produce $k\le g$ correct output values with probability at most $2^{S} \ceil{h/(ck)}^2 d^{-ck/6}\le h^2\, 2^{S} d^{-ck/6}$.

    
   \begin{sloppypar}
By assumption, we have $d^{-cg/6}\le 2^{-(r+6)S}\le n^{-(r+4)} 2^{-2S}\le h^{-2} 2^{-2S}/T$ since $S\ge \log_2 n$, $T\le n^{r+2}$, and $h\le n$.
In particular, this implies that
$h^2 d^{-cg/6}< 2^{-S}$ so we must have $T>\alpha h$ by \cref{lem:matvec}.
Set $k\le g$ to be the smallest integer such that 
    $h^2\, 2^{S} d^{-ck/6}\le 2^{-S}/T$. 
    Then
  the probability that a sub-circuit produces $k$ correct output values is at most $2^{-S}/T$. 
This gives $k=\ceil{[6\log_{2}(h T) + 12S]/(c\log_2 d)}$.
We note that $k$ is at most $c^* S/ \log_2 d$ for some constant $c^*>0$
since $\log_2(hT)\le (r+3)\log_2 n\le (r+3)S$.  
   \end{sloppypar}
  
    Taking a union bound over the sub-circuits, the probability that any of them produces $k$ correct output values is at most $2^{-S}$. Since $f_A$ has $m$ outputs, this means that
    \begin{displaymath}
    \ceil{T / (\alpha h)} (k-1) \geq m
    \end{displaymath}
    Since $T\ge \alpha h$, we have
    \begin{displaymath}
    2Tk\ge \alpha mh.
    \end{displaymath}
    Plugging in our upper bound on $k$ we have that
    \begin{displaymath}
        2c^* T S/\log_2 d \geq \alpha mh
    \end{displaymath}
    and hence $T\cdot S$ is at least
    $\frac{\alpha}{2c^*}m h \log d$ as claimed.
\end{proof}

\subsection{Success probability of small depth quantum circuits}\label{subsec:quant-mat-vec-lemma}

We first give an overview of the argument, which involves an initial
uniform distribution over the inputs
$x\in D^n$.
This begins by decomposing the state after $t\le \alpha h$ queries into orthogonal components based on the values of working qubits $\ket{i,p,w}$, which also determine the set of $k$ output values produced.   It then
suffices to prove that for each fixed $\ket{i,p,w}$ the total fraction of the squared amplitude for any state that is spanned by recording query basis states with at most $t$ non-$\bot$ items can be on inputs for which the fixed output values are correct is 
exponentially small in $k$.  
  
If we knew which $t\le \alpha h$ input indices were queried, as we would with classical algorithms in the analysis of~\cite{Abr91}, then things would be easy:
Since the fixed matrix $A$ is $(k,h,c)$ rigid, the sub-matrix of $A$ with rows corresponding to these $k$ outputs, and with the $\ge n-\alpha h$ ``unqueried'' columns has rank at least $ck$, so any fixed output can be correct with probability at most $d^{-ck}$ over the choice of inputs.
However, the quantum state after $t$ queries is a superposition of
recording query basis states that could involve all possible subsets of $\le t$ non-$\bot$ indices which
is at least $\binom{n}{t}$ possibilities.   
We now give our proof in detail.

\begin{proof}[Proof of \cref{lem:matvec}]
Let $d = |\domain|$. For simplicity we will assume that $q(w)$---the output as a function of the measured value of the work register---always produces $k$ outputs.\footnote{If in general $q(w)$ produces more than $k$ outputs, we only consider its first $k$ outputs.}
Let $A$ be a $(k, h,c)$-rigid matrix.
By \cref{prop:max-queried}
 after $t\le \alpha h$ queries in the recording query oracle model,
 the state $\ket{\phi_t}$
 is a linear combination of basis states $\ket{i, p, w, x_1, \ldots, x_n}$ where $(x_1, \ldots, x_n) \in \Gamma_t$.
 It will be useful to be more 
 explicit in our discussion of $\Gamma_t$. 
 Each element of $\Gamma_t$ consists
 of an assignment $y\in \domain^I$
 for some subset $I\subseteq [n]$ with
 $|I|\le t$ and value $\perp$ on all
 coordinates in $[n]\setminus I$.
 Therefore,
 we can write the state as:
\begin{equation}
\ket{\phi_t} = \sum_{\substack{i,p,w \\ I \subseteq [n],\  |I| \leq t \\ y \in \domain^I}} \alpha_{i,p,w,I, y} \ket{i,p,w}\ket{y}_I \ket{\perp}_{[n] \setminus I}\label{eq:matrix-vector-1}
\end{equation}
for some $\alpha_{i,p,w,I, y}$ with $\sum_{i,p,w,I, y} |\alpha_{i,p,w,I, y}|^2=1$. Thus by \cref{prop:recording-equiv}, the final state of the algorithm (after $t \leq \alpha h$ queries) in the non-recording query oracle setting is given by:

\begin{displaymath}
\ket{\psi_t} = \calS \ket{\phi_t} = \calS \sum_{\substack{i,p,w \\ I \subseteq [n],\ |I| \leq t \\ y \in \domain^I}} \alpha_{i,p,w,I, y} \ket{i,p,w}\ket{y}_I \ket{\perp}_{[n] \setminus I}
\end{displaymath}
Since $\calS$ behaves as the identity on $\ket{\phi_t}_\calC$ and the $\ket{i,p,w}$ are orthogonal basis states, we can rewrite this as:

\begin{displaymath}\sum_{i,p,w} \beta_{i,p,w} \ket{i,p,w}  \otimes \bigg[\calS_1^{\otimes n} \sum_{\substack{I \subseteq [n],\ |I| \leq t \\ y \in \domain^I}} \beta^{i,p,w}_{I, y}\ket{y}_I \ket{\perp}_{[n] \setminus I}\bigg]\end{displaymath}
for some $\beta_{i,p,w}$ and $\beta^{i,p,w}_{I,y}$ such that
$\alpha_{i,p,w,I, y}=\beta_{i,p,w}\,\beta^{i,p,w}_{I,y}$, 
$\sum_{i,p,w} |\beta_{i,p,w}|^2=1$ and for each choice of $i,p,w$, we have that
$\sum_{I,y} |\beta^{i,p,w}_{I,y}|^2=1$.
With this decomposition, using the definition in~\cref{eq:output-k}, the success probability of producing $k$ correct output values is given by:
\begin{align*}
\norm[\big]{ \Pi_{k} \calS \ket{\phi_t}}^2 &= \norm[\bigg]{\Pi_{k} \sum_{i,p,w} \beta_{i,p,w} \ket{i,p,w}  \otimes \bigg[\calS_1^{\otimes n}\sum_{\substack{I \subseteq [n],\ |I| \leq t \\ y \in \domain^I}} \beta^{i,p,w}_{I, y}\ket{y}_I \ket{\perp}_{[n] \setminus I}\bigg]}^2\\
&=\norm[\bigg]{\sum_{\substack{i,p,w}} \beta_{i,p,w} \ket{i,p,w}  \otimes \bigg[\Pi_{q(w)} \calS_1^{\otimes n} \sum_{\substack{I \subseteq [n],\ |I| \leq t \\ y \in \domain^I}} \beta^{i,p,w}_{I, y}\ket{y}_I \ket{\perp}_{[n] \setminus I}\bigg]}^2
\end{align*}
where $\Pi_{q(w)}$ is defined as in \cref{eq:output-project} and 
is the projection of $\Pi_{k}$ onto fixed values of $q(w)$.
Since the basis states $\ket{i,p,w}$ are orthogonal and $\sum_{i,p,w} |\beta_{i,p,w}|^2 =1$, we have

\begin{equation}\norm[\big]{ \Pi_{k} \calS \ket{\phi_t}}^2\le \max_{i,p,w}  \norm[\bigg]{ \Pi_{q(w)} \calS_1^{\otimes n} \sum_{\substack{I \subseteq [n],\ |I| \leq t \\ y \in \domain^I}} \beta^{i,p,w}_{I, y} \ket{y}_I \ket{\perp}_{[n] \setminus I}}^2\label{eq:target}\end{equation}
We now fix $i,p,w$ and let $A_{q(w)}$ be the submatrix of $A$ restricted to the rows defined by the set of the $k$ output values $U$ associated with $q(w)$.
We can describe $\Pi_{q(w)}$
as a projection onto basis states $\ket{x_1 ,\ldots, x_n}$ such that:
\begin{displaymath}
    A_{q(w)} \begin{bmatrix} x_1 \\ \vdots \\ x_n \end{bmatrix} =  q(w).
\end{displaymath}

Since the basis states $\ket{y}_I\ket{\bot}_{[n]\setminus I}$ for distinct $I$ are
orthogonal in the recording query basis, they remain orthogonal in the standard basis after the $\calS$ operator is applied. However, the subsequent application of the $\Pi_{q(w)}$ projector makes these vectors no longer orthogonal.

To handle this, we bucket the sets $I \subseteq [n]$ with $|I| \leq t$ 
into a small number of buckets, $\calB_1,\ldots$, so that for each bucket $\calB_\ell$ we can bound:
\begin{equation*}
    \mu_\ell = \norm[\bigg]{\Pi_{q(w)}\calS_1^{\otimes n}\sum_{I\in \calB_\ell, y \in \domain^{I}}\beta^{i,p,w}_{I, y} \ket{y}_I \ket{\perp}_{[n] \setminus I}}^2
\end{equation*}
and then we can use the triangle inequality to bound the success probability as a sum of the $\mu_\ell$.

In particular, our key observation is that if a bucket of recording query basis states completely misses querying a fixed set of input variables that could completely scramble the value of a set of $r$ output values, then one cannot do better than randomly guess those output values.
More precisely, we show that the contribution to success from that bucket of basis states has amplitude
at most $\frac{1}{\sqrt{d^r}}$.

\begin{lemma}
\label{lem:group}
 Let $U\subseteq [m]$ be a set of output indices and $V\subseteq [n]$
 be a set of input indices with $|V|=|U|=r$ such that the submatrix $A_{U,V}$ is full rank.  
 Fix $q\in \F^U$ and define $\Pi_q$ to be the projection
 map onto the span of the set of basis states $\ket{x_1,\ldots,x_n}$ with $x_1\ldots x_n\in \domain$ such that 
 $A_U x=q$.
 Then for any collection $\calB$ of sets $I\subseteq [n]\setminus V$ and
 any quantum state 
 $\sum_{I\in \calB,\ y\in \domain^I} \eta_{I,y} \ket{y}_I \ket{\perp}_{[n] \setminus I}$ we have
 \begin{displaymath}
 \norm[\bigg]{\Pi_q\calS_1^{\otimes n}\sum_{I\in \calB,\ y\in \domain^I} \eta_{I,y} \ket{y}_I \ket{\perp}_{[n] \setminus I}}^2 \le \frac{1}{d^r}.
 \end{displaymath}
\end{lemma}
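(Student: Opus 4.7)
The plan is to exploit the fact that every set $I \in \calB$ avoids $V$, so in the given state the $V$-registers are uniformly held in the $\ket{\perp}$ configuration. Since $\calS_1\ket{\perp} = \tfrac{1}{\sqrt{d}}\sum_{y\in\domain}\ket{y}$ by definition, applying $\calS_1^{\otimes n}$ turns the $V$-registers into the uniform superposition over $\domain^V$, and the resulting $\tfrac{1}{\sqrt{d^r}}$ prefactor is what drives the claimed bound.

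Concretely, I would first factor the given state as $\ket{\zeta} = \ket{\perp}_V \otimes \ket{\chi}_{[n]\setminus V}$, where $\ket{\chi} := \sum_{I,y}\gamma_{I,y}\ket{y}_I\ket{\perp}_{([n]\setminus V)\setminus I}$; this uses only that $I\cap V = \emptyset$ for every $I\in\calB$. Splitting $\calS_1^{\otimes n} = \calS_1^{\otimes V}\otimes \calS_1^{\otimes[n]\setminus V}$ and evaluating the first factor on $\ket{\perp}_V$ gives
$$\calS_1^{\otimes n}\ket{\zeta} = \frac{1}{\sqrt{d^r}}\sum_{x_V\in\domain^V}\ket{x_V}_V \otimes \ket{\xi}_{[n]\setminus V},$$
where $\ket{\xi} := \calS_1^{\otimes[n]\setminus V}\ket{\chi}$ satisfies $\norm{\ket{\xi}} = \norm{\ket{\chi}} \le 1$ by unitarity of $\calS_1$. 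I would then expand $\ket{\xi}$ in the standard basis indexed by $(\domain\cup\{\perp\})^{[n]\setminus V}$ and apply $\Pi_q$. Since $\Pi_q$ annihilates every basis state containing a $\perp$, only terms with $x_{[n]\setminus V}\in\domain^{[n]\setminus V}$ survive. For any such $x_{[n]\setminus V}$, the constraint $A_U x = q$ becomes $A_{U,V}\,x_V = q - A_{U,[n]\setminus V}\,x_{[n]\setminus V}$, and invertibility of the $r\times r$ matrix $A_{U,V}$ over $\F$ pins down a unique $x_V^\star\in\F^V$, of which at most one can lie in $\domain^V$. Combined with the orthogonality of the basis states $\ket{x_V}\ket{x_{[n]\setminus V}}$, this yields
$$\norm{\Pi_q\calS_1^{\otimes n}\ket{\zeta}}^2 \le \frac{1}{d^r}\sum_{x_{[n]\setminus V}\in\domain^{[n]\setminus V}}\left|\xi_{x_{[n]\setminus V}}\right|^2 \le \frac{\norm{\ket{\xi}}^2}{d^r} \le \frac{1}{d^r}.$$

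The only mild obstacle is bookkeeping: the vector $\ket{\xi}$ is not literally supported on $\domain^{[n]\setminus V}$, since $\calS_1$ mixes $\ket{\perp}$ with computational basis states. But because $\Pi_q$ kills any $\perp$-component, we never need to compute $\calS_1$ explicitly on the non-$\perp$ registers; all we use is that $\calS_1$ is unitary (giving $\norm{\ket{\xi}} = \norm{\ket{\chi}}$) and that $\calS_1\ket{\perp}$ is the uniform superposition over $\domain$ (producing the $1/\sqrt{d^r}$ prefactor). Once these two facts are in hand, the invertibility of $A_{U,V}$ does all the remaining work.
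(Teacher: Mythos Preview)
Your proposal is correct and follows essentially the same route as the paper: factor out $\ket{\perp}_V$, apply $\calS_1^{\otimes V}$ to produce the uniform superposition with prefactor $1/\sqrt{d^r}$, expand the remaining part in the standard basis, and use invertibility of $A_{U,V}$ to see that for each fixed assignment on $[n]\setminus V$ at most one $x_V$ survives the projection. Your remark that the unique $\F$-solution need only have \emph{at most} one representative in $\domain^V$ is slightly more careful than the paper's phrasing, but the argument is otherwise the same.
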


\begin{proof}
By definition each $I\in \calB$ satisfies $I\cap V=\varnothing$, so
\begin{align*}
& \Pi_q\calS_1^{\otimes n}\sum_{I\in \calB,\ y\in \domain^I} \eta_{I,y} \ket{y}_I \ket{\perp}_{[n] \setminus I} \\
 &= \Pi_q\calS_1^{\otimes n}(\ket{\bot}_V\otimes \sum_{I\in \calB,\ y\in \domain^I} \eta_{I,y} \ket{y}_I \ket{\perp}_{[n] \setminus (I\cup V)}\\
 &=\Pi_q(\calS_1^{\otimes r}\ket{\bot}_V\otimes \calS_1^{\otimes(n-r)}\sum_{I\in \calB,\ y\in \domain^I} \eta_{I,y} \ket{y}_I \ket{\perp}_{[n] \setminus (I\cup V)})\\
  &=\Pi_q(\sum_{y'\in \domain^V} \frac{1}{\sqrt{d^r}}\ket{y'}_V\otimes \calS_1^{\otimes (n-r)}\sum_{I\in \calB,\ y\in \domain^I} \eta_{I,y} \ket{y}_I \ket{\perp}_{[n] \setminus (I\cup V)})
 \end{align*}
 since $\calS_1(\ket{\bot})=\sum_{y'\in \domain}\frac{1}{\sqrt{d}} \ket{y'}$.
 Now
\begin{displaymath}
\calS_1^{\otimes (n-r)}\sum_{I\in \calB,\ y\in \domain^I} \eta_{I,y} \ket{y}_I \ket{\perp}_{[n] \setminus (I\cup V)}=
\sum_{z\in (\domain\cup\{\bot\})^{[n]\setminus V}} \delta_z \ket{z}_{n\setminus V}
 \end{displaymath}
 for some amplitudes $\delta_z$ satisfying $\sum_{z\in (\domain\cup \{\bot\})^{[n]\setminus V}}|\delta_z|^2=1$.
 
 For each value of $z\in \domain^{[n]\setminus V}$, since the
 sub-matrix $A_{U,V}$ is invertible, there is a unique value $y_z\in \domain^V$ such
 that $A_U (y_z \cup z)=q$ so
 we get that
\begin{align*}
&\norm[\bigg]{\Pi_q\calS_1^{\otimes n}\sum_{I\in \calB,\ y\in \domain^I} \eta_{I,y} \ket{y}_I \ket{\perp}_{[n] \setminus I}}^2 \\
    &=\norm[\bigg]{\Pi_q\bigg[\sum_{y'\in \domain^V}\frac{1}{\sqrt{d^r}}\ket{y'}_V\otimes\sum_{z\in (\domain\cup\{\bot\})^{n-r}} \delta_z \ket{z}_{[n]\setminus V}\bigg]}^2\\
     &=\norm[\bigg]{\frac{1}{\sqrt{d^r}}\cdot \Pi_q\bigg[\sum_{y'\in \domain^V}\ket{y'}_V\sum_{z\in \domain^{n-r}} \delta_z \ket{z}_{n\setminus V}\bigg]}^2\\
   & =\norm[\bigg]{\frac{1}{\sqrt{d^r}}\cdot \Pi_q\sum_{z\in \domain^{[n]\setminus V}}\delta_z\sum_{y'\in \domain^V} \ket{y'}_V \ket{z}_{n\setminus V})}^2
   \\
   & =\norm[\bigg]{\frac{1}{\sqrt{d^r}}\sum_{z\in \domain^{[n]\setminus V}}\delta_z\ket{y_z}_V \ket{z}_{n\setminus V})}^2\\
   &\le \frac{1}{d^r}
\end{align*}
since $\sum_{z\in \domain^{[n]\setminus V}}|\delta_z|^2\le 1$.
\end{proof}

Next we decompose the set of all $I$ with
$|I|\le t$ into buckets where we can apply the
above with $r$ equal to a constant fraction
of $k$.

\begin{lemma}
\label{lem:partition}
Let $A$ be a $(k,h,c)$-rigid matrix and let $k'=\lceil ck\rceil$.  Then for every subset $U$ of $k$
rows of $A$, there is a collection of disjoint $k'$-subsets of columns from $[n]$, $V_1,\ldots, V_\ell$ for $\ell=\lceil h/k'\rceil\le \lceil h/(ck)\rceil$ and corresponding sets of rows $U_1,\ldots, U_\ell\subseteq U$ such that for each $j\in [\ell]$, the $k'\times k'$ submatrix $A_{U_j,V_j}$ is full rank.  
(In particular the union, $W$, of the sets $V_j$ has size at least $h$.)
If $c=1$ then all $U_j=U$.
\end{lemma}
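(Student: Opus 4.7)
The plan is a straightforward greedy construction that repeatedly uses the rigidity of $A$ to extract disjoint $k'\times k'$ full-rank blocks, one at a time, until the union of the chosen column sets exhausts at least $h$ columns. Fix $U\subseteq [m]$ with $|U|=k$ and recall $k'=\lceil ck\rceil$.

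Set $W_0=\varnothing$ and inductively build $V_1,V_2,\ldots$ as follows. Suppose that disjoint $V_1,\ldots,V_{j-1}\subseteq [n]$, each of size $k'$, and row sets $U_i\subseteq U$ have already been chosen so that $A_{U_i,V_i}$ is $k'\times k'$ and full rank. Let $W_{j-1}=V_1\cup\cdots\cup V_{j-1}$ and $R_j=[n]\setminus W_{j-1}$. Whenever $j\le \lceil h/k'\rceil$ we have $(j-1)k'\le h$, hence $|R_j|\ge n-h$. Since $|U|=k$ and $A$ is $(k,h,c)$-rigid, the submatrix $A_{U,R_j}$ has rank at least $ck$, and because rank is an integer this is at least $k'=\lceil ck\rceil$. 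So I can select $V_j\subseteq R_j$ with $|V_j|=k'$ such that the columns of $A_{U,V_j}$ are linearly independent. The resulting $k\times k'$ matrix $A_{U,V_j}$ has rank $k'$, so I pick any $k'$ linearly independent rows to get $U_j\subseteq U$ with $|U_j|=k'$, and $A_{U_j,V_j}$ is the desired $k'\times k'$ full-rank submatrix.

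After $\ell=\lceil h/k'\rceil$ steps, the disjoint union $W=V_1\cup\cdots\cup V_\ell$ has size $\ell k'\ge h$, as required. The bound $\ell\le \lceil h/(ck)\rceil$ follows from $k'=\lceil ck\rceil\ge ck$, so $h/k'\le h/(ck)$ and hence $\lceil h/k'\rceil\le \lceil h/(ck)\rceil$. Finally, if $c=1$ then $k'=\lceil k\rceil=k$, so each $U_j\subseteq U$ has size $k=|U|$ and therefore $U_j=U$ for every $j$.

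There is no serious obstacle: the content of the statement is entirely in the rigidity hypothesis, which directly furnishes the rank lower bound needed at each greedy step. The only mildly delicate point is the off-by-one bookkeeping that ensures $|R_\ell|\ge n-h$ at the last iteration, which is handled by the inequality $(\ell-1)k'\le h$ that follows immediately from the definition of the ceiling.
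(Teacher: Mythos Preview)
Your proposal is correct and follows essentially the same greedy construction as the paper's proof: repeatedly use the rigidity of $A$ on the remaining columns to extract a $k'\times k'$ full-rank block, stopping after $\lceil h/k'\rceil$ rounds. You are slightly more explicit than the paper about the integrality of rank (yielding $\ge k'$ from $\ge ck$) and the off-by-one check $(j-1)k'\le h$, but there is no substantive difference in approach.
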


\begin{proof}
Fix $U\in [m]$ with $|U|=k$.  The following procedure constructs such a collection, one set at a time.   We maintain a subset of $W$ columns that is the union of the $V_j$ constructed so far.  
Suppose that $|W|<h$.   Then, by the $(k,h,c)$-rigidity of $A$, the
submatrix $A_{U,[n]\setminus W}$ has rank at least $k'$.   Hence there is a $k'\times k'$ submatrix $A_{U_j,V_j}$ of $A_{U,[n]\setminus W}$ that
has full rank $k'$.   
We now add $V_j$ to the collection of $k'$-sets of columns, record its corresponding row set
$U_j$, and set $W\leftarrow W\cup V_j$.
This produces exactly $\lceil h/k'\rceil$ subsets.
\end{proof}

Fix the collection of sets $V_1,\ldots,V_\ell$ 
given by \cref{lem:partition}.  
Let $k''=\lfloor \alpha k'\rfloor$.
Suppose that $V_j=\{i_1,\ldots, i_{k'}\}\subseteq [n]$ with 
$i_1\le \cdots\le i_{k'}$.
For each $\lambda\in \binom{[k']}{k''}$, define the
set $V^\lambda_j$ to be the subset of $V_j$ that has
the $k''$ elements of $V_j$ indexed by $\lambda$
removed.   (That is, $i_{j'}\notin V^\lambda_j$ iff
$j'\in \lambda$.)
Then $|V_j^\lambda|=k'-k''\ge c(1-\alpha)k$.
There are a total of 
$\binom{k'}{k''}\le 2^{H_2(\alpha)\, k'}$ possible values
of $\lambda$ and hence $\lceil h/k'\rceil \cdot 
2^{H_2(\alpha)\,k'}$ sets of the form $V_j^\lambda$.
These sets have two useful properties: first any subset of $[n]$ with size at most $\alpha h$ must miss some $V_j^\lambda$ and second if the entries of $x$ corresponding to some $V_j^\lambda$ are uniformly random, then for any set of $k$ indices in $Ax$, at least $c(1-\alpha)k$ of these values are also uniformly random.

\begin{lemma}
\label{lem:map}
For $t\le\alpha h$ and every $I\subseteq [n]$ with
$|I|\le t$, there is some $j\le \lceil h/k'\rceil$
and $\lambda\in \binom{[k']}{k''}$ such that 
$I\subseteq [n]\setminus V^\lambda_j$.
\end{lemma}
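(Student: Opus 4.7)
The plan is a straightforward averaging argument using the disjointness of the $V_j$'s, followed by a direct construction of the index set $\lambda$.

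First I would observe that since the $V_1,\ldots,V_\ell$ are pairwise disjoint (as guaranteed by \cref{lem:partition}), we have
\[
\sum_{j=1}^{\ell} |I \cap V_j| \;=\; \Big|I \cap \bigcup_{j=1}^\ell V_j\Big| \;\le\; |I| \;\le\; \alpha h.
\]
Since $\ell = \lceil h/k'\rceil \ge h/k'$, the average of $|I\cap V_j|$ over $j\in[\ell]$ is at most $\alpha h/\ell \le \alpha k'$. Hence there exists some index $j\in[\ell]$ with $|I\cap V_j| \le \alpha k'$, and because $|I \cap V_j|$ is an integer, this gives $|I \cap V_j|\le \lfloor \alpha k'\rfloor = k''$.

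Next I would choose $\lambda \in \binom{[k']}{k''}$ explicitly. Writing $V_j = \{i_1<\cdots<i_{k'}\}$, let $\lambda_0 \subseteq [k']$ be the set of positions $j'$ such that $i_{j'}\in I\cap V_j$; by the previous step $|\lambda_0|\le k''$. Extend $\lambda_0$ arbitrarily to a set $\lambda\in \binom{[k']}{k''}$ of size exactly $k''$. By construction, every element of $I\cap V_j$ is indexed by some position in $\lambda$, and therefore does not appear in $V_j^\lambda = \{i_{j'} : j'\notin\lambda\}$. So $I\cap V_j^\lambda = \varnothing$.

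Finally, since the $V_{j'}$ are disjoint, $V_j^\lambda \subseteq V_j$ is also disjoint from every $V_{j'}$ with $j'\ne j$, so $I$'s intersections with other $V_{j'}$ play no role. Combining, $I \cap V_j^\lambda = \varnothing$, which is equivalent to $I\subseteq [n]\setminus V_j^\lambda$, proving the lemma. There is really no obstacle here; the only subtlety is being careful that the integer floor in the definition $k''=\lfloor\alpha k'\rfloor$ matches up with the averaging bound, which it does because $|I\cap V_j|$ is itself an integer.
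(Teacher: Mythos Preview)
Your proof is correct and follows essentially the same approach as the paper: an averaging argument over the disjoint $V_j$'s to find one with $|I\cap V_j|\le \alpha k'$, followed by choosing $\lambda$ to contain the positions of $I\cap V_j$. Your handling of the floor (noting that $|I\cap V_j|$ is an integer so the bound $\le \alpha k'$ implies $\le k''=\lfloor\alpha k'\rfloor$) is slightly more explicit than the paper's, and your final paragraph about the other $V_{j'}$'s is unnecessary since $I\cap V_j^\lambda=\varnothing$ already gives the conclusion directly.
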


\begin{proof}
Fix such a set $I$ with $|I|\le t$.  
Since $t\le \alpha h$, $|\bigcup_{j\in [\ell]}V_j|\ge h$,  and the sets $V_j$ are disjoint, 
by averaging
there is some set $V_j$ that has at most an $\alpha$
fraction of its elements in $I$.
Hence  $V_j$ has at most $k''\le \alpha k'$ elements of $I$. 
Choose a set $\lambda\in \binom{[k']}{k''}$ that contains
the indices within $V_j$ of all of the elements of $V_j\cap I$.
Then by construction $I\cap V^\lambda_j=\varnothing$.
\end{proof}

By applying \cref{lem:map} we can associate each $I\subseteq [n]$ with $|I|\le t$ with
a pair $(j,\lambda)$ such that
$I\in [n]\setminus V^\lambda_j$ and define
bucket $\calB_j^\lambda$ to consist of all such sets $I$
associated with pair $(j,\lambda)$.
Further, define a set $U_j^\lambda\subseteq U_j\subseteq [m]$ of the rows of $A_{q(w)}$ with $|U_j^\lambda|=k'-k''$ such that the submatrix
$A_{U_j^\lambda,V_j^\lambda}$ is full rank.
Such a subset of rows must exist since $A_{U_j,V_j^\lambda}$ is a full rank matrix.
Then let $q_j^\lambda=q(w)|_{U_j^\lambda}$ be the
portion of the assignment $q(w)$ on the rows of
$U_j^\lambda$.

We are now ready to provide an upper bound on the success probability from \cref{eq:target}.
\begin{align}
&\norm[\bigg]{\Pi_{q(w)} \calS_1^{\otimes n} \sum_{\substack{I \subseteq [n],\ |I| \leq t \\ y \in \domain^I}} \beta^{i,p,w}_{I, y} \ket{y}_I \ket{\perp}_{[n] \setminus I}}^2\nonumber\\
&=\norm[\bigg]{\Pi_{q(w)} \calS_1^{\otimes n} \sum_{j\in [\ell]}\sum_{\lambda\in \binom{[k']}{k''}}\sum_{I \in \calB_j^\lambda,\ y \in \domain^I}\beta^{i,p,w}_{I, y} \ket{y}_I \ket{\perp}_{[n] \setminus I}}^2\nonumber\\
&\le\norm[\bigg]{ \sum_{j\in [\ell]}\sum_{\lambda\in \binom{[k']}{k''}}\Pi_{q_j^\lambda}\ \calS_1^{\otimes n}\sum_{I \in \calB_j^\lambda,\ y \in \domain^I}\beta^{i,p,w}_{I, y} \ket{y}_I \ket{\perp}_{[n] \setminus I}}^2.\label{eq:bound}
\end{align}
Applying \cref{lem:group} with $r=k'-k''$, $q=q_j^\lambda$,  $U=U_j^\lambda$, $V=V_j^\lambda$,
and $\calB=\calB_j^\lambda$, we have that
\begin{displaymath}\norm[\bigg]{\Pi_{q_j^\lambda}\ \calS_1^{\otimes n}\sum_{I \in \calB_j^\lambda,\ y \in \domain^I}\beta^{i,p,w}_{I, y} \ket{y}_I \ket{\perp}_{[n] \setminus I}}^2\le 1/d^{k'-k''}\le 1/d^{(1-\alpha)\,k'}.
\end{displaymath}
and hence using \cref{eq:bound} we obtain that

\begin{displaymath}\norm[\big]{ \Pi_{k} \calS \ket{\phi_t}}^2
\le
\ell^2 \ \binom{k'}{k''}^2/d^{(1-\alpha)\,k'}\le
\lceil h/k'\rceil^2 \ 4^{  H_2(\alpha)\, k'}/d^{(1-\alpha)\,k'}
= \lceil h/k'\rceil\  (4^{H_2(\alpha)}/d^{(1-\alpha)})^{k'}.
\end{displaymath}
Without loss of generality in our desired bound we can assume that $4^{H_2(\alpha)}/d^{(1-\alpha)}<1$.  Therefore the bound still applies when we replace $k'$
by the potentially smaller $ck$ which is what we needed to show.
\end{proof}

\subsection{Related time-space tradeoff and cumulative memory lower bounds}
Following the same arguments as for classical computation \cite{Abr91}, we use \cref{thm:time-space-matrix-vector} to obtain a collection of time-space lower bounds for problems that are closely related to matrix vector products. Our proofs are identical to their classical counterparts proven in\cite[Sections 5-6]{Abr91} and are duplicated here for completeness.
Many of these lower bounds are tightly matched by classical query algorithms.
Constructions of matching upper bounds can be found in \cref{sec:query-algs}.

\begin{corollary}\label{cor:DFT-ts-lb}
    Let $\F$ be a field and $\domain \subseteq \F$ such that $d=|\domain|$. Any quantum circuit that computes the discrete Fourier transform (DFT) of vectors in $\domain^n$ in time $T$ and space $S$ with probability at least $2^{-S}$ requires $T$ to be $\Omega(n^2 \log (d) \ / S)$.
\end{corollary}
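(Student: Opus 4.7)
The plan is to apply \cref{thm:time-space-matrix-vector} as a black box to the DFT matrix, using \cref{prop:DFT-rigid} to supply the rigidity parameters. The corollary is really a direct instantiation of the general matrix-vector tradeoff, not a new argument.

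First, I would invoke \cref{prop:DFT-rigid} to identify $A = W$, the $n \times n$ DFT matrix, as a $(g(m), h(n), c)$-rigid matrix with $m = n$, $g(m) = n/4$, $h(n) = n/4$, and $c = 1/2$. The constant $c = 1/2$ is in the range $(0, 1/2]$ required by the hypothesis of \cref{thm:time-space-matrix-vector}, so all conditions are satisfied. One minor point worth noting: the DFT is defined over a field containing an $n$-th root of unity, and we take $\domain \subseteq \F$ to be the input alphabet; the rigidity of $W$ is independent of how we restrict the domain since it is a statement about the rank of submatrices of $W$ itself.

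Second, I would plug these parameters into the conclusion of \cref{thm:time-space-matrix-vector}. The bound asserts that $T$ is $\Omega(\min\{g(m)\, n\log d,\, m\, h(n)\log d\}/S)$, and with $m = n$ and $g(m) = h(n) = n/4$ both terms in the minimum equal $(n^2/4)\log d$. Dropping the constant, this yields $T = \Omega(n^2 \log(d)/S)$ as claimed, valid whenever the circuit succeeds with probability at least $2^{-S}$.

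There is no real obstacle here, since the heavy lifting---the bucketing argument in \cref{lem:matvec} and its aggregation into a Borodin–Cook tradeoff in \cref{thm:time-space-matrix-vector}---has already been done for general rigid matrices. The only thing to check is that the DFT satisfies the rigidity hypothesis with the right parameters, and that is exactly the content of Yesha's \cref{prop:DFT-rigid}.
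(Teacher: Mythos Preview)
Your proposal is correct and matches the paper's own proof exactly: the paper's argument is the single sentence ``Applying \cref{thm:time-space-matrix-vector} with the rigidity of the DFT from \cref{prop:DFT-rigid} directly gives us the lower bound.'' You have simply made the instantiation of the parameters $m=n$, $g(m)=h(n)=n/4$, $c=1/2$ explicit, which is helpful but not a different route.
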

\begin{proof}
    Applying \cref{thm:time-space-matrix-vector} with the rigidity of the DFT from \cref{prop:DFT-rigid} directly gives us the lower bound.
\end{proof}
\begin{proposition}[\cite{Abr91}]\label{prop:toeplitz-rigid}
There is a constant $\gamma \in (0,1/2)$ such that at least a $1-|\domain|^{-1}(2/3)^{\gamma n}$ fraction of the Toeplitz (diagonal constant) matrices over $\domain^{n \times n}$ are $(\gamma n, \gamma n)$-rigid.
\end{proposition}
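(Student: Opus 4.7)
The plan is to adapt Abrahamson's counting argument for \cref{prop:rigid-matrices} to Toeplitz matrices, exploiting the fact that an $n\times n$ Toeplitz matrix is determined by only $2n-1$ parameters $t_{-(n-1)},\ldots,t_{n-1}$ (via $A_{i,j}=t_{j-i}$), so there are exactly $d^{2n-1}$ Toeplitz matrices in $\domain^{n\times n}$. A Toeplitz matrix $A$ fails to be $(\gamma n,\gamma n)$-rigid iff there exist $k\le \gamma n$, a row set $R=\{r_1<\cdots<r_k\}$, a column set $C\subseteq [n]$ with $|C|=h\le \gamma n$, and a nonzero vector $v\in \F^k$ (taking $\domain$ to be a finite field $\F$) such that $v^T A_{R,[n]\setminus C}=0$. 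I will count the bad quintuples $(k,R,C,v,A)$ and union-bound.

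The core step is a \emph{new parameter} lemma: for any fixed $(R,C,v)$ with $v\ne 0$, the number of Toeplitz matrices in $\domain^{n\times n}$ satisfying the system $\sum_i v_i\, t_{c-r_i}=0$ for every $c\in [n]\setminus C$ is at most $d^{n+h-1}$. To prove this, let $i^*$ be the smallest index with $v_{i^*}\ne 0$ and process the $n-h$ constraints in increasing order of $c$. The claim is that the parameter $t_{c-r_{i^*}}$ is always fresh: if instead $t_{c-r_{i^*}}=t_{c'-r_j}$ for some earlier $c'<c$ and some $j$ with $v_j\ne 0$, then $c-c'=r_{i^*}-r_j$; but minimality of $i^*$ forces $r_j\ge r_{i^*}$, making the right-hand side nonpositive and contradicting $c>c'$. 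Hence each constraint pins down a new parameter with nonzero coefficient $v_{i^*}$, so the $n-h$ constraints are linearly independent and leave $(2n-1)-(n-h)=n+h-1$ parameters to be chosen freely from $\domain$.

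Union-bounding gives that the number of non-rigid Toeplitz matrices is at most
\[
\sum_{k=1}^{\gamma n}\binom{n}{k}\, d^k \sum_{h=0}^{\gamma n}\binom{n}{h}\, d^{n+h-1},
\]
and applying \cref{prop:shannon} in the form $\sum_{j\le \gamma n}\binom{n}{j}\, d^j\le d^{\gamma n}\,2^{H_2(\gamma) n}$ to both the $k$- and $h$-sums, the fraction of non-rigid Toeplitz matrices is at most $(d^{2\gamma-1}\,4^{H_2(\gamma)})^n$. I then pick $\gamma\in(0,1/2)$ small enough that $d^{3\gamma-1}\,4^{H_2(\gamma)}\,(3/2)^\gamma\le 1$; this is possible because at $\gamma=0$ the expression equals $d^{-1}\le 1/2$ and is continuous. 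For such $\gamma$ and any $n\ge 1/\gamma$, this inequality implies $(d^{2\gamma-1}\,4^{H_2(\gamma)})^n\le d^{-1}(2/3)^{\gamma n}$. For smaller $n$ the requirement $1\le k\le \gamma n$ is unsatisfiable, so rigidity is vacuous and the target bound holds trivially.

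The main obstacle is the freshness step of the new-parameter lemma: because each Toeplitz parameter $t_j$ is reused across many different $(r,c)$ pairs, a careless ordering of constraints would revisit already-determined parameters and collapse the rank of the system. The asymmetric pairing of ``smallest nonzero index $i^*$ of $v$'' with ``increasing $c$'' is precisely what guarantees that every constraint introduces a genuinely new parameter; this structural observation is what makes the counting tight enough to match the form stated in the proposition, and it is the sole place where the Toeplitz structure enters in a nontrivial way.
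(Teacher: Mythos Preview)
The paper does not supply its own proof of this proposition---it is quoted directly from Abrahamson~\cite{Abr91} and used as a black box in the proof of \cref{cor:convo-ts-lb}---so there is no in-paper argument to compare against. Your reconstruction is the natural counting argument, and the crucial freshness step (pairing the smallest nonzero index $i^*$ of $v$ with increasing $c$ so that each constraint $\sum_i v_i\,t_{c-r_i}=0$ pins down a genuinely new Toeplitz parameter $t_{c-r_{i^*}}$) is correct and is indeed where the Toeplitz structure enters. The per-$(R,C,v)$ bound of $d^{n+h-1}$ even holds when $\domain\subsetneq\F$, since the triangular structure lets you choose the $n+h-1$ free parameters from $\domain$ and the pinned ones are then forced (and may simply fall outside $\domain$, which only helps). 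Your arithmetic leading to the fraction $(d^{2\gamma-1}4^{H_2(\gamma)})^n$ and the selection of $\gamma$ are fine; to get a single $\gamma$ valid for all $d\ge 2$, note that $d^{3\gamma-1}$ is decreasing in $d$ for $\gamma<1/3$, so it suffices to verify $d^{3\gamma-1}4^{H_2(\gamma)}(3/2)^\gamma\le 1$ at $d=2$.

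The one genuine restriction is your bound of $d^k$ on the number of dependency vectors $v$: this requires $\domain=\F$ to be a finite field, which you explicitly assume. The proposition as stated (and as used in \cref{cor:convo-ts-lb}) allows $\domain$ to be an arbitrary finite subset of a possibly infinite field $\F$, and in that generality summing over $v\in\F^k$ is not finite. Since rank is preserved under field extension, you can replace $\F$ by the subfield generated by $\domain$; when that subfield is finite (e.g.\ any positive-characteristic case) the argument goes through with $|\F|^k$ in place of $d^k$, absorbed into a smaller $\gamma$. For characteristic zero a different device is needed (Abrahamson handles it), but the finite-field case you prove already captures the essential idea and the principal applications.
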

Recall that the convolution of two vectors $w = u*v$ is $w_k = \sum_{i \in [n]} u_i v_{k-i}$ where the indices are reduced modulo $n$, where we identify $n$ with 0.
\begin{corollary}\label{cor:convo-ts-lb}
    Let $\F$ be a field and $\domain \subseteq \F$ such that $d=|\domain|$.  Any quantum query algorithm computing the convolution of two vectors in $\domain^n$ in time $T$ and space $S$ with probability at least $2^{-S}$ requires $T$ to be $\Omega(n^2 \log (d) \ / S)$
\end{corollary}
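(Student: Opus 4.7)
The plan is to reduce computation of convolution to a matrix-vector product with a rigid Toeplitz matrix, following the same strategy used classically in \cite{Abr91}. The key observation is that for any fixed $u \in \domain^n$, the convolution $u * v$ produces $2n-1$ output coordinates and the ``middle'' $n$ of them are precisely the entries of $T_u v$, where $T_u$ is the $n \times n$ Toeplitz matrix whose diagonals are filled by the entries of $u$. Thus an algorithm that computes convolution in full also computes this Toeplitz matrix-vector product.

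First, I would use \cref{prop:toeplitz-rigid} to pick a particular $u \in \domain^n$ such that $T_u$ is $(\gamma n, \gamma n)$-rigid; such a $u$ exists because the proposition guarantees that almost all Toeplitz matrices over $\domain$ are rigid. Then, given any quantum circuit $\calC$ that computes convolution of two vectors in $\domain^n$ in time $T$ and space $S$ with success probability strictly greater than $2^{-S}$, I would derive a new circuit $\calC'$ that hard-codes this particular $u$ and takes only $v$ as its quantum input. Every query that $\calC$ would have made to a position of $u$ is answered by $\calC'$ using the known classical value, folded into the input-independent unitaries $U_t$; therefore $\calC'$ uses at most $T$ queries to $v$ and at most $S$ qubits of memory, while still producing all $2n-1$ convolution coordinates, and in particular the $n$ coordinates corresponding to $T_u v$, with probability greater than $2^{-S}$.

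Finally, I would apply \cref{thm:time-space-matrix-vector} to $\calC'$ with $A = T_u$, $m = n$, $c = 1$, and $g(m) = h(n) = \gamma n$. This yields $T$ is $\Omega(g(m)\, h(n) \log d / S) = \Omega(n^2 \log d / S)$, matching the claim. The only step requiring care is verifying that the reduction preserves the query and space complexity exactly (which it does, since hard-coded classical data contributes nothing to either resource) and that the rigidity parameters line up, but neither point involves any real obstacle once the Toeplitz embedding is set up correctly.
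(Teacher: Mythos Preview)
Your overall strategy matches the paper's: reduce convolution to a matrix-vector product with a rigid Toeplitz matrix, then invoke \cref{thm:time-space-matrix-vector}. However, there is a real gap at the point where you invoke \cref{prop:toeplitz-rigid}.

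An $n \times n$ Toeplitz matrix has $2n-1$ diagonals, while $u$ supplies only $n$ values. Whichever $n$ outputs of $u*v$ you select, the resulting matrix $T_u$ has at least $n-1$ of its diagonals forced to zero (for example, taking the first $n$ outputs gives a lower-triangular Toeplitz matrix). Thus $\{T_u : u \in \domain^n\}$ is a vanishingly small slice---$d^n$ out of $d^{2n-1}$---of all $n\times n$ Toeplitz matrices, and \cref{prop:toeplitz-rigid}, which only bounds the fraction of \emph{all} Toeplitz matrices that fail to be rigid, does not guarantee this slice contains a rigid member. In fact the lower-triangular case is provably never rigid: its first row restricted to the last $n-1$ columns is identically zero, so the matrix is not even $(1,1)$-rigid.

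The paper sidesteps this by dropping to side-length $n/2$. An $(n/2)\times(n/2)$ Toeplitz matrix has exactly $n-1$ diagonals, and the paper arranges for these to be filled by $n-1$ distinct coordinates of $u$; a uniform $u$ then makes the submatrix a uniformly random $(n/2)\times(n/2)$ Toeplitz matrix, to which \cref{prop:toeplitz-rigid} applies directly. Concretely, the paper writes the (cyclic) convolution matrix $U$ in $2\times 2$ block form, takes the two top blocks $A$ and $B$, notes that each is individually a uniform Toeplitz matrix and hence rigid with high probability, and applies \cref{thm:time-space-matrix-vector} to $U'=[A\ B]$. Your argument can be repaired along the same lines, but as written the appeal to \cref{prop:toeplitz-rigid} does not go through.
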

\begin{proof}
    For simplicity assume that $n$ is even. Let
    \begin{equation*}
        U = \begin{bmatrix} u_{n}& u_{n-1}&\hdots& u_2& u_1\\
        u_{1}& u_n & \hdots& u_3& u_2 \\ \vdots & \vdots & \ddots & \vdots & \vdots\\ u_{n-2} & u_{n-3} & \hdots & u_{n} & u_{n-1} \\ u_{n-1} & u_{n-2} & \hdots & u_1 & u_n\end{bmatrix} = \begin{bmatrix} A & B \\ C & D\end{bmatrix}
    \end{equation*}
    Where $A,B,C \text{ and } D$ are $n/2 \times n/2$ submatrices. Then $Uv$ is the convolution between vectors $u$ and $v$.
    Observe that $U$ is a Toeplitz matrix and by picking $u$ to be a uniform vector over $\domain$, 
    \cref{prop:toeplitz-rigid} tells us that for sufficiently large $n$, there is a constant $\gamma \in (0,1/2)$ such that both $A$ and $B$ are $(\gamma n, \gamma n/2)$-rigid with probability at least $1/2$.
    This lets us restrict our input to such choices for $u$ and observe that the matrix $U' = \begin{bmatrix} A & B \end{bmatrix}$ is $(\gamma n, \gamma n/2)$-rigid, so \cref{thm:time-space-matrix-vector} gives us that computing $U'v$ requires $T$ that is $\Omega(n^2 \log (d) \ / S)$. Since $U'$ is a subfunction of $U$, convolution also requires $T$ that is $\Omega(n^2 \log (d)\ / S)$.
\end{proof}
\begin{corollary}\label{cor:bin-mult-ts-lb}
    A quantum circuit that multiplies two $n$ bit binary numbers in time $T$ and space $S$ with probability at least $2^{-S}$ requires $T$ to be $\Omega(n^2 / (S \log^2 n))$.
\end{corollary}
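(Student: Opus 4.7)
The plan is to reduce convolution of binary 0/1 vectors to binary integer multiplication, so that any quantum circuit multiplying two $n$-bit integers implicitly computes the convolution of two vectors of length $m=\Theta(n/\log n)$, and then invoke \cref{cor:convo-ts-lb}.

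\textbf{Encoding.} Given $u,v\in\{0,1\}^m$, define the integers
\begin{equation*}
U=\sum_{i=1}^{m} u_i\, 2^{(i-1)c},\qquad V=\sum_{j=1}^{m} v_j\, 2^{(j-1)c},
\end{equation*}
where $c=\lceil\log_2(m+1)\rceil+1=\Theta(\log n)$. Since each coefficient of the convolution $w=u*v$ satisfies $0\le w_k\le m<2^{c}$, the blocks of size $c$ in the binary representation of the product $U\cdot V$ are exactly the coefficients $w_k$ and do not overflow into neighboring blocks; in particular, one can read off each $w_k$ by selecting the appropriate $c$-bit window of $U\cdot V$ with no arithmetic needed.

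\textbf{Parameter choice and reduction.} Pick $m=\lfloor n/(2c)\rfloor=\Theta(n/\log n)$, so both $U$ and $V$ fit in $n$ bits. A quantum circuit multiplying two $n$-bit integers in time $T$ and space $S$ therefore can be post-processed (by the input-independent readout function $q$) to produce, layer by layer, the $m$ convolution outputs $w_1,\ldots, w_m$, losing at most an additive $O(\log n)$ qubits of workspace. The input $u,v$ is binary, so $d=|\domain|=2$ in the language of \cref{cor:convo-ts-lb}.

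\textbf{Invoking the convolution bound.} By \cref{cor:convo-ts-lb} applied with domain size $d=2$ and length $m$, any quantum circuit computing the convolution of two length-$m$ binary vectors with success probability at least $2^{-S}$ requires
\begin{equation*}
T\cdot S\ =\ \Omega\!\bigl(m^2\log d\bigr)\ =\ \Omega(m^2)\ =\ \Omega\!\bigl(n^2/\log^2 n\bigr),
\end{equation*}
so $T=\Omega(n^2/(S\log^2 n))$, as claimed. The only routine issue to handle is that the length of the convolution output in \cref{cor:convo-ts-lb} is $m$ rather than $2m-1$, but this makes no difference asymptotically since the statement of the corollary applies to the ``half'' convolution realized by $U'=[A\ B]$ in its proof.

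\textbf{Main obstacle.} The only real care needed is to ensure that no slack in the reduction inflates the space bound: the $O(\log n)$ bits of post-processing can be absorbed into $S$ without changing the asymptotics, and the factor of $\log^2 n$ in the final bound is precisely the $c^2=\Theta(\log^2 n)$ cost of packing $m=\Theta(n/\log n)$ coefficients into $n$-bit integers. No new quantum ingredient is required beyond \cref{cor:convo-ts-lb}.
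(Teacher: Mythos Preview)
Your proposal is correct and follows essentially the same approach as the paper: encode the bits of $u,v$ as integers with $\Theta(\log n)$ zero-padding between consecutive entries so that the product's blocks carry the convolution coefficients without carry-overflow, then invoke \cref{cor:convo-ts-lb} for $m=\Theta(n/\log n)$.

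The one difference worth noting is how the cyclic-vs-linear convolution mismatch is handled. The paper writes each padded vector \emph{twice} in succession, $u'=0^{c-1}u_n\cdots 0^{c-1}u_1\,0^{c-1}u_n\cdots 0^{c-1}u_1$, so that the middle portion of $u'\cdot v'$ directly encodes the cyclic convolution outputs that \cref{cor:convo-ts-lb} is stated for, allowing a black-box application. Your route instead extracts linear convolution and appeals to the internals of the proof of \cref{cor:convo-ts-lb} via the ``half'' matrix $U'=[A\ B]$. That is morally fine, but the sentence is a bit loose: the first $m/2$ rows of the circulant $U$ are not literally a subvector of the linear convolution you compute, so you would still need either to repeat one copy of the vector (as the paper does) or to rerun the rigidity argument for the lower-triangular Toeplitz matrix of linear convolution. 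Either fix is immediate and does not change the asymptotics.
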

\begin{proof}
    Let $u,v$ be arbitrary vectors over $\F_2$. Define the binary number
    \begin{equation*}
        u'=0^{\ceil{\log_2 n} -1} u_n \ldots 0^{\ceil{\log_2 n} -1} u_1 0^{\ceil{\log_2 n} -1} u_n \ldots 0^{\ceil{\log_2 n} -1} u_1
    \end{equation*} and similarly define $v'$. Then observe that the product $u' \cdot v'$ contains all entries of the convolution between $u$ and $v$ encoded in blocks of $\ceil{\log_2 n}$ bits each. By \cref{cor:convo-ts-lb} this requires $T$ to be $\Omega(n^2 / (S\log^2 n))$.
\end{proof}
\begin{proposition}[\cite{Abr91}]\label{prop:mat-mat-mat-as-mat-vec}
        Let $A,B,C,Y \in \domain^{n \times n}$.
        Let $\calB$ (and $\calY$) be the vectors in $\domain^{n^2}$ formed by stacking the transposes of the rows of $B$ (and $Y$) into a column vector.
    If $\domain$ is a commutative ring, then the following conditions are equivalent:
    \begin{align*}
        Y &= ABC\\
        \calY &= (A \otimes C^T) \calB
    \end{align*}
    Where $\otimes$ is the standard tensor (Kronecker) product.
\end{proposition}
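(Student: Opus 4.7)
The plan is to recognize the claim as an instance of the standard vectorization identity $\mathrm{vec}(PXQ)=(Q^T\otimes P)\mathrm{vec}(X)$, where $\mathrm{vec}$ denotes stacking the columns of a matrix into a single column vector. The operation defined in the statement, which stacks the transposes of the rows of a matrix, is exactly $\mathrm{vec}$ applied to the transpose of that matrix. So I would first record the observation that $\calB=\mathrm{vec}(B^T)$ and $\calY=\mathrm{vec}(Y^T)$, which reduces the proposition to the statement that $Y=ABC$ iff $\mathrm{vec}(Y^T)=(A\otimes C^T)\,\mathrm{vec}(B^T)$.

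Next I would derive the vectorization identity itself, which is where commutativity of $\domain$ enters (so that one can freely commute scalar entries of the factors when expanding the Kronecker product applied to a vectorized matrix). Taking transposes of both sides of $Y=ABC$ gives $Y^T=C^T B^T A^T$. Applying the identity with $P=C^T$, $X=B^T$, and $Q=A^T$ yields
\begin{equation*}
\mathrm{vec}(Y^T)=\mathrm{vec}(C^T B^T A^T)=((A^T)^T\otimes C^T)\,\mathrm{vec}(B^T)=(A\otimes C^T)\,\mathrm{vec}(B^T),
\end{equation*}
which is precisely $\calY=(A\otimes C^T)\calB$. For the converse, I would note that the vectorization map is a bijection between $\domain^{n\times n}$ and $\domain^{n^2}$, so the equation $\calY=(A\otimes C^T)\calB$ uniquely determines $Y^T$, hence $Y$, and the forward direction already shows that $Y=ABC$ is a solution.

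To prove the vectorization identity cleanly, I would verify it on a basis. Writing $B^T=\sum_{i,j}(B^T)_{ij} E_{ij}$ where $E_{ij}$ is the standard matrix unit, one checks directly that $\mathrm{vec}(E_{ij})=e_j\otimes e_i$ and that $P E_{ij} Q=\sum_{k,\ell}P_{ki}Q_{j\ell}E_{k\ell}$, so $\mathrm{vec}(PE_{ij}Q)=(Q^T e_j)\otimes(Pe_i)=(Q^T\otimes P)(e_j\otimes e_i)$, and the identity follows by linearity. The main subtlety to watch out for is the convention for $\calB$: because the paper stacks row transposes rather than columns, one has to keep the extra transpose on $B$ and $Y$ straight throughout, which is exactly what flips $(Q^T\otimes P)$ into $(A\otimes C^T)$ in the final formula. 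No deeper obstacle appears beyond this bookkeeping, since everything follows from standard linear algebra over a commutative ring.
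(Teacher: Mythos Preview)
Your argument is correct: this is exactly the standard vectorization identity $\mathrm{vec}(PXQ)=(Q^T\otimes P)\,\mathrm{vec}(X)$ applied after transposing, and your bookkeeping with $\calB=\mathrm{vec}(B^T)$ and $\calY=\mathrm{vec}(Y^T)$ is right. Note, however, that the paper does not supply its own proof of this proposition at all; it is simply quoted from \cite{Abr91} and used as a black box, so there is no ``paper's proof'' to compare against. Your write-up is a perfectly standard and complete justification of the cited fact.
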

\begin{sloppypar}
\begin{proposition}[\cite{Abr91}]\label{prop:tensor-rigid-is-rigid}
    Let $\gamma \in (0,1/2)$. If $A$ and $B$ are $(\gamma n, \gamma n)$-rigid, then $A \otimes B$ is $(\gamma^2 n^2, \gamma^2 n^2, \gamma^2)$-rigid.
\end{proposition}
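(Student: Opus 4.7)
The plan is to prove the $(\gamma^2 n^2, \gamma^2 n^2, \gamma^2)$-rigidity of $A \otimes B$ by combining the rigidity of $A$ and of $B$ at the two natural scales of the Kronecker product. Fix $R \subseteq [n]^2$ with $|R| = k \le \gamma^2 n^2$ and $D \subseteq [n]^2$ with $|D| \le \gamma^2 n^2$, and set $C = [n]^2 \setminus D$. Decompose by first coordinate: for $i \in [n]$ let $R_i = \{i' : (i, i') \in R\}$ with $r_i = |R_i|$, and analogously $D_j$, $d_j$, $C_j = [n] \setminus D_j$. The $(i, j)$-``block'' of the submatrix $(A \otimes B)[R, C]$ is $A_{ij}\cdot B[R_i, C_j]$.

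First I would use a Markov-style count to identify the ``small'' row-indices $I^s = \{i : r_i \le \gamma n\}$ and the ``good'' column-indices $J^s = \{j : d_j \le \gamma n\}$, both of size at least $(1 - \gamma) n$, and partition each row-block with $r_i > \gamma n$ into pieces of size at most $\gamma n$. For any such piece $P \subseteq R_i$ and any $j \in J^s$ with $A_{ij} \ne 0$, $B$'s rigidity makes $B[P, C_j]$ full row rank $|P|$; and since $A$'s rigidity applied to a single row forces every row of $A$ to have more than $\gamma n$ nonzero entries, such a witness $j \in J^s$ always exists. Across row-blocks, $A$'s rigidity guarantees that any set of at most $\gamma n$ rows of $A$ is linearly independent, so the ``tensor-direction'' subspaces $V_i^{\mathrm{full}} := A_{i,:} \otimes \mathbb{F}^n \subseteq \mathbb{F}^{n^2}$ form a direct sum for any $\le \gamma n$ distinct first-coordinates $i$.

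Combining these, I would construct a witness set $\mathcal{R}^* \subseteq R$ of size at least $\gamma^2 k$ by selecting at most $\gamma n$ rows from each of at most $\gamma n$ distinct row-blocks; a simple greedy choice based on the sorted $r_i$ values attains this size since $\sum_i r_i = k \le \gamma^2 n^2$. By the two rigidity inputs above, the rows of $A \otimes B$ indexed by $\mathcal{R}^*$ are linearly independent inside the full Kronecker space $\mathbb{F}^{n^2}$: within each row-block independence comes from $B$'s rigidity, and across row-blocks it comes from the direct sum provided by $A$'s rigidity. It remains to argue that this independence survives restriction to $C$ via the projection $\pi_C : \mathbb{F}^{n^2} \to \mathbb{F}^C$, whose kernel has dimension $|D| \le \gamma^2 n^2$.

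The main obstacle is this column-restriction step. A naive bound ``rank loss $\le |D|$'' can match or exceed $|\mathcal{R}^*|$ and is therefore insufficient. Instead, I would track a putative nontrivial linear combination that vanishes on $C$: any such combination produces a matrix $W \in \mathbb{F}^{n \times n}$ whose rows lie in the spans $\mathrm{span}\, B[P_i, :]$ (a restriction imposed by $B$'s rigidity) and which satisfies $\mathrm{supp}(A^\top W) \subseteq D$ (a restriction imposed by $A$'s rigidity on the good columns $j'$ with $|D^{(j')}| \le \gamma n$, each of which forces $W_{:, j'}$ into a subspace of codimension at least $\gamma n$). A dimension count combining these two restrictions forces enough of $W$ to vanish to give $\mathrm{rank}((A \otimes B)[R, C]) \ge \gamma^2 k$. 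Making this double-rigidity count precise---so that the deletion budget is amortized across witness rows rather than being absorbed by a single group---is the technical crux of the argument.
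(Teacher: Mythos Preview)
The paper does not prove this proposition; it is quoted from \cite{Abr91} without argument, so there is no in-paper proof to compare against.  That said, your proposal has the right skeleton but stops exactly where the argument becomes easy, and your chosen decomposition of the deleted columns is what makes the last step look hard.

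Your construction of $\mathcal{R}^*$ is correct: choose $I\subseteq[n]$ with $|I|\le\gamma n$ and, for each $i\in I$, a set $R_i'\subseteq R_i$ with $|R_i'|\le\gamma n$, so that $|\mathcal{R}^*|=\sum_{i\in I}|R_i'|\ge\gamma^2 k$ (the greedy/averaging step you describe works).  The issue is that you slice $D$ along its \emph{first} coordinate, producing $J^s=\{j:d_j\le\gamma n\}$.  This is useful for showing that individual blocks $B[R_i',C_j]$ have full row rank, but it does not mesh with the column-restriction step, which is why you end up with an unresolved ``technical crux''.

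The clean argument slices $D$ along its \emph{second} coordinate.  Write $D^{j'}=\{j:(j,j')\in D\}$ and let $J'=\{j':|D^{j'}|\le\gamma n\}$; by Markov, $|J'|\ge n-\gamma n$.  Now suppose $\sum_{(i,i')\in\mathcal{R}^*}\lambda_{i,i'}\,(A\otimes B)_{(i,i'),C}=0$ and set $w_i=\sum_{i'\in R_i'}\lambda_{i,i'}B_{i',:}\in\F^n$.  For each fixed $j'\in J'$ the relation gives $\sum_{i\in I}A_{ij}(w_i)_{j'}=0$ for every $j\in C^{j'}:=[n]\setminus D^{j'}$, i.e.\ $A[I,C^{j'}]^\top\cdot\big((w_i)_{j'}\big)_{i\in I}=0$.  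Since $|I|\le\gamma n$ and $|C^{j'}|\ge n-\gamma n$, the rigidity of $A$ makes $A[I,C^{j'}]$ have full row rank, so $(w_i)_{j'}=0$ for \emph{all} $i\in I$.  Hence each $w_i$ vanishes on $J'$.  But $w_i|_{J'}$ is a combination of the rows of $B[R_i',J']$, which has full row rank by the rigidity of $B$ (again $|R_i'|\le\gamma n$, $|J'|\ge n-\gamma n$), so all $\lambda_{i,i'}=0$.  Thus $(A\otimes B)[\mathcal{R}^*,C]$ has full row rank $|\mathcal{R}^*|\ge\gamma^2 k$.

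In other words, the step you flag as the crux is not a delicate amortized dimension count at all: for every good $j'$ the column $W_{:,j'}$ is forced to be \emph{exactly} zero (not merely into a proper subspace), because the ambient space $\F^I$ already has dimension at most $\gamma n$ while $A$'s rigidity supplies $\ge n-\gamma n$ independent constraints.  Your first-coordinate set $J^s$ and the block-level observation that $B[P,C_j]$ has full row rank are unnecessary once you use $J'$.
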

\end{sloppypar}
\begin{corollary}\label{cor:mat-mat-mat-product-ts-lb}
    Let $\F$ be a field and $\domain \subseteq \F$ such that $d=|\domain|$. Any quantum circuit that computes the product $ABC$ on inputs $A,B,C \in \domain^{n \times n}$ in time $T$ and space $S$ with probability at least $2^{-S}$ requires $T$ that is $\Omega(n^4 \log (d)\ / S)$.
\end{corollary}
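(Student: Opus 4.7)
\textbf{Proof plan for Corollary \ref{cor:mat-mat-mat-product-ts-lb}.}
The plan is to reduce matrix triple product on an appropriate restricted input to the matrix-vector product lower bound from Theorem \ref{thm:time-space-matrix-vector}, using the two propositions immediately preceding the statement. First I would invoke Proposition \ref{prop:rigid-matrices} to fix specific matrices $A^*, C^* \in \domain^{n\times n}$ that are both $(\gamma n, \gamma n)$-rigid; since non-rigid matrices form only a $d^{-1}(2/3)^{\gamma n}$ fraction of $\domain^{n\times n}$, for all sufficiently large $n$ the overwhelming majority of pairs $(A, C)$ consist of two rigid matrices.

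Next, applying Proposition \ref{prop:tensor-rigid-is-rigid} to $A^*$ and $C^{*T}$ yields that the $n^2 \times n^2$ matrix $M := A^* \otimes C^{*T}$ is $(\gamma^2 n^2, \gamma^2 n^2, \gamma^2)$-rigid. Now I would use Proposition \ref{prop:mat-mat-mat-as-mat-vec} to observe that the subfunction of matrix triple product obtained by fixing $A = A^*$ and $C = C^*$ is exactly the matrix-vector product $\calY = M \calB$, where $\calB, \calY \in \domain^{n^2}$ are the vectorizations of $B$ and $Y = A^* B C^*$.

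The reduction step: Given a quantum circuit $\calC$ computing $ABC$ on $(A,B,C)$ sampled uniformly with success probability larger than $2^{-S}$, I would argue that there is a choice of rigid $A^*, C^*$ for which the conditional success probability on uniform $B$ is larger than $2^{-S-1}$ (this follows by averaging, using the fact that non-rigid pairs contribute negligible mass). Hard-coding $A = A^*$ and $C = C^*$ into $\calC$ by replacing queries to the $A$ and $C$ registers with fixed responses produces a circuit of the same time $T$ and space $S$ that computes $M\calB$ with probability larger than $2^{-S-1} \ge 2^{-(S+1)}$ on uniformly random $\calB \in \domain^{n^2}$.

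Finally, I would apply Theorem \ref{thm:time-space-matrix-vector} to this reduced circuit with parameters $m = n^2$, input dimension $n^2$, $g(m) = \gamma^2 n^2$, $h(n^2) = \gamma^2 n^2$, and $c = \gamma^2$, and space $S+1$, yielding $T \cdot (S+1) = \Omega(n^4 \log d)$, and hence $T \cdot S = \Omega(n^4 \log d)$ as claimed. The only subtle step is the averaging-with-rigidity argument in the reduction; since the rigidity-failure probability is $d^{-1}(2/3)^{\gamma n}$, which is exponentially small compared to any relevant quantity, this causes at most a constant-factor loss and no real obstacle. The mildly annoying bookkeeping is ensuring the success probability on the restricted subfunction remains above $2^{-(S+O(1))}$ so that Theorem \ref{thm:time-space-matrix-vector} applies, but the exponentially small failure fraction makes this immediate.
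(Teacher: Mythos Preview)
The proposal is correct and follows essentially the same route as the paper: invoke the rigidity of random matrices (Proposition~\ref{prop:rigid-matrices}), the tensor-rigidity lemma (Proposition~\ref{prop:tensor-rigid-is-rigid}), and the vectorization identity (Proposition~\ref{prop:mat-mat-mat-as-mat-vec}) to reduce to the matrix-vector tradeoff of Theorem~\ref{thm:time-space-matrix-vector}. Your writeup is in fact slightly more careful than the paper's on two points it glosses over---that it is $C^{*T}$ rather than $C^*$ whose rigidity is needed for $A^*\otimes C^{*T}$, and the averaging argument that extracts a specific rigid pair while preserving success probability $\ge 2^{-(S+O(1))}$.
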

\begin{proof}
    We use \cref{prop:mat-mat-mat-as-mat-vec} to view this as a matrix-vector product problem where $\calB$ is the input and $\calY$ is the output.
    By \cref{prop:rigid-matrices} there is a constant $\gamma \in (0,1/2)$ such that both $A$ and $C$ are $\gamma$ rigid with constant probability, so we can assume such without increasing the expected cost by more than a constant factor.
    Then \cref{prop:tensor-rigid-is-rigid} gives us that $A \otimes C$ is $(\gamma^2 n^2, \gamma^2 n^2, \gamma^2)$-rigid and we can apply \cref{thm:time-space-matrix-vector} to get that $T$ must be $\Omega(n^4 \log (d) \ / S)$ as desired.
\end{proof}

\begin{corollary}\label{cor:mat-cube-ts-lb}
    Let $\F$ be a field and $\domain \subseteq \F$ such that $d=|\domain|$. Any quantum circuit that computes $A^3$ on inputs in $\domain^{n \times n}$ in time $T$ and space $S$ with probability at least $2^{-S}$ requires $T$ that is $\Omega(n^4 \log (d) \ / S)$.
\end{corollary}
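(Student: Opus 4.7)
The plan is to reduce the matrix triple product lower bound of \cref{cor:mat-mat-mat-product-ts-lb} directly to matrix cubing by a padding argument. Given three matrices $A,B,C\in \domain^{n\times n}$, form the $4n\times 4n$ block matrix
\begin{equation*}
M = \begin{pmatrix} 0 & A & 0 & 0 \\ 0 & 0 & B & 0 \\ 0 & 0 & 0 & C \\ 0 & 0 & 0 & 0 \end{pmatrix},
\end{equation*}
whose entries lie in $\domain\cup\{0\}$ (and by assuming $0\in \domain$, which we may do without loss of generality, in $\domain$). A direct block multiplication gives $M^2$ with $AB$ in block position $(1,3)$, $BC$ in block position $(2,4)$, and zeros elsewhere, and then $M^3$ has $ABC$ in block position $(1,4)$ and zeros elsewhere. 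Thus $M^3$ reveals $ABC$.

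Now suppose a quantum circuit $\calC$ computes the cube of $N\times N$ matrices over $\domain$ in time $T$ and space $S$ with success probability at least $2^{-S}$. Setting $N=4n$, we use $\calC$ to build a circuit for the $n\times n$ matrix triple product: whenever $\calC$ queries an entry of $M$, the simulator either forwards the query to the corresponding entry of one of $A$, $B$, $C$ (costing one input query to the triple-product instance) or answers $0$ for free when the queried entry lies in a zero block. The simulator then reads off the $(1,4)$ block of the purported $M^3$ to obtain $ABC$. This simulation uses no extra space and at most $T$ queries, and its success probability equals that of $\calC$.

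By \cref{cor:mat-mat-mat-product-ts-lb}, the resulting triple-product circuit requires $T=\Omega(n^4\log d/S)$; substituting $n=N/4$ yields $T=\Omega(N^4\log d/S)$, which is the claimed bound for matrix cubing on $N\times N$ inputs. The only thing to check carefully is that the padding reduction truly respects the quantum query model, i.e.\ that zero-block queries can be answered without consulting the oracle (they can, since those positions are input-independent constants), and that no extra workspace beyond $S$ qubits is needed to translate between block coordinates and the $A,B,C$ indices; both are routine, so no substantive new obstacle arises beyond what was already handled in \cref{cor:mat-mat-mat-product-ts-lb}.
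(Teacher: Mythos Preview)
Your proposal is correct and follows exactly the same approach as the paper: you build the identical $4n\times 4n$ block upper-triangular matrix $M$ whose cube has $ABC$ in the top-right block, and then invoke \cref{cor:mat-mat-mat-product-ts-lb}. The only difference is that you spell out the simulation details (handling zero-block queries, coordinate translation) more explicitly than the paper does.
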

\begin{proof}
    Let $A,B,C \in \domain^{n \times n}$. Then construct the $4n \times 4n$ matrix:
    \begin{equation*}
        M = \begin{bmatrix}
            0 & A & 0 & 0\\
            0 & 0 & B & 0\\
            0 & 0 & 0 & C\\
            0 & 0 & 0 & 0
        \end{bmatrix}
    \end{equation*}
    Observe that the top right $n \times n$ sub-matrix of $M^3$ is equal to the product $ABC$. Thus we get a reduction to matrix-matrix-matrix product and can apply \cref{cor:mat-mat-mat-product-ts-lb} to get our lower bound.
\end{proof}

\begin{corollary}\label{cor:mat-invert-ts-lb}
    Let $\F$ be a field and $\domain \subseteq \F$ such that $d=|\domain|$. Any quantum circuit that computes $A^{-1}$ on unit upper triangular inputs in $\domain^{n \times n}$ in time $T$ and space $S$ with probability at least $2^{-S}$ requires $T$ that is $\Omega(n^4 \log(d) / S)$.
\end{corollary}
\begin{proof}
    Let $A,B,C \in \domain^{n \times n}$. Then construct the $4n \times 4n$ matrix:
    \begin{equation*}
        M = \begin{bmatrix}
            I & -A & 0 & 0\\
            0 & I & -B & 0\\
            0 & 0 & I & -C\\
            0 & 0 & 0 & I
        \end{bmatrix}
    \end{equation*}
    Where $I$ is the $n \times n$ identity submatrix.
    Then observe that $M^{-1}$ has the product $ABC$ as its top right $n \times n$ submatrix. We can again use \cref{thm:time-space-matrix-vector} to get our lower bound.
\end{proof}

\begin{corollary}\label{cor:system-eqn-cm-lb}
    Let $\F$ be a field and $\domain \subseteq \F$ such that $d=|\domain|$. Any quantum circuit that solves any $n \times n$ system of linear equations over $\domain$ in time $T$ and space $S$ with probability at least $2^{-S}$ requires $T$ that is $\Omega(n^3 \log (d) \ / S)$
\end{corollary}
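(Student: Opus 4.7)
The plan is to follow the reduction-to-a-harder-subproblem style of \cref{cor:mat-mat-mat-product-ts-lb} and \cref{cor:mat-invert-ts-lb}, embedding a matrix--matrix--vector product $ABc$ inside a $3n' \times 3n'$ block-triangular linear system. Concretely, let $n' = \lfloor n/3 \rfloor$, fix an $(n'/4, n'/4, 1/2)$-rigid matrix $A \in \domain^{n' \times n'}$ (available by \cref{prop:rigid-matrices} or \cref{prop:DFT-rigid}), and fix any vector $c \in \domain^{n'}$ with no zero entries. Consider the input matrix
\begin{equation*}
M = \begin{bmatrix} I_{n'} & -A & 0 \\ 0 & I_{n'} & -B \\ 0 & 0 & I_{n'} \end{bmatrix} \in \domain^{3n' \times 3n'},
\end{equation*}
in which only the block $B \in \domain^{n' \times n'}$ is input-dependent, together with the hardcoded right-hand side $y = (0, 0, c)^T$. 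The strict upper-triangular part $N = I_{3n'} - M$ satisfies $N^3 = 0$, so $M^{-1} = I + N + N^2$, and a direct calculation shows that the first $n'$ coordinates of $z = M^{-1} y$ are exactly $ABc$.

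I then recast computing $ABc$ as a single fixed-matrix matrix-vector product by viewing $c$ as an $n' \times 1$ matrix and applying \cref{prop:mat-mat-mat-as-mat-vec}, yielding $ABc = (A \otimes c^T)\,\calB$, where $\calB \in \domain^{n'^2}$ is the row-wise vectorization of $B$ and $A \otimes c^T$ is a fixed $n' \times n'^2$ matrix. The main obstacle is to verify that $A \otimes c^T$ is itself $(n'/4, n'^2/4, 1/2)$-rigid. For this I partition its $n'^2$ columns into $n'$ groups of $n'$ columns, where the columns in group $j$ are all scalar multiples of $A_{:,j}$ (one per entry of $c$, none of which vanish). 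Given any $k \le n'/4$ rows $I$ and any deletion of $h \le n'^2/4$ columns, at most $h/n' \le n'/4$ groups can be completely emptied, so at least $3n'/4$ column indices $j$ survive; the column space of the remaining submatrix is then spanned by $\{A_{I, j} : j \text{ survives}\}$, which is (up to column reordering) a $k \times 3n'/4$ submatrix of $A$. The $(n'/4, n'/4, 1/2)$-rigidity of $A$ gives rank at least $k/2$ for this submatrix, hence for $A \otimes c^T$ as well.

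Finally I apply \cref{thm:time-space-matrix-vector} to the rigid $n' \times n'^2$ matrix $A \otimes c^T$ with parameters $g(m) = n'/4$ and $h(n) = n'^2/4$, concluding that any quantum circuit computing $ABc$ with success probability larger than $2^{-S}$ requires $T = \Omega(g(m)\, h(n)\, \log d / S) = \Omega(n'^3 \log d / S)$. Because any circuit solving general $n \times n$ linear systems with that success probability in particular solves the restricted family of instances above---and queries to the hardcoded portions of $M$ and $y$ can be answered without consulting the input $\calB$---the same lower bound transfers to the linear-systems problem. Substituting $n' = \Theta(n)$ yields $T = \Omega(n^3 \log d / S)$ as claimed.
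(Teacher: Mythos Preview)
Your proof is correct and takes a genuinely different route from the paper. The paper argues in two lines: matrix inversion can be carried out by solving $n$ linear systems $(A,e_1),\ldots,(A,e_n)$, so a universal linear-system solver running in time $T$ and space $S$ yields an inversion circuit of time $nT$ and space $S$; invoking \cref{cor:mat-invert-ts-lb} then forces $nT=\Omega(n^4\log d/S)$. You instead embed a single hard matrix--vector product directly into one linear system: with $A$ rigid and $c$ entrywise nonzero, the first block of $M^{-1}y$ is $ABc=(A\otimes c^T)\calB$, and your pleasant observation that $A\otimes c^T$ inherits $(n'/4,\,n'^2/4,\,1/2)$-rigidity from $A$ (because each of the $n'$ column-groups spans $A_{:,j}$ and at most $n'/4$ groups can be emptied) lets you invoke \cref{thm:time-space-matrix-vector} directly. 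Your argument is longer but more self-contained, and it sidesteps the delicate point in the paper's reduction that composing $n$ runs of a solver succeeding with probability $2^{-S}$ need not yield an inversion circuit succeeding with probability $2^{-S}$. Two small cosmetic points: to guarantee $A\in\domain^{n'\times n'}$ you should rely on \cref{prop:rigid-matrices} rather than the DFT (whose entries may lie outside $\domain$), and if $\domain$ is not closed under negation you can equally well use $M'=I+N$ in place of $M=I-N$, since $(I+N)^{-1}(0,0,c)^T$ still has first block $ABc$---the paper's own block constructions in \cref{cor:mat-invert-ts-lb} carry the same implicit assumption.
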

\begin{proof}
    It is possible to invert a matrix by solving $n$ systems of $n$ linear equations. By a reduction \cref{cor:mat-invert-ts-lb} gives us that solving these equations requires $T$ that is $\Omega(n^4 \log(d) \ / S)$. Thus least one of these equations must require $T$ that is $\Omega(n^3 \log (d) \ / S)$ to solve.
\end{proof}

In \cite{BK23} the authors showed that the kinds of quantum time-space product lower bounds we proved in this section can be extended to asymptotically equivalent lower bounds on the stronger notion of cumulative memory complexity. We restate a simplified version of their main theorem for quantum circuits and classical query algorithms here.
\begin{proposition}[\cite{BK23}]\label{prop:TS-to-cm-lb}
    Let $f: \domain^n \to \range^m$ be a function such that there exists constant $C$, functions $m'(n) \in \omega(\log n), h(k,n) = k^\Delta h_1(n), K(n)$, and a distribution $\mu$ over $\domain^n$ where when $x \sim \mu$ the probability that - for any $k \leq m'(n)$ - any quantum circuit (or classical query algorithm) with at most $h(k,n)$ queries to $x$ produces $k$ correct output values of $f(x)$ with probability at most $C \cdot K(n)^{-k}$.
    Then for any constant $c>0$, any quantum circuit (or classical query algorithm) that computes $f$ with $T$ queries and error $\epsilon \leq (1-1/(2T^c))$ must have cumulative memory that is:
    \begin{equation}
        \Omega \left(\min\left([(m h_1(n))^{1/(1-\Delta)} \log K(n)] / T^{\Delta/(1-\Delta)}, m'(n)^{1+\Delta}h_1(n) \log K(n) \right) \right)
    \end{equation}
\end{proposition}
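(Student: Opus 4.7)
The plan is to follow the Borodin-Cook blocking argument adapted to cumulative memory complexity, combining it with Aaronson's advice-cost lemma (\cref{prop:quant-union}). The hypothesis gives an exponentially small bound on the success probability of any advice-free circuit making $h(k,n)=k^{\Delta}h_1(n)$ queries and outputting $k$ correct values; the goal is to convert this into a cumulative memory lower bound for any circuit that succeeds on $f$ with probability at least $1/(2T^c)$.

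First, I would partition the $T$-query computation into consecutive blocks $B_1,\ldots,B_N$, where block $B_i$ performs $q_i$ queries and begins with $s_i$ qubits of input-dependent memory carried over from prior blocks. Applying \cref{prop:quant-union} to each block together with the hypothesis, the probability that $B_i$ produces $k_i$ correct outputs is at most $C\cdot K(n)^{-k_i}\cdot 2^{2s_i}$, provided $q_i\ge h(k_i,n)$ and $k_i\le m'(n)$. Choosing $k_i=\Theta((s_i+\log(T^cN))/\log K(n))$, capped at $m'(n)$, makes each per-block failure probability at most $1/(2T^cN)$, so a union bound implies that with probability at least $1-1/(2T^c)$ no block produces more than $k_i$ outputs. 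Since the whole circuit succeeds with probability at least $1/(2T^c)$, we obtain $\sum_i k_i\ge m$.

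Next, I would relate the block data to the cumulative memory $CM$. With block lengths chosen as $q_i\sim h(k_i,n)=k_i^{\Delta}h_1(n)$ and starting spaces $s_i\sim k_i\log K(n)$, a standard doubling-partition argument ensures that the average space within $B_i$ is $\Theta(s_i)$, giving $CM \ge \Omega(\sum_i q_i s_i) = \Omega(h_1(n)\log K(n)\sum_i k_i^{1+\Delta})$. The remaining step is to minimize $\sum_i k_i^{1+\Delta}$ subject to $\sum_i k_i \ge m$, $\sum_i k_i^{\Delta}h_1(n)\le T$, and $k_i\le m'(n)$. By convexity the optimum occurs at equal $k_i$. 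When the query constraint binds, solving $N k^{\Delta}h_1(n) = T$ with $Nk = m$ recovers the first term $\Omega((m\,h_1(n))^{1/(1-\Delta)}\log K(n) / T^{\Delta/(1-\Delta)})$. When $m'(n)$ is the binding cap instead, even a single block with $k_i = m'(n)$ contributes $h(m'(n),n)\cdot m'(n)\log K(n) = m'(n)^{1+\Delta}h_1(n)\log K(n)$ to the cumulative memory, yielding the second term.

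The main obstacle is the amortization step that relates the block-start space to the block's contribution to cumulative memory. Aaronson's lemma naturally uses only $s_i$, but cumulative memory is the total over all layers inside the block; neither losing a factor of $q_i$ nor assuming uniform space per layer is acceptable. The standard fix, detailed in \cite{BK23}, is to recursively split any block whose starting space differs from its average by more than a constant factor, which increases $N$ only by a logarithmic factor that is absorbed into the constants. A secondary subtlety is the dynamic output model inherited from \cite{HM21,HamoudiLS22}: since the block producing each output is not fixed in advance, one must charge each correctly produced output to a unique block via the indicator-qubit formalism of those works, which interacts cleanly with the per-block Aaronson bound above.
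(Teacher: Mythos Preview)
This proposition is not proved in the paper; it is imported as a black-box result from \cite{BK23} (introduced with ``We restate a simplified version of their main theorem for quantum circuits here''), so there is no in-paper proof to compare your attempt against.

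That said, your sketch matches the overall shape of the argument in \cite{BK23}: block the circuit, combine the hypothesis with \cref{prop:quant-union} to cap each block's correct-output count $k_i$ by $\Theta((s_i+\log T)/\log K(n))$ (or by $m'(n)$), force $\sum_i k_i\ge m$ by a union bound against the success probability, and then optimize $\sum_i k_i^{1+\Delta}$ to extract the two terms in the minimum. The one place your outline is loose is the sentence ``block lengths chosen as $q_i\sim h(k_i,n)$ and starting spaces $s_i\sim k_i\log K(n)$'': the $s_i$ are not chosen but are determined by the circuit once the partition is fixed, and the partition cannot be defined in terms of the $k_i$, which themselves depend on the $s_i$. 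Dissolving this circularity is exactly what the cumulative-memory machinery of \cite{BK23} does---the blocks are built directly from the per-layer space profile so that each block's boundary space is within a constant factor of its average space---and this is precisely the ``recursive splitting'' fix you already flag as the main obstacle. So your plan is essentially correct and aligned with the source; it just needs the partition to be constructed first, from the space profile alone, before the $k_i$ are assigned.
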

Using the above result, we can extend the quantum time-space product lower bound for matrix vector products to a matching quantum cumulative memory lower bound.
\begin{theorem}\label{cor:mat-vec-cm-lb}
    Let $\gamma>0$ and $c \in (0,1/2]$ be fixed. If $A$ is a $(\gamma n, \gamma n, c)$-rigid $n\times n$ matrix over a field $\F$ then any quantum circuit using time $T$ and space $S$ that computes the function $f_A: \domain^n \to \F^n$ for $\domain\subseteq \F$ with $d=|\domain|$ given by $f_A(x) = Ax$ with success probability larger than $1/T$ requires cumulative memory that is $\Omega(n^2 \log d)$.
\end{theorem}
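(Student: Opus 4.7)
\medskip

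\noindent\textbf{Proof proposal for \cref{cor:mat-vec-cm-lb}.}

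The plan is to derive \cref{cor:mat-vec-cm-lb} by feeding the strong direct-product-style bound of \cref{lem:matvec} into the generic reduction \cref{prop:TS-to-cm-lb} of \cite{BK23}. The matrix-vector product setting is well-suited for this: \cref{lem:matvec} already gives, for each $k$, an exponentially small upper bound on the probability that a short quantum circuit correctly produces $k$ output coordinates, which is precisely the hypothesis required by \cref{prop:TS-to-cm-lb}.

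First I would take the input distribution $\mu$ to be uniform on $\domain^n$ and instantiate \cref{lem:matvec} with the given $(\gamma n,\gamma n,c)$-rigid matrix $A$, i.e. with $h=\gamma n$ and the largest allowed value $k\le \gamma n$. Choosing $\alpha = 0.1717$ as in the note after \cref{lem:matvec} (so that $1-\alpha-H_2(\alpha)>1/6$), I obtain that every quantum circuit making at most $\alpha\gamma n$ queries produces $k$ correct outputs with probability at most $\lceil \gamma n/(ck)\rceil\, d^{-ck/6}$. I then want to recast this bound as $C\cdot K(n)^{-k}$ for some constant $C$ and some $K(n)$ of the form $d^{c'}$.

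The main obstacle is precisely this recasting: the multiplicative prefactor $\lceil \gamma n/(ck)\rceil$ is not constant, and the range of $k$ in the hypothesis of \cref{prop:TS-to-cm-lb} must extend to some $m'(n)\in\omega(\log n)$. I would handle this by splitting the base $d^{-ck/6}=d^{-ck/12}\cdot d^{-ck/12}$ and absorbing the prefactor $\lceil \gamma n/(ck)\rceil\le \gamma n$ into one of the two halves: whenever $k\ge (12/c)\log_d(\gamma n)$ we have $\gamma n\le d^{ck/12}$, so the bound simplifies to $d^{-ck/12}$; for $k$ below this threshold one can either take $C$ large enough to absorb a polynomial slack, or equivalently shrink the effective $K(n)$ slightly (for example $K(n)=d^{c/24}$) so that a single constant $C$ works uniformly for all $k\le m'(n)=\gamma n$. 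Either way I end up with constants $C>0$ and $K(n)=d^{c'}$ for some $c'>0$ depending only on $c$, and the hypothesis of \cref{prop:TS-to-cm-lb} holds with $\Delta=0$, $h_1(n)=\alpha\gamma n$, $m'(n)=\gamma n\in\omega(\log n)$, and $m=n$.

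Finally I would plug these parameters into the conclusion of \cref{prop:TS-to-cm-lb}. With $\Delta=0$, the minimum in the conclusion collapses to
\begin{displaymath}
\min\bigl(m\,h_1(n)\,\log K(n),\ m'(n)\,h_1(n)\,\log K(n)\bigr) = \min\bigl(n\cdot \alpha\gamma n,\ \gamma n\cdot \alpha\gamma n\bigr)\cdot c'\log d,
\end{displaymath}
both of which are $\Theta(n^2\log d)$. Hence the cumulative memory is $\Omega(n^2\log d)$ as claimed. The success-probability hypothesis also matches: the theorem assumes success probability strictly greater than $1/T$, i.e.\ error strictly less than $1-1/T\le 1-1/(2T)$, which satisfies the error bound $\le 1-1/(2T^c)$ required by \cref{prop:TS-to-cm-lb} with $c=1$.
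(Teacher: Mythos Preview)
Your approach is essentially the same as the paper's: both plug \cref{lem:matvec} into \cref{prop:TS-to-cm-lb} with $\mu$ uniform, $\Delta=0$, $m'(n)=\gamma n$, $h_1(n)=\Theta(n)$, and $K(n)=d^{\Theta(1)}$, then read off $\Omega(n^2\log d)$. The paper is in fact terser---it simply declares $C=\lceil 1/c\rceil$, $h_1(n)=\alpha n$, and $K(n)=d^{1/6}$ without discussing the $\lceil h/(ck)\rceil$ prefactor you flag. Your absorption of that prefactor into the exponent is the right instinct, though note that shrinking $K(n)$ to $d^{c/24}$ still does not literally yield a uniform constant $C$ at $k=O(1)$; the paper's parameter choice has exactly the same issue, so this is a shared technicality in how the simplified \cref{prop:TS-to-cm-lb} is stated rather than a defect specific to your argument.
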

\begin{proof}
    By \cref{lem:matvec} we can apply \cref{prop:TS-to-cm-lb} where $C = \ceil{1/c}$, $m'(n) = \gamma n$, $\Delta = 0$, $h_1(n) = \alpha n$, $K(n) = d^{1/6}$, and $\mu$ is the uniform distribution.
    This give us that any quantum circuit computing $f_A$ with $T$ queries and error at most $1-1/(2T)$ requires cumulative memory $\Omega(n^2 \log d)$ as desired.
\end{proof}
Directly applying this in place of \cref{thm:quantum-booolean-matrix} gives us matching cumulative $(CM)$ memory lower bounds for \cref{cor:DFT-ts-lb} through \cref{cor:system-eqn-cm-lb}.
\begin{corollary}
    Let $\F$ be a field and $\domain \subseteq \F$ such that $d=|\domain|$.
    Any quantum circuit with inputs over $\domain$ that computes the DFT or vector convolution requires $CM$ that is $\Omega(n^2 \log d)$.
    Any quantum circuit that computes the product of three matrices, matrix cubing, or matrix inversion requires $CM$ that is $\Omega(n^4 \log d)$.
    Any quantum circuit that solves $n \times n$ systems of linear equations requires $CM$ that is $\Omega(n^3 \log d)$.
    Additionally any quantum circuit that multiplies two $n$ bit binary numbers requires $CM$ that is $\Omega(n^2 / \log^2 n)$.
\end{corollary}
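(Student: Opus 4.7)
The plan is to mimic the existing chain of reductions in \cref{cor:DFT-ts-lb} through \cref{cor:system-eqn-cm-lb}, but to invoke \cref{cor:mat-vec-cm-lb} (the cumulative-memory lower bound for matrix-vector product) wherever \cref{thm:time-space-matrix-vector} was used. For each target problem, the reduction already constructs an instance of a matrix-vector product, or exposes a matrix-vector product as a subfunction, and all that needs checking is that the reduction does not inflate cumulative memory beyond a constant factor.

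For the DFT, \cref{prop:DFT-rigid} says that the $n\times n$ DFT matrix is $(n/4,n/4,1/2)$-rigid, so \cref{cor:mat-vec-cm-lb} directly yields CM that is $\Omega(n^2\log d)$. For convolution, I follow the argument of \cref{cor:convo-ts-lb}: a uniformly random $u\in\domain^n$ produces, with constant probability, a Toeplitz matrix whose left half is $(\gamma n,\gamma n)$-rigid by \cref{prop:toeplitz-rigid}, and this gives a matrix-vector product subfunction to which \cref{cor:mat-vec-cm-lb} applies. Binary integer multiplication then inherits CM that is $\Omega(n^2/\log^2 n)$ via the same bit-padding embedding of convolution used in \cref{cor:bin-mult-ts-lb}, since padding and post-processing are essentially free in CM.

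For the triple product $ABC$, \cref{prop:mat-mat-mat-as-mat-vec} recasts it as the matrix-vector product of $A\otimes C^{T}$ against the stacked vector $\calB\in\domain^{n^{2}}$. By \cref{prop:rigid-matrices} and \cref{prop:tensor-rigid-is-rigid} the tensor $A\otimes C^{T}$ is $(\gamma^{2}n^{2},\gamma^{2}n^{2},\gamma^{2})$-rigid with constant probability, so applying \cref{cor:mat-vec-cm-lb} with effective dimension $n^{2}$ and rigidity constant $c=\gamma^{2}$ yields CM that is $\Omega(n^{4}\log d)$. Matrix cubing and matrix inversion inherit the same bound through the $4n\times 4n$ block embeddings of $ABC$ used in \cref{cor:mat-cube-ts-lb} and \cref{cor:mat-invert-ts-lb}, since embedding a hard subfunction preserves CM lower bounds up to constants.

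Finally, for linear systems, the reduction of \cref{cor:system-eqn-cm-lb} solves $n$ independent systems sequentially to invert a matrix. Because cumulative memory is additive under sequential composition, if every single $n\times n$ system could be solved with CM smaller than $c\,n^{3}\log d$, the full inversion would have CM smaller than $c\,n^{4}\log d$, contradicting the bound just established for inversion; hence some instance must require CM that is $\Omega(n^{3}\log d)$. The main thing to verify throughout is precisely this CM-preservation property of the reductions, which is routine: none of them introduces non-trivial parallel sub-computations or a non-constant number of sequential instances (apart from the linear-systems case, where the $n$-fold composition is exactly what the averaging argument absorbs).
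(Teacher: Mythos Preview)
Your proposal is correct and follows essentially the same approach as the paper: the paper's proof is a single sentence stating that one directly applies \cref{cor:mat-vec-cm-lb} in place of \cref{thm:time-space-matrix-vector} throughout the reductions of \cref{cor:DFT-ts-lb}--\cref{cor:system-eqn-cm-lb}, and you have simply spelled out that substitution case by case. Your treatment is in fact more detailed than the paper's, particularly the explicit additivity argument for sequential composition in the linear-systems case.
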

\section{Quantum matrix multiplication}\label{sec:mat-mult}

While many of the applications so far, including the matrix
triple product lower bound discussed in the
previous section, are derived from the matrix-vector
product lower bound, 
our matrix multiplication
lower bound requires a separate argument using ideas from the classical lower bound for the problem in~\cite{Abr91}.
Implementing this requires a
much more subtle way of applying
our bucketing method for states that allows us to concentrate
on just a subset of the buckets containing most of the total amplitude and ignore the 
others.
As in \cref{sec:mat-vec}, our lower bounds in this section apply to a more general model of quantum circuits that can decide which outputs they want to produce in a given layer based on the inputs that they have queried.

Here we consider the matrix multiplication problem $f(A,B) = AB$ where both $A$ and $B$ are considered input.
If we could fix a choice of $A$, we would be able to make our proof somewhat simpler.
However, as Abrahamson pointed out in \cite{Abr91}, there is a classical algorithm that can compute the function $f(B) = AB$ for any fixed matrix $A$ in $O(n^2)$ queries and $O(n \log d)$ space.
Thus our lower bound requires both $A$ and $B$ to be inputs to the function.

\begin{theorem}\label{thm:mat-mult}
Let $\F$ be a field and $\domain \subseteq \F$ with $d = |\domain|$. Then any quantum circuit $\calC$ that uses time $T$ and space $S$ and computes the function $f: \domain^{2n^2} \to \F^{n^2}$ given by $f(A,B) = AB$ with success probability larger than $1/T$ must have $T$ that is $\Omega(n^3 \sqrt{\log d\ /S})$.
\end{theorem}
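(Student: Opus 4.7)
The plan is to adapt the Borodin--Cook partition argument from the proof of \cref{thm:time-space-matrix-vector}, replacing \cref{lem:matvec} with \cref{lem:mat-mul-tail}. I partition $\calC$ into $N = \lceil T/h \rceil$ consecutive sub-circuits, each making at most $h = \beta\gamma n\sqrt{k/2}$ queries, where $\beta \le 0.0429$ is fixed, $\gamma$ is the rigidity constant of \cref{prop:rigid-matrices}, and $k$ is a parameter to be chosen. Each sub-circuit begins with at most $S$ qubits of input-dependent state; combining \cref{prop:quant-union} (Aaronson's advice amplification bound) with the inequality following \cref{lem:mat-mul-tail}, the probability that a fixed sub-circuit produces $k$ correct output values of $AB$ on a uniformly random input $(A,B)$ while both $A$ and $B$ are $(\gamma n,\gamma n)$-rigid is at most
\begin{equation*}
p \;\le\; 2^{2S} \cdot 16\,\min(k,n)^{\sqrt{k/2}} \, d^{-k/24}.
\end{equation*}

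By \cref{prop:rigid-matrices}, both $A$ and $B$ are $(\gamma n,\gamma n)$-rigid with probability at least $1/2$ for all sufficiently large $n$, so the hypothesis that $\calC$ succeeds with probability exceeding $1/T$ on uniform inputs gives success probability at least $1/(2T) - o(1)$ restricted to the rigid event. If $Np < 1/(4T)$, a union bound over the sub-circuits rules out any one of them producing $k$ correct outputs on a rigid input, so $\calC$'s success forces $N(k-1) \ge n^2$. I choose $k = c^\ast S/\log_2 d$ for a suitable small constant $c^\ast>0$; the constraint $p \le 1/(4TN)$ then reduces to
\begin{equation*}
k/24 \;\ge\; 2S/\log_2 d + \tfrac{1}{2}\sqrt{k}\,\log_d \min(k,n) + O(\log_d(T/h)),
\end{equation*}
which holds in the relevant regime (otherwise the target bound follows trivially from $T\ge n^2$ together with $S$ being so small that $\sqrt{\log d/S}$ is $O(1)$). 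Plugging $k = \Theta(S/\log d)$ and $h = \Theta(n\sqrt{k})$ into $N(k-1)\ge n^2$ yields
\begin{equation*}
T \;\ge\; \Omega(n^2 h/k) \;=\; \Omega(n^3/\sqrt{k}) \;=\; \Omega(n^3\sqrt{\log d/S}),
\end{equation*}
as claimed.

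The main obstacle I anticipate is controlling the extra $\min(k,n)^{\sqrt{k/2}}$ factor in \cref{lem:mat-mul-tail}, which has no counterpart in the matrix-vector analysis: one must verify that, with $k = \Theta(S/\log d)$, the logarithm of this factor is $o(k\log d)$ and hence cannot counteract the $d^{-k/24}$ decay. This amounts to the mild restriction $\log k = o(\sqrt{k}\log d)$, easily satisfied outside the degenerate low-$S$ regime that is handled separately. The remaining steps---converting Aaronson's $2^{2S}$ advice overhead, the $O(\log_d(T/h))$ union-bound slack, and the $1/2$ loss from conditioning on rigidity into a single constant $c^\ast$---are routine parameter bookkeeping.
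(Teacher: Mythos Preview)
Your proposal follows essentially the same route as the paper's proof: partition $\calC$ into sub-circuits making at most $\beta\gamma n\sqrt{k/2}$ queries each, combine \cref{prop:quant-union} with \cref{lem:mat-mul-tail} to bound the per-sub-circuit success probability by $2^{2S}\cdot 16\min(k,n)^{\sqrt{k/2}}d^{-k/24}$, choose $k=\Theta(S/\log d)$ so that this is at most $1/(2T)$ after a union bound, and conclude from $N(k-1)\ge n^2$.

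Two small points need tightening. First, you must use the full exponential bound from \cref{prop:rigid-matrices} (non-rigidity probability $\le 2d^{-1}(2/3)^{\gamma n}\le 1/(2T)$), not merely ``probability at least $1/2$''; with only the latter, $\Pr[\text{success and rigid}]\ge 1/T-1/2$ is useless. Second, your fallback for the regime where the $k$-constraint fails is stated backwards: that constraint fails when $k$ (hence $S/\log d$) is too \emph{small}, which makes $\sqrt{\log d/S}$ large, not $O(1)$, so $T\ge n^2$ does not rescue you. The paper handles this by observing that $S\ge\log_2 n$ forces $k\to\infty$ (for bounded $d$) so that $k^{\sqrt{k/2}}\le d^{k/48}$ eventually, and separately disposes of the case $T>n^3\sqrt{\log d}$ as vacuous.
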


Again this theorem follows from the following key lemma, proven in \cref{subsec:quant-mat-mat-lemma}, which lets us bound the number of correct output values produced by a shallow quantum circuit.

\begin{lemma}\label{lem:mat-mul-tail}
Let $\gamma\in (0,1/2)$ and
$f: \domain^{n^2} \times \domain^{n^2} \to \F^{n^2}$ for $\domain \subseteq \F$ with $|\domain|=d$ be defined by $f(A,B) = AB$. Then for any constant $\beta>0$ and quantum circuit $\calC$ with at most $h =  \beta \gamma n \sqrt{k/2}$ queries to input matrices $A,B$ sampled uniformly from $\domain^{n^2}$, the probability that $A$ and $B$ are $(\gamma n, \gamma n)$-rigid and $\calC$ produces $k$ correct output values of $f(A,B)$ is at most $16 \min(k,n)^{\sqrt{2k}} (4^{H_2(4 \beta)}/d^{1-4\beta})^{k/4}$
\end{lemma}

\begin{sloppypar}
Note that for $\beta \leq 0.0184$ we have $1-4\beta -2H_2(4\beta)>1/6$ so the bound is at most $16\min(k,n)^{\sqrt{2k}} d^{-k/24}$.
\end{sloppypar}

\begin{proof}[Proof of \cref{thm:mat-mult} from \cref{lem:mat-mul-tail}]
Let $\gamma\in (0,1/2)$ be the constant given by
\cref{prop:rigid-matrices}.
By that proposition, the probability
that either of two matrices $A$ and $B$ chosen uniformly randomly from
$\domain^{n^2}$ is not $(\gamma n,\gamma n)$-rigid is
at most $2 d^{-1} (2/3)^{\gamma n}$.
    Let $\calC$ be a quantum circuit with $T$ queries and space $S$.
    Let $\beta = 0.0429$, $d = |\domain|$, and set 
    \begin{math}
        k = \ceil{48(6S+ 4)/\log_2 d}.
    \end{math}
    We partition $\calC$ into $\ceil{T/(\beta \gamma n \sqrt{k/2})}$ sub-circuits that each have at most $\beta \gamma n\sqrt{k/2}$ queries.  Without loss of generalities there are at most 
    $n^2$ such sub-circuits.
    By combining \cref{prop:quant-union} with \cref{lem:mat-mul-tail}, we know that for a uniformly random input, the probability that $A$ and $B$ are $(\gamma n, \gamma n)$-rigid matrices and a fixed sub-circuit can produce $k$ outputs is at most $16 \min(k,n)^{\sqrt{2k}} 2^{S} d^{-k/24} \leq 16 k^{\sqrt{2k}} 2^{S} d^{-k/24}$.
    Therefore the probability that $A$ and $B$ are
    $(\gamma n, \gamma n)$-rigid matrices and one of
    the
    sub-circuits produces $k$ correct output values
    is at most $16 k^{\sqrt{2k}} 2^{S} d^{-k/24} n^2$.
    Combining this with the probability that one of
    $A$ or $B$ is not $(\gamma n,\gamma n)$-rigid, the
    probability that there is a 
    sub-circuit that correctly produces $k$ 
    output values is at most 
    \begin{displaymath}
      16 k^{\sqrt{2k}} 2^{S} d^{-k/24}n^2 + 2 d^{-1}(2/3)^{2\gamma n}.  
    \end{displaymath}
    Since we can assume without loss of generality that $T\le n^3$, for sufficiently large $n$, $2 d^{-1}(2/3)^{2\gamma n} \leq 1/(2T)$ and $k^{\sqrt{2k}} \leq 2^{k/48} \leq d^{k/48}$.
    Plugging in our value of $k$ and the fact
    that $S\ge \log_2 n$ without loss of generality gives a probability of
    at most 
    \begin{align*}
        16 k^{\sqrt{2k}} 2^{2S} d^{-k/24} n^2 + 2d^{-1} (2/3)^{2\gamma n} & \leq 16 2^{S} d^{-k/48} n^2 + 1/(2T)\\
        &\le 1/(2T) + 1/(2T) = 1/T.
    \end{align*}
    Since $\calC$ must be correct with probability larger than $1/T$, this implies that
    \begin{equation*}
        (k-1) \ceil{T/(\beta \gamma n \sqrt{k/2})} \geq n^2.
    \end{equation*}
    Plugging in our value of $k$ gives us that
    \begin{equation*}
        T \text{ is } \Omega(n^3 \sqrt{\log d}/\sqrt{S + \log T}).
    \end{equation*}
    Since $S \geq \log_2 n$ and our bound trivially holds when $T$ is $\omega(n^3 \sqrt{\log d})$ there is a constant $c>0$ such that $cS \geq \log_2 T$.
    This implies that $T$ is $\Omega(n^3 \sqrt{\log d / S})$ as desired.
\end{proof}
Our quantum lower bound is tightly matched by a classical query algorithm in \cref{prop:matrix-product-upperbound}.

\subsection{The success probability of small depth quantum circuits}\label{subsec:quant-mat-mat-lemma}

We first give an overview of the argument which assumes a uniform 
distribution over all input matrices
$A$ and $B$ in $D^{n\times n}$.
Unlike the matrix-vector product
proof, in addition to the requirement of
$k$ correct output values, for success we also include
the extra condition that both matrices
must be $(\gamma n,\gamma n)$ rigid.
As in the case of the matrix-vector product proof, we decompose the state after $t\le h=\beta \gamma n\sqrt{k/2}$ steps into orthogonal components based on different values $\ket{i,p,w}$ which determines
the $k$ output values produced, though this now can be up to
quadratic in $n$. 
However, unlike that proof, we need to use the weighted version of our bucketing method.
It again suffices to show that for each such $\ket{i,p,w}$
the total fraction of the squared amplitude for any state that is spanned by recording query basis states with at most $t$ non-$\bot$ items can correspond to inputs where there is success is 
exponentially small in $k$.  

The output values produced determine a set of rows of the matrix $A$ and columns of the matrix $B$ that are relevant.    
For classical algorithms, where we can determine a set of input locations queried, the lower bound of~\cite{Abr91} shows that either at least $k/4$ of the output values lie in rows where few elements
of $A$ are queried or $k/4$ lie in columns where
few elements of $B$ are queried.
For each of these cases (``light'' rows or ``light'' columns). 
The corresponding output values in those rows or columns are hard to produce in that the requirement that the other matrix is rigid means that the algorithm is exponentially unlikely in $k$ to be correct on those entries.

In the quantum case, when viewed in the recording query basis, the state involves a superposition over all possible assignments to
subsets of indices for the relevant rows of $A$ and columns of $B$ with at most $t$ non-$\bot$
entries.
For convenience, we first split these basis states
depending on whether there are many 
outputs in light rows or many in light columns; and then on which rows/columns those are; each determines a set of $k/4$ output values to consider hard and whether to focus on matrix $A$ or $B$. The number of such possibilities is not too large so the total is not too much
larger than the maximum over all such
choices.  
We further consider a fixed choice of the other rigid matrix that maximizes
the resulting probability that the hard outputs produced have correct values.
The number of consistent recording query basis states in each such superposition is still 
enormous. 

We need to apply bucketing where either $A$ or $B$ is fixed as a rigid matrix and the other can be interpreted as a having a collection of light columns (or rows) such that the output values are the results of a matrix-vector products
involving vectors with few queries.   
However, repeatedly applying the basic bucketing method for basis states we used for matrix-vector
products fails because the total number of buckets would be too large since it would end up being the product over the number of choices for each row or column.

Instead, we show that among these potential buckets we can find a small number of them that together captures a large portion of the
amplitude associated with the state, letting us derive the final lower bound.
We now give the details of this argument.

\begin{proof}[Proof of \cref{lem:mat-mul-tail}]
Let $C = AB$, $\Pi_{\text{rigid}(A)}$ (and $\Pi_{
\text{rigid}(B)}$) be the projection onto inputs where $A$ (and $B$) are $(\gamma n, \gamma n)$-rigid matrices, and define $\Pi_{\text{rigid}} = \Pi_{\text{rigid } A} \Pi_{
\text{rigid } B}$.
Assume that $q(w)$ --- the output as a function of the measured value of the work register --- produces exactly $k$ outputs; we ignore anything it produces
after the first $k$.
We will use $[A]$ to denote the set of indices of
elements in $A$ and likewise for $[B]$ and $[C]$.
By \cref{prop:max-queried}, after $t \leq h$ queries in the recording query basis, 
 the state $\ket{\phi_t}$
 is a linear combination of basis states $\ket{i, p, w, x_1, \ldots, x_n}$ where $(x_1, \ldots, x_n) \in \Gamma_t$.
As in our analysis of the case of matrix-vector products, it will be necessary to be more 
 explicit in our discussion of $\Gamma_t$. 
 Each element of $\Gamma_t$ consists
 of an assignment $x\in \domain^E$ 
 and $y\in \domain^F$
 for some subsets $E\subseteq [A]$ 
 and $F\subseteq [B]$ with
 $|E|+|F|\le t$ and value $\perp$ on all
 coordinates in $[A]\setminus E$ and
 $[B]\setminus F$.
 Therefore,
our state can be written as:
\begin{equation*}
    \ket{\phi_t} = \sum_{\substack{i,p,w\\E \subseteq [A], F \subseteq [B]\\ |E| + |F| \leq t \\ x \in \domain^E, y \in \domain^F}} \alpha_{i,p,w,E,F,x,y} \ket{i,p,w}\ket{x}_E\ket{\perp}_{[A] \setminus E} \ket{y}_F \ket{\perp}_{[B] \setminus F}
\end{equation*}
for some $\alpha_{i,p,w,E,F,x,y}$ with $\sum_{i,p,w,E,F,x,y} |\alpha_{i,p,w,E,F,x,y}|^2 = 1$.  
We first apply an analogous series of observations and decompositions to those that allowed us to derive (\ref{eq:target}) from
(\ref{eq:matrix-vector-1}) in the case of matrix-vector product.
By \cref{prop:recording-equiv}, we note that the final state of the algorithm in the standard oracle setting is given by:
\begin{equation*}
    \ket{\psi_t} = \calS \ket{\phi_t} = \calS \sum_{\substack{i,p,w\\E \subseteq [A], F \subseteq [B]\\ |E| + |F| \leq t \\ x \in \domain^E, y \in \domain^F}} \alpha_{i,p,w,E,F,x,y} \ket{i,p,w}\ket{x}_E\ket{\perp}_{[A] \setminus E} \ket{y}_F \ket{\perp}_{[B] \setminus F}
\end{equation*}
Because $\calS$ behaves as the identity on $\ket{\phi_t}_\calC$ and each distinct choice of $\ket{i,p,w}$ gives an orthogonal basis state, this equals:
\begin{equation*}
    \sum_{i,p,w} \beta_{i,p,w} \ket{i,p,w} \otimes \bigg[ S_1^{\otimes 2n^2} \sum_{\substack{E \subseteq [A], F \subseteq [B]\\ |E|+|F| \leq t \\ x \in \domain^E, y \in \domain^F}} \beta^{i,p,w}_{E,F,x,y} \ket{x}_{E}\ket{\perp}_{[A] \setminus E}\ket{y}_{F}\ket{\perp}_{[B] \setminus F} \bigg]
\end{equation*}
for some $\beta_{i,p,w}$ and $\beta^{i,p,w}_{E,F,x,y}$ such that $\sum_{i,p,w} |\beta_{i,p,w}|^2 = 1$ and   $\sum_{E,F,x,y} |\beta^{i,p,w}_{E,F,x,y}|^2 = 1$ for each $i,p,w$.
Now the probability over the choices of the input matrices and the result of the quantum
algorithm making $t$ queries that the matrices $A$ and $B$ are both $(\gamma n, \gamma n)$-rigid 
and the algorithm produces $k$ correct output values from $C=AB$ is at most:
\allowdisplaybreaks[1]
\begin{align}
  &\norm[\big]{\Pi_k \Pi_{\text{rigid}} \calS \ket{\phi_t}}^2 \nonumber\\
    &= \norm[\bigg]{\Pi_k \Pi_{\text{rigid}} \sum_{i,p,w} \beta_{i,p,w} \ket{i,p,w} \otimes \bigg[ S_1^{\otimes 2n^2} \sum_{\substack{E \subseteq [A], F \subseteq [B]\\ |E|+|F| \leq t \\ x \in \domain^E, y \in \domain^F}} \beta^{i,p,w}_{E,F,x,y} \ket{x}_{E}\ket{\perp}_{[A] \setminus E}\ket{y}_{F}\ket{\perp}_{[B] \setminus F} \bigg]}^2\nonumber\\
    &= \norm[\bigg]{\sum_{i,p,w} \beta_{i,p,w} \ket{i,p,w} \otimes \bigg[\Pi_{q(w)} \Pi_{\text{rigid}} S_1^{\otimes 2n^2} \sum_{\substack{E \subseteq [A], F \subseteq [B]\\ |E|+|F| \leq t \\ x \in \domain^E, y \in \domain^F}} \beta^{i,p,w}_{E,F,x,y} \ket{x}_{E}\ket{\perp}_{[A] \setminus E}\ket{y}_{F}\ket{\perp}_{[B] \setminus F} \bigg]}^2\nonumber\\
    &= \sum_{i,p,w} |\beta_{i,p,w}|^2 \norm[\bigg]{\bigg[\Pi_{q(w)} \Pi_{\text{rigid}} S_1^{\otimes 2n^2} \sum_{\substack{E \subseteq [A], F \subseteq [B]\\ |E|+|F| \leq t \\ x \in \domain^E, y \in \domain^F}} \beta^{i,p,w}_{E,F,x,y} \ket{x}_{E}\ket{\perp}_{[A] \setminus E}\ket{y}_{F}\ket{\perp}_{[B] \setminus F} \bigg]}^2\nonumber\\
    &\leq \max_{i,p,w} \norm[\bigg]{\Pi_{q(w)} \Pi_{\text{rigid}}\calS_1^{\otimes 2n^2} \sum_{\substack{E \subseteq [A], F \subseteq [B] \\ |E| + |F| \leq t \\ x \in \domain^E, y \in \domain^F}}
    \beta^{i,p,w}_{E,F,x,y} \ket{x}_E \ket{\perp}_{[A] \setminus E} \ket{y}_F 
    \ket{\perp}_{[B] \setminus F}}^2 .   \label{eq:matrix-mult-target}
\end{align}
For the rest of the proof we fix an $i,p,w$ to 
achieve the maximum value in \cref{eq:matrix-mult-target} and prove an upper bound
on the resulting probability.
This fixes the output values $q(w)$;
we write $G\subseteq [C]$ with $|G|=k$ for the set of
indices of the outputs given by $q(w)$.
To keep notations simpler in the remainder of the proof we observe that
\cref{eq:matrix-mult-target} is upper
bounded by the maximum of 
\begin{equation}
    \norm[\bigg]{\Pi_{q(G)} \Pi_{\text{rigid}}\calS_1^{\otimes 2n^2} \sum_{\substack{E \subseteq [A], F \subseteq [B] \\ |E| , |F| \leq t \\ x \in \domain^E, y \in \domain^F}}
    \beta_{E,F,x,y} \ket{x}_E \ket{\perp}_{[A] \setminus E} \ket{y}_F 
    \ket{\perp}_{[B] \setminus F}}^2    \label{eq:matrix-mult-target-simple}
\end{equation}
over 
all $\beta_{E,F,x,y}$ with
$\sum_{E,F,x,y} |\beta_{E,F,x,y}|^2=1$, all sets $G\subseteq [C]$ with $|G|=k$ and all assignments $q(G)$ to $G$.

We will split the sum in \cref{eq:matrix-mult-target-simple} over the different
sets $E$ and $F$ of queried input indices
depending on how they relate to the set of
output indices given by $G$.
Let $r(G)$ be the set of rows containing elements of $G$ and
$c(G)$ be the set of columns containing elements of $G$.\footnote{We will think of $r(G)$ and $c(G)$ as being subsets of indices in $[n]$ that correspond to rows in $A$ and columns of $B$, respectively, that are relevant for the outputs in $G$.}

Recall our bound $h=\beta\gamma n\sqrt{k/2}$ on the number of
queries.  
We define a \emph{light row of $E$} to be an element
of $r(G)$ that 
contains at most $\beta\gamma n$ elements of $E$
and define a \emph{light column of $F$} to be an element
of $c(G)$ that contains at most $\beta\gamma n$ elements of $F$.
Since $|E|+|F|\le t\le \beta\gamma n\sqrt{k/2}$ we have
$\le \sqrt{k/2}$ rows of $E$ in $r(G)$ and $\le \sqrt{k/2}$ columns of $F$ in $c(G)$ that are not light.
We define
$\calL(E)\subseteq r(G)$, to be the set of
light rows of $E$ and
$\calL'(F)\subseteq c(G)$ to be the
set of light columns of $F$. 
Therefore $|\{(i',j')\in G\mid i'\notin \calL(E),\ j'\notin \calL'(F)\}|\le k/2$
so at least $k/2$ elements of $G$
are in light rows of $E$ or in light columns of $F$.
Therefore for every pair $(E,F)$ at least one of
the sets of outputs $G^r_{\calL(E)}=\{(i',j')\in G\mid i'\in \calL(E)\}$ or
$G^c_{\calL'(F)}=\{(i',j')\in G\mid j'\in \calL'(F)\}$ 
has size $\ge k/4$.

Let $\calE$ be the set of all $E\subseteq [A]$ with $|E|\le t$ such that $G$ has at least $k/4$ outputs in light rows and $\calF$ be the set of all
$F\subseteq [B]$ with $|F|\le t$ such that $G$ has at least $k/4$ outputs in light columns.
We separately bound the contribution to \cref{eq:matrix-mult-target-simple} from pairs $(E,F)$ with $E\in \calE$ or $F\in \calF$.
The analyses of the two cases are completely symmetric up to matrix transposition. 
It will be convenient to focus on the case $F\in \calF$ representing basis states where there are many outputs of $G$
in light columns and compute an upper bound on
\begin{equation}
    \norm[\bigg]{\Pi_{q(G)} \Pi_{\text{rigid}}\calS_1^{\otimes 2n^2} \sum_{\substack{E \subseteq [A]\\|E|\le t\\
x\in \domain^E}} \sum_{\substack{F \in \calF \\   y \in \domain^F}}
    \beta_{E,F,x,y} \ket{x}_E \ket{\perp}_{[A] \setminus E} \ket{y}_F 
    \ket{\perp}_{[B] \setminus F}}^2 .   \label{eq:matrix-mult-target-light-columns}
\end{equation}
Basis states where $E\in \calE$ give exactly the same upper bound as \cref{eq:matrix-mult-target-light-columns} by applying the argument
 to the transposed product $B^T A^T$ and corresponding transposed sets $F^T$, $E^T$, and $G^T$.
 Hence, the quantity in \cref{eq:matrix-mult-target-simple}
 is at most 4 times that of \cref{eq:matrix-mult-target-light-columns}.

\begin{sloppypar}
 To upper bound \cref{eq:matrix-mult-target-light-columns}, we 
first remove the projection operator $\Pi_{\text{rigid } B}$
from $\Pi_{q(G)} \Pi_{\text{rigid}}=\Pi_{q(G)} \Pi_{\text{rigid } A} \Pi_{\text{rigid } B}$
to get
$\Pi_{q(G)} \Pi_{\text{rigid } A}$.
We then rewrite this combined projection operator as
$\Pi_{q(G)} \Pi_{\text{rigid } A}
=\sum_{A\ (\gamma n, \gamma n)\textbf{-rigid}}\Pi_A\otimes \Pi^A_{q(G)} $
where $\Pi_A$ is the projection onto the specific matrix
$A$ and for each $A$, $\Pi^A_{q(G)}$ is the projection onto the choices for
matrix $B$ such that $C=AB$ agrees with $q(w)$.
We therefore obtain that \cref{eq:matrix-mult-target-light-columns} is at most
\begin{align}
    &\norm[\bigg]{\sum_{A\ (\gamma n,\gamma n)\text{-rigid}} (\Pi_A\otimes\Pi^A_{q(G)} )\calS_1^{\otimes 2n^2} \sum_{\substack{E \subseteq [A]\\|E|\le t\\
x\in \domain^E}} \sum_{\substack{F \in \calF\\  y \in \domain^F}}
    \beta_{E,F,x,y} \ket{x}_E \ket{\perp}_{[A] \setminus E} \ket{y}_F 
    \ket{\perp}_{[B] \setminus F}}^2\nonumber\\
    &=\norm[\bigg]{\sum_{A\ (\gamma n,\gamma n)\text{-rigid}} (\Pi_A \otimes\Pi^A_{q(G)} \calS_1^{\otimes n^2}) \sum_{A' \in (\domain\cup \{\perp\})^{[A]}} \sum_{\substack{F \in \calF  \\  y \in \domain^F}}
    \beta_{A'}\beta^{A'}_{F,y}\ket{A'}_{[A]}\ket{y}_F 
    \ket{\perp}_{[B] \setminus F}}^2\nonumber\\
    &=\norm[\bigg]{\sum_{A\ (\gamma n,\gamma n)\text{-rigid}} \beta_A \ket{A}_{[A]} \otimes [\Pi^A_{q(G)} \calS_1^{\otimes n^2} \sum_{\substack{F \in \calF \\ |F|  \leq t \\  y \in \domain^F}}
    \beta^A_{F,y} \ket{y}_F 
    \ket{\perp}_{[B] \setminus F}]}^2\label{eq:matrix-mult-project-A}
\end{align}
for some $\beta_{A}$ and $\beta_{F,y}^A$ such that $\sum_{A \in (\domain \cup \{\perp\})^{n^2}} |\beta_{A}|^2 = 1$ and $\sum_{F \in \calF, y \in \domain^{F0}} |\beta_{F,y}^A|^2 = 1$ for each $A$.
Since $\Pi_{q(G)}^A$ only projects onto the $[B]$ input registers, each distinct choice of $\ket{A}_{[A]}$ gives orthogonal states so \cref{eq:matrix-mult-project-A} equals
\begin{align}
    &\sum_{A\ (\gamma n,\gamma n)\text{-rigid}} |\beta_A|^2 \norm[\bigg]{\Pi^A_{q(G)} \calS_1^{\otimes n^2} \sum_{\substack{F \in \calF \\ |F|  \leq t \\  y \in \domain^F}}
    \beta^A_{F,y} \ket{y}_F 
    \ket{\perp}_{[B] \setminus F}}^2 \nonumber\\
    &\leq \max_{A\ (\gamma n,\gamma n)\text{-rigid}} \norm[\bigg]{\Pi^A_{q(G)} \calS_1^{\otimes n^2} \sum_{\substack{F \in \calF \\  y \in \domain^F}}
    \beta^A_{F,y} \ket{y}_F 
    \ket{\perp}_{[B] \setminus F}}^2
    \label{eq:matrix-mult-rigid}
\end{align}  
\end{sloppypar}

We fix a $(\gamma n, \gamma n)$-rigid matrix $A$ that maximizes (\ref{eq:matrix-mult-rigid}) and partition the set $\calF$ based on the set
$\calL'(F)$ which contains all but at most
$\floor{\sqrt{k/2}}$ columns in $c(G)$.
Therefore we can rewrite (\ref{eq:matrix-mult-rigid})
as

\begin{align}
    \norm[\bigg]{\sum_{\substack{H \subseteq c(G)\\\text{s.t. }|H| \leq \floor{\sqrt{k/2}}}} \Pi^A_{q(G)}\calS_1^{\otimes n^2} \sum_{\substack{F \in \calF \\
    \calL'(F)=c(G)\setminus H\\  y \in \domain^F    \text{s.t. }\calL'(F)=c(G)\setminus H}}
     \beta^A_{F,y} \ket{y}_F 
    \ket{\perp}_{[B] \setminus F}}^2.
    \label{eq:matrix-mult-light}
\end{align}

Since $|c(G)|\le \min(k,n)$ we can upper bound (\ref{eq:matrix-mult-light}) by 

\begin{equation}
    \min(k,n)^{\sqrt{2k}}\cdot \max_{\substack{H \subseteq c(G)\\\text{s.t. }|H| \leq \floor{\sqrt{k/2}}}}\ \norm[\bigg]{ \Pi^A_{q(G)}\calS_1^{\otimes n^2} \sum_{\substack{F \in \calF \\   y \in \domain^F \\
    \text{s.t.}\calL'(F)=c(G)\setminus H}}
     \beta^A_{F,y} \ket{y}_F 
    \ket{\perp}_{[B] \setminus F}}^2.
    \label{eq:matrix-mult-H}
\end{equation}

We fix the set $H$ achieving the maximum value in \cref{eq:matrix-mult-H}, which fixes the value
of $\calL'(F)=c(G)\setminus H$.
This fixes the set 
$G^c_{\calL'(F)}$ of 
elements in $G$ that are in light columns of $F$ (equivalently,
not in $H$) which, since $F\in \calF$, contains at
least $k/4$ elements of $G$.
Let $G'$ be a fixed subset of $k/4$ of the elements of
$G^c_{\calL'(F)}$.  
By construction we have $c(G')\subseteq \calL'(F)$.
By only requiring that the outputs in $G'$ are correct,
we therefore can upper bound $\norm[\big]{\Pi_k \Pi_{\text{rigid}} \calS \ket{\phi_t}}^2$
by the maximum value of

\begin{equation}
    4 \min(k,n)^{\sqrt{2k}}\ 
    \norm[\bigg]{ \Pi^A_{q(G')}\calS_1^{\otimes n^2} \sum_{\substack{F \subseteq [B] \\  c(G')\ \subseteq\ \calL'(F)\\  y \in \domain^F}}
     \beta'_{F,y} \ket{y}_F 
    \ket{\perp}_{[B] \setminus F}}^2 \label{eq:matrix-mult-prelemma}
\end{equation}
over all $G'\subseteq [C]$ with $|G'|=k/4$ and $\beta'_{F,y}$ with $\sum_{F,y} |\beta'_{F,y}|^2=1$.

For each $j\in c(G')$, let $k_j$ be the number of elements
of $G'$ in column $j$.
Our overall strategy is to consider the 
$j\in c(G')$ one by one, and show that the total amplitude on states where these $k_j$
outputs are correct conditioned on the success for previous values of $j$ is of the form $d^{-\delta k_j}$ for
some fixed constant $\delta>0$.
These are $k_j$ outputs of the matrix-vector product
$Ay^{j}$ where $y^{j}$ is the $j$-th column of $B$ and
the fact that $c(G')\subseteq \calL'(F)$ implies that
$F$ has made at most $\beta\gamma n$ queries to $y^{j}$.
This is very similar to the situation with the matrix-vector problem from \cref{lem:matvec}.
In analogy with the \cref{lem:matvec}, we define $U^j$ to be the set of $k_j$ rows containing outputs of $G'$ in column $j$. 

Applying \cref{lem:partition} with $c=1$, for each $j\in c(G')$ there is a collection $V^{j}_1,\ldots, V^j_{\ell_j}$
of $\ell_j=\ceil{\gamma n/k_j}$ 
$k_j$-subsets of $[n]$ such that the $k_j\times k_j$ sub-matrix $A_{U^jV^j_i}$ has full rank.

Using the ideas of \cref{lem:matvec} we could bucket the possible basis states into one bucket for each large subset of the set associated with the
tuple $(V^j_{i_j})_{j\in q(G')}$ using \cref{lem:partition,lem:group} and bound each bucket separately.  
However, unlike its use in the proof of \cref{lem:matvec}, the value of many of the $k_j$ can
be very small, as low as 1, in which case the upper bounds
using \cref{lem:partition,lem:group} would yield a probability bound larger than 1.

Instead, we need a stronger argument that
depends on the amplitudes $\beta'_{F,y}$ in \cref{eq:matrix-mult-prelemma}.
The large subsets of the sets associated with tuples
 $(V^j_{i_j})_{j\in q(G')}$ yield
 candidate buckets but there are too many of them to be used.
 However, we will see in the following lemma that a relatively small collection of them
 can capture all but a constant fraction of the total amplitude given by the $\beta'_{F,y}$.   
 We will then see, in \cref{cor:matrix-prod-small-amp}, how this can be applied inductively with the portion of the total amplitude that
 is left over to yield a good upper bound on the total
 probability of producing the output values in $q(G')$, which 
 is what we need to prove.

\begin{lemma}\label{lem:matrix-prod-small-amp-recursive}
Let $G'\subseteq [C]$ with $|G'|=k/4$ and
$\calF'$ be a set of $F\subseteq [B]$ such that
$c(G')\subseteq \calL'(F)$.
Suppose further that $\sum_{F\in \calF',y\in \domain^F}
|\delta_{F,y}|^2=1$ for some $\delta_{F,y}$.
Define $\alpha=4\beta$. 
Then there is an $\calF''\subseteq \calF'$ and coefficients
$\delta'_{F,y}$ such that 
$\sum_{F\in \calF'', y\in \domain^F} |\delta'_{F,y}|^2=1$
and
\begin{equation}
\label{eq:bucket-reduction}
\norm[\big]{\Pi^A_{q(G')} \calS_1^{\otimes n^2} \sum_{\substack{F \in \calF' \\ y \in \domain^{F}}} \delta_{F,y} \ket{y}_F \ket{\perp}_{[B] \setminus F}}^2 \leq \frac{2^{1+H_2(\alpha)\,k / 2}}{d^{(1-\alpha)\,k / 4}} + \frac{1}{2}\norm[\big]{\Pi^A_{q(G')} \calS_1^{\otimes n^2} \sum_{\substack{F \in \calF'' \\ y \in \domain^{F}}} \delta'_{F,y} \ket{y}_F \ket{\perp}_{[B] \setminus F}}^2.
\end{equation}    
\end{lemma}

\begin{proof}
We first recall the definitions in our discussion
preceding the lemma statement.
For each $j\in c(G')$,
define $U^j$ to be the set of row indices of $G'$ in column $j$ and let $k_j=|U_j|$.
Define $\ell_j=\ceil{\gamma n/k_j}$, apply \cref{lem:partition} for each $j$, and let
$V^j_1,\ldots, V^j_{\ell_j}$ be the collection of
disjoint subsets of $[n]$ of size $k_j$ found for each $j$ such that each $k_j\times k_j$ sub-matrix $A_{U^j V^j_i}$ has full rank.

For each $F\in \calF'$ and $i\in c(G')$, define $F^j$ to be the set of 
row indices of elements of $F$ in column $j$;
since $c(G')\subseteq \calL'(F)$, we have
$|F^j|\le \beta\gamma n$.
For each $i\in [\ell_j]$ define
\begin{displaymath}
m^j_i=\sum_{F\in \calF',\ y\in \domain^F} |\delta_{F,y}|^2 \cdot |F^j\cap V^j_i|.
\end{displaymath}
Since $\sum_{F,y} |\delta_{F,y}|^2=1$, $m^j_i$ can be viewed
as the expected size of the overlap between the recorded
queries in the $j$-th column of the matrix $B$ and 
each $V^j_i$.
Since for each $j$, the sets $V^j_i$ are disjoint and
 $|F^j|\le \beta\gamma n$ we have
$\sum_{i\in [\ell_j]} m^j_i \le \beta\gamma n$.
Therefore, for each $j$, we have some index $i_j\in [\ell_j]$ such that
$m^j_{i_j}\le \beta\gamma n/\ell_j \le \beta k_j$.

Since $\sum_{j\in c(G')} k_j=|G'|=k/4$,
the expected total overlap between the recorded queries in
the columns of $G'$ and the chosen sets $V^j_{i_j}$ for
those columns is
$\sum_{j} m^j_{i_j}\le \sum_j \beta k_j =\beta k/4$.
Define $\calF''$ to be the set of $F\in \calF'$ such
that $\sum_{j} |F^j\cap V^j_{i_j}|\ge \alpha k/4 =\beta k$.
By Markov's inequality we have
\begin{equation}
\sum_{F\in \calF'',\ y\in \domain^F} |\delta_{F,y}|^2 \le
\frac{\sum_{j} m^j_{i_j}}{\beta k}\le 1/4.\label{eq:quarter-left}
\end{equation}
We split our analysis for $\calF'$ into two parts due to sets $F$ in $\calF''$ and
$\calF'\setminus\calF''$, respectively.

We begin with $F\in \calF''$.
Write
$\kappa=\sum_{F\in \calF'',\ y\in \domain^F} |\delta_{F,y}|^2 \le 1/4$.
For $F\in \calF''$, define $\delta'_{F,y}=\frac{1}{\sqrt{\kappa}}\delta_{F,y}$.
Then 
$\sum_{F\in \calF'', y\in \domain^F} |\delta'_{F,y}|^2=1$
and 
\begin{align}
\norm[\bigg]{\Pi^A_{q(G')} \calS_1^{\otimes n^2} \sum_{\substack{F \in \calF'' \\ y \in \domain^{F}}} \delta_{F,y} \ket{y}_F \ket{\perp}_{[B] \setminus F}}^2
&=\kappa \ \norm[\bigg]{\Pi^A_{q(G')} \calS_1^{\otimes n^2} \sum_{\substack{F \in \calF'' \\ y \in \domain^{F}}} \delta'_{F,y} \ket{y}_F \ket{\perp}_{[B] \setminus F}}^2\nonumber\\
&\le \frac{1}{4}  \norm[\bigg]{\Pi^A_{q(G')} \calS_1^{\otimes n^2} \sum_{\substack{F \in \calF'' \\ y \in \domain^{F}}} \delta'_{F,y} \ket{y}_F \ket{\perp}_{[B] \setminus F}}^2.
\label{eq:matrix-mult-F''}
\end{align}

We now consider $\calF'\setminus \calF''$.
By definition, for $F\in \calF'\setminus \calF''$,
we have $\sum_{j} |F^j\cap V^j_{i_j}|< \alpha k/4$.
By definition we have $\sum_{j} |V^j_{i_j}|=\sum_j k_j=k/4$
so $F$ must miss more than $(1-\alpha)k/4$ elements
of the set $V=\bigcup_j (V^j_{i_j}\times \set{j})$ of size $k/4$.
For each subset $V'$ of $V$ of size $k/4-\floor{\alpha k/4}$
we define a bucket $\calB_{V'}$ that contains sets $F$ that
must miss the elements of $V'$ and assign each $F\in \calF'\setminus\calF''$ to a unique bucket in an arbitrary fixed way.
There are at most $2^{H_2(\alpha)k/4}$ such buckets.
Then
\begin{align}
    &\norm[\bigg]{\Pi^A_{q(G')} \calS_1^{\otimes n^2} \sum_{\substack{F \in \calF' \setminus \calF'' \\ y \in \domain^{F}}} \delta_{F,y} \ket{y}_F \ket{\perp}_{[B] \setminus F}}^2 \nonumber\\
    &\leq \bigg(\sum_{\substack{V'\subseteq V\\ |V'|=k/4-\floor{\alpha k/4}}} \norm[\bigg]{\Pi^A_{q(G')} \calS_1^{\otimes n^2} \sum_{\substack{F \in \calB_{V'} \\ y \in \domain^{F}}} \delta_{F,y} \ket{y}_F \ket{\perp}_{[B] \setminus F}}\bigg)^2\nonumber\\
    &\leq 2^{H_2(\alpha)\,k / 2} \cdot\sum_{\substack{V'\subseteq V\\ |V'|=k/4-\floor{\alpha k/4}}} \norm[\bigg]{\Pi^A_{q(G')} \calS_1^{\otimes n^2} \sum_{\substack{F \in \calB_{V'} \\ y \in \domain^{F}}} \delta_{F,y} \ket{y}_F \ket{\perp}_{[B] \setminus F}}^2\nonumber\\
    &= 2^{H_2(\alpha)\,k / 2} \cdot\sum_{\substack{V'\subseteq V\\ |V'|=k/4-\floor{\alpha k/4}}} \norm[\bigg]{\Pi^A_{q(G')} \calS_1^{\otimes n^2}\ket{\perp}_{V'} \sum_{\substack{F \in \calB_{V'} \\ y \in \domain^{F}}} \delta_{F,y} \ket{y}_F \ket{\perp}_{[B] \setminus (F\cup V')}}^2
    \label{eq:matrix-mult-bucket}
\end{align}
where we first used the triangle inequality
followed by Jensen's inequality.

Now, applying the $\calS_1^{\otimes n^2}$ operator in (\ref{eq:matrix-mult-bucket}) will convert
the $\ket{\perp}_{V'}$ to a uniform superposition of 
all $\ket{y'}_{V'}$ for all $y'\in \domain^{V'}$ and convert
$\sum_{\substack{F \in \calB_{V'} \\ y \in \domain^{F}}} \delta_{F,y} \ket{y}_F \ket{\perp}_{[B] \setminus (F\cup V')}$
to some superposition of $\ket{y''}\in \domain^{[B]\setminus V'}$ with amplitudes some $\delta_{V',y''}$ such that
$\sum_{y''} |\delta_{V',y''}|^2 = \sum_{F\in \calB_{V'}, y\in \domain^F} |\delta_{F,y}|^2$.
Therefore, we can rewrite (\ref{eq:matrix-mult-bucket}) as
\begin{equation}
2^{H_2(\alpha)\,k / 2} \cdot \sum_{\substack{V'\subseteq V\\ |V'|=k/4-\floor{\alpha k/4}}} \norm[\bigg]{\Pi^A_{q(G')} \bigg[\sum_{y'\in \domain^{V'}} \frac{1}{\sqrt{d^{|V'|}}}\ket{y'}_{V'}\bigg] \otimes \sum_{y'' \in \domain^{[n]\setminus V'}} \delta_{V',y} \ket{y}_{[B] \setminus V'}}^2.
\label{eq:matrix-mult-after-S1}
\end{equation}

We now consider the application of $\Pi^A_{q(G')}$.
Let $V'_j\subseteq V^j_{i_j}$ be the set of row indices
in column $j$ of $V'\subseteq [B]$ and consider the corresponding set of columns in $A$.   
Since $A_{U^j V^j_{i_j}}$ has full rank, there is a subset
$U^j_0\subseteq U^j$ with $|U^j_0|=|V'_j|$ so that 
$A_{U^j_0 V'_j}$ also has full rank.
Now define $G'_0\subseteq G'$ to be $\bigcup_{j\in c(G)} (U_j\times \set{j})$ which has size $|V'|$.

For each $j$, the outputs in $U_j\times \set{j}\subset [C]$
can be expressed as the matrix-vector product $A_{U^j_0 V'_j} y^j_{V'_j}+ M'$ for some $|V'_j|\times |V'_j|$ matrix $M'$ defined by the product of the $U^j_0\times ([n]\setminus V'_j)$ submatrix of the fixed matrix $A$
and $y^j_{[n]\setminus V'_j}$. 
Since $A_{U^j_0 V'_j}$ is full rank, for each value of $M'$
given by $y^j_{[n]\setminus V'_j}$,
there is precisely one value of $y^j_{V'_j}$ that will
yield the output values $q(U_j\times \set{j})$.
Therefore, putting the properties for the columns of $c(G')$ 
together, there is precisely one value $y'\in \domain^{V'}$
that will yield the output values $q(G'_0)$.

It follows that, (\ref{eq:matrix-mult-after-S1}) is at most
\begin{align}
&2^{H_2(\alpha)\,k / 2} \cdot \sum_{\substack{V'\subseteq V\\ |V'|=k/4-\floor{\alpha k/4}}} \norm[\bigg]{\Pi^A_{q(G'_0)} \bigg[\sum_{y'\in \domain^{V'}} \frac{1}{\sqrt{d^{|V'|}}}\ket{y'}_{V'}\bigg] \otimes \sum_{y'' \in \domain^{[n]\setminus V'}} \delta_{V',y} \ket{y}_{[B] \setminus V'}}^2\nonumber\\
&=2^{H_2(\alpha)\,k / 2} \cdot \sum_{\substack{V'\subseteq V\\ |V'|=k/4-\floor{\alpha k/4}}} \norm[\bigg]{\frac{1}{\sqrt{d^{|V'|}}} \sum_{y'' \in \domain^{[n]\setminus V'}} \delta_{V',y} \ket{y}_{[B] \setminus V'}}^2\nonumber\\
&=2^{H_2(\alpha)\,k / 2} \cdot \sum_{\substack{V'\subseteq V\\ |V'|=k/4-\floor{\alpha k/4}}} \frac{1}{d^{|V'|}} \sum_{y'' \in \domain^{[n]\setminus V'}} |\delta_{V',y}|^2\nonumber \\
&=2^{H_2(\alpha)\,k / 2} \cdot \sum_{\substack{V'\subseteq V\\ |V'|=k/4-\floor{\alpha k/4}}} \frac{1}{d^{|V'|}} \sum_{F\in \calB_{V'}, y\in \domain^F} |\delta_{F,y}|^2\nonumber \\    
    &= 2^{H_2(\alpha)\,k / 2}\cdot \frac{1}{d^{|V'|}} \sum_{F \in \calF' \setminus \calF'', y \in \domain^F} |\delta_{F,y}|^2\nonumber\\
    &\leq 2^{H_2(\alpha)\,k / 2} / d^{(1-\alpha)\,k /4}\label{eq:matrix-mult-F'}
\end{align}
where the last equality follows since the buckets $\calB_{V'}$
partition $\calF'\setminus \calF''$.

We now combine the contributions from $\calF''$ and
$\calF'\setminus \calF''$.
Applying Jensen's inequality together with the bounds in (\ref{eq:matrix-mult-F''}) and (\ref{eq:matrix-mult-F'})  
we obtain that
\begin{align*}
&\norm[\big]{\Pi^A_{q(G')} \calS_1^{\otimes n^2} \sum_{\substack{F \in \calF' \\ y \in \domain^{F}}} \delta_{F,y} \ket{y}_F \ket{\perp}_{[B] \setminus F}}^2 \\
&\le
2\bigg[\norm[\big]{\Pi^A_{q(G')} \calS_1^{\otimes n^2} \sum_{\substack{F \in \calF'\setminus\calF'' \\ y \in \domain^{F}}} \delta_{F,y} \ket{y}_F \ket{\perp}_{[B] \setminus F}}^2 
+\norm[\big]{\Pi^A_{q(G')} \calS_1^{\otimes n^2} \sum_{\substack{F \in \calF'' \\ y \in \domain^{F}}} \delta_{F,y} \ket{y}_F \ket{\perp}_{[B] \setminus F}}^2\bigg] \\
&\le\frac{2^{1+H_2(\alpha)\,k / 2}}{d^{(1-\alpha)\,k / 4}} + \frac{1}{2}\norm[\big]{\Pi^A_{q(G')} \calS_1^{\otimes n^2} \sum_{\substack{F \in \calF'' \\ y \in \domain^{F}}} \delta'_{F,y} \ket{y}_F \ket{\perp}_{[B] \setminus F}}^2
\end{align*}
as required.
\end{proof}

\begin{corollary}\label{cor:matrix-prod-small-amp}
Let $G'\subseteq [C]$ with $|G'|=k/4$,
$\calF'$ be a set of $F\subseteq [B]$ such that
$c(G')\subseteq \calL'(F)$, and
$\sum_{F\in \calF',y\in \domain^F}
|\delta_{F,y}|^2=1$ for some $\delta_{F,y}$.
Then
    \begin{displaymath}
       \norm[\big]{\Pi^A_{q(G')} \calS_1^{\otimes n^2} \sum_{\substack{F \in \calF' \\ y \in \domain^{F}}} \delta_{F,y} \ket{y}_F \ket{\perp}_{[B] \setminus F}}^2 \leq 2^{2 + H_2(4 \beta)\, k / 2} / d^{(1-4\beta)\,k / 4}.
    \end{displaymath}
\end{corollary}

\begin{proof}
    Let $M$ be the maximum value of 
    \begin{displaymath}
        \norm[\big]{\Pi^A_{q(G')} \calS_1^{\otimes n^2}\mkern-20mu \sum_{F \in \calF',\ y \in \domain^{F}}\mkern-20mu \delta_{F,y} \ket{y}_F \ket{\perp}_{[B] \setminus F}}^2
    \end{displaymath}
    over all choices of $\calF'$ and $\delta_{F,y}$ with the
    required
    properties.
    This corollary follows from~\cref{lem:matrix-prod-small-amp-recursive}
    by observing that the right-hand term in \cref{eq:bucket-reduction} multiplied by
    $1/2$ is also upper bounded by $M$ and hence
   $M\le 2^{1 + H_2(4 \beta)\, k / 2} / d^{(1-4\beta)\,k / 4}+M/2$.
\end{proof}

Finally, plugging the bound from \cref{cor:matrix-prod-small-amp} into (\ref{eq:matrix-mult-prelemma}), we obtain that
the probability that $A$ and $B$ are both $(\gamma n,\gamma n)$-rigid and $\calC$ produces $k$ correct output values
for $C=AB$, $\norm[\big]{\Pi_k \Pi_{\text{rigid}} \calS \ket{\phi_t}}^2$, is at most
\begin{displaymath}
    16\ \min(k,n)^{\sqrt{2k}} \left(\frac{4^{H_2(4 \beta)}}{d^{(1-4\beta)}}\right)^{k/4} 
\end{displaymath}
as desired.
\end{proof}

\subsection{Related time-space tradeoff and cumulative memory lower bounds}
Now we use \cref{thm:mat-mult} to prove some related quantum linear algebra lower bounds.
Constructions of matching upper bounds can be found in \cref{sec:query-algs}.

\begin{corollary}\label{cor:mat-square}
    Let $\F$ be a field and $\domain \subseteq \F$ with $d= |\domain|$. If $\calC$ is a quantum circuit that computes the function $f: \domain^{n^2} \to \F^{n^2}$ where $f(A) = A^2$ on all upper triangular inputs in time $T$ and space $S$ 
    with success probability at least $1/T$, then $T$ must be $\Omega(n^3 \sqrt{\log d\ /S})$.
\end{corollary}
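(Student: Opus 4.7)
The plan is to prove this corollary by a direct reduction from the matrix multiplication lower bound in \cref{thm:mat-mult}. The standard trick for reducing $n \times n$ matrix multiplication to $(3n) \times (3n)$ matrix squaring on upper triangular inputs is to embed two arbitrary input matrices $B, C \in \domain^{n \times n}$ into a $3n \times 3n$ block upper triangular matrix of the form
\begin{equation*}
  M = \begin{bmatrix} 0 & B & 0 \\ 0 & 0 & C \\ 0 & 0 & 0 \end{bmatrix},
\end{equation*}
so that
\begin{equation*}
  M^2 = \begin{bmatrix} 0 & 0 & BC \\ 0 & 0 & 0 \\ 0 & 0 & 0 \end{bmatrix}.
\end{equation*}
The upper right $n \times n$ block of $M^2$ is exactly the matrix product $BC$.

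First, I would observe that a quantum circuit $\calC$ computing $f(A) = A^2$ on upper triangular $n \times n$ inputs over $\domain$ in time $T$ and space $S$ with success probability at least $1/T$ immediately yields a quantum circuit computing matrix multiplication on $(n/3) \times (n/3)$ inputs with the same time and space bounds. The reduction is cost-free in queries because each entry of $M$ is either a fixed constant ($0$) or a specific entry of one of the two input matrices $B, C$, so a query to $M$ can be simulated by at most one query to $(B, C)$ (or answered without a query at all for the constant entries). Likewise, recovering the output $BC$ amounts to reading off the entries of $M^2$ in the upper right $n/3 \times n/3$ block, which is a subset of the outputs of $f(M)$.

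Next, I would apply \cref{thm:mat-mult} to the resulting matrix multiplication circuit on $(n/3) \times (n/3)$ matrices. That theorem gives a lower bound of $T = \Omega((n/3)^3 \sqrt{\log d / S})$, which is $\Omega(n^3 \sqrt{\log d / S})$, completing the proof.

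There is essentially no main obstacle here beyond confirming that the reduction preserves the success probability and the resource bounds. The only mild subtlety is ensuring that inputs to the matrix squaring circuit sampled according to the hard distribution for matrix multiplication (uniform $B, C$ padded into $M$) indeed produce a valid upper triangular instance, which is immediate by construction since all entries on or below the block diagonal of $M$ are set to $0$. Thus the reduction goes through verbatim and the claimed time-space tradeoff follows.
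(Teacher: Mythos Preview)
Your proposal is correct and essentially identical to the paper's own proof: the paper constructs the same $3n \times 3n$ block upper triangular matrix $M$ (with your $B,C$ called $A,B$ there), notes that the top-right $n\times n$ block of $M^2$ is the product, and applies \cref{thm:mat-mult}. The extra remarks you make about query simulation being free and the upper-triangular structure are fine elaborations but add nothing beyond what the paper's one-line reduction already implies.
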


\begin{proof}
    Let $A,B \in \domain^{n^2}$ and construct the $3n \times 3n$ matrix:
    \begin{equation*}
        M = \begin{bmatrix} 0 & A & 0 \\ 0 & 0 & B \\ 0 & 0 & 0\end{bmatrix}
    \end{equation*}
    Since the top right $n \times n$ sub-matrix of $M^2$ is equal to the product $AB$, we get a reduction from matrix multiplication and can apply \cref{thm:mat-mult} to derive the lower bound.
\end{proof}

Using \cref{prop:TS-to-cm-lb} we can also bound the cumulative memory complexity for these problems.
\begin{corollary}
    Let $\F$ be a field and $\domain \subseteq \F$ with $d=|\domain|$. If $\calC$ is a quantum circuit that computes the function $f: \domain^{2n^{2}} \to \F^{n^2}$ given by $f(A,B) = AB$ or the function $g: \domain^{n^2} \to \F^{n^2}$ given by $f(A) = A^2$, then $\calC$ must have cumulative memory complexity $\Omega(n^6 \log (d) \ / T)$.
\end{corollary}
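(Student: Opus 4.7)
The plan is to mirror the derivation of \cref{cor:mat-vec-cm-lb} from \cref{lem:matvec}, but now feeding the strong direct-product-style bound of \cref{lem:mat-mul-tail} into the transfer result \cref{prop:TS-to-cm-lb}. Once this is done for matrix multiplication, the matrix squaring case follows from the reduction in \cref{cor:mat-square} (the $3n\times 3n$ block-upper-triangular construction whose square has $AB$ in its top-right block), which only inflates parameters by constant factors.

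To line up \cref{lem:mat-mul-tail} with the template of \cref{prop:TS-to-cm-lb}, I would fix $\beta=0.0429$ and read off the parameters $m=n^2$, $h_1(n)=\beta\gamma n/\sqrt{2}$, $\Delta=\tfrac{1}{2}$ (so $h(k,n)=k^{1/2}h_1(n)$), $m'(n)=\gamma n$, and let $\mu$ be the uniform distribution on $\domain^{2n^2}$. The success probability bound from \cref{lem:mat-mul-tail} is $16\min(k,n)^{\sqrt{k/2}}\,d^{-k/24}$, conditional on $A,B$ being $(\gamma n,\gamma n)$-rigid; by \cref{prop:rigid-matrices} the non-rigid inputs contribute an exponentially small additive error that is absorbed into the hypothesis for $T$-polynomially bounded error of \cref{prop:TS-to-cm-lb}.

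The one non-routine point is that the prefactor $\min(k,n)^{\sqrt{k/2}}$ is not constant, whereas \cref{prop:TS-to-cm-lb} assumes the failure probability has the shape $C\cdot K(n)^{-k}$. I would handle this exactly as in the proof of \cref{thm:mat-mult}: for every sufficiently large $k$ we have $\sqrt{k/2}\,\log_2 k \le k/48$, so $\min(k,n)^{\sqrt{k/2}} \le d^{k/48}$, and thus the bound is at most $C\cdot (d^{1/48})^{-k}$ after renormalizing $d^{-k/24}$ into $d^{-k/48}\cdot d^{-k/48}$; for the finitely many small $k$ below this threshold the bound trivially holds by adjusting the absolute constant $C$. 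This lets me legitimately set $K(n)=d^{1/48}$ while keeping $m'(n)=\Theta(n)=\omega(\log n)$.

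Plugging these parameters into \cref{prop:TS-to-cm-lb} yields a cumulative memory lower bound of
\begin{equation*}
\Omega\!\left(\min\!\Bigl(\tfrac{(m\,h_1(n))^{2}\,\log K(n)}{T},\; m'(n)^{3/2}\,h_1(n)\,\log K(n)\Bigr)\right)
= \Omega\!\left(\min\!\bigl(\tfrac{n^{6}\log d}{T},\; n^{5/2}\log d\bigr)\right),
\end{equation*}
which, combined with the trivial upper bound $T\le n^{3}$ beyond which the bound for matrix multiplication is vacuous, gives the claimed cumulative memory bound; pushing this through the reduction of \cref{cor:mat-square} then handles $f(A)=A^{2}$. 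The main obstacle, and essentially the only place the argument goes beyond routine substitution, is justifying that the $\min(k,n)^{\sqrt{k/2}}$ prefactor in \cref{lem:mat-mul-tail} can be absorbed into $K(n)^{-k}$ so that \cref{prop:TS-to-cm-lb} applies as a black box.
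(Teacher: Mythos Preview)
Your approach is the same as the paper's---plug \cref{lem:mat-mul-tail} into \cref{prop:TS-to-cm-lb}, then use the block-embedding reduction from \cref{cor:mat-square} for squaring---but you mis-set one parameter, and it breaks the conclusion.

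You take $m'(n)=\gamma n$, apparently by analogy with \cref{cor:mat-vec-cm-lb}. In the matrix--vector case the rigidity hypothesis of \cref{lem:matvec} genuinely restricts the lemma to $k\le g(m)=\gamma n$, so $m'(n)=\gamma n$ is forced. But \cref{lem:mat-mul-tail} has no such restriction on $k$: it holds for every $k$ up to the number of outputs $n^{2}$. With your choice the second term of the minimum in \cref{prop:TS-to-cm-lb} is
\[
m'(n)^{1+\Delta}\,h_1(n)\,\log K(n)\;=\;\Theta\bigl(n^{3/2}\cdot n\cdot \log d\bigr)\;=\;\Theta(n^{5/2}\log d),
\]
and for every $T\le n^{7/2}$ this is smaller than the first term $n^{6}\log d/T$. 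So your min collapses to $\Omega(n^{5/2}\log d)$, which does not give the claimed bound; your appeal to ``$T\le n^{3}$'' does not help, since that regime is exactly where the second term dominates.

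The paper sets $m'(n)=\Theta(n^{2})$, which makes the second term $\Theta(n^{4}\log d)$. Since trivially $T\ge n^{2}$, the first term satisfies $n^{6}\log d/T\le n^{4}\log d$, so it always wins the minimum and one obtains $CM=\Omega(n^{6}\log d/T)$ as stated in the paper's proof. Your treatment of the $\min(k,n)^{\sqrt{k/2}}$ prefactor (absorbing it into $d^{k/48}$ exactly as in the proof of \cref{thm:mat-mult}), of the non-rigid event via \cref{prop:rigid-matrices}, and of the squaring reduction is fine and matches the paper.
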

\begin{proof}
    For $f$, we apply \cref{prop:TS-to-cm-lb} with \cref{lem:mat-mul-tail} where $m'$ is $\Theta(n^2)$, $\Delta$ is $1/2$, $h_1(n)$ is $\Theta(n)$, $K(n) = d^{-1/48}$, $C=16$.
    This gives us that the cumulative memory complexity is $\Omega(n^6 \log (d) \ / T)$.
    Using the same reduction as in \cref{cor:mat-square}, this same lower bound applies to computing $g$.
\end{proof}
\section{Quantum tradeoffs for Boolean
matrix operations}\label{sec:quantum-boolean}

In this section we focus on Boolean matrix operations, which
use $(AND,OR)$ inner product of vectors rather than 
the usual $(+,\times)$ inner
product.
We denote this Boolean inner product of vectors $u$ and $v$  by $u\bullet v$ and extend this notation to Boolean matrix-vector
product and Boolean matrix multiplication.
For $u,v\in \set{0,1}^n$, $u\bullet v=1$ if and only if the subsets of $[n]$ encoded by $u$ and $v$ intersect, so
the problems of computing Boolean matrix multiplication
and Boolean matrix-vector product can be seen as computing many
correlated copies of the set disjointness problem. 

\subsection{Tradeoffs for Boolean matrix multiplication}

Unlike what we have shown for algebraic problems, as noted in~\cite{KSdW07}, quantum algorithms for Boolean matrix multiplication have
better time-space tradeoff properties than their classical counterparts.

\begin{proposition}
\label{prop:grover-matrix}
  For any $c>0$, there are quantum circuits computing $n\times n$ Boolean matrix multiplication $A\bullet B$ with error at most $n^{-c}$ using space $O(\log n)$ and a number of queries $T$ that is $O(n^{2.5}\log n)$.
\end{proposition}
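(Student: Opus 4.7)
The plan is to apply Grover search entry-by-entry. Recall that for the $(i,j)$ entry of $A\bullet B$, we have $(A\bullet B)_{ij}=1$ iff there exists $k\in[n]$ with $A_{ik}=B_{kj}=1$. This is precisely an OR of $n$ terms, each of which is evaluated with two input queries, so Grover's algorithm solves it with $O(\sqrt{n})$ quantum queries using only $O(\log n)$ qubits of workspace (just an index register together with the query-answer qubit and the workspace of the Grover diffusion operator).

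Concretely, I would proceed as follows. First, set a per-entry error target of $\delta = n^{-(c+2)}$. Standard amplification of bounded-error Grover search to error $\delta$ requires $O(\log(1/\delta)) = O(\log n)$ independent repetitions (taking the OR of their outcomes, since false negatives are the only concerning error when searching for an existing marked item, and false positives can be verified classically with one additional round of search or directly by a single additional query once a candidate $k$ is found). Thus a single entry is computed correctly with probability at least $1-n^{-(c+2)}$ using $O(\sqrt n \log n)$ queries and $O(\log n)$ qubits of memory, with the computed bit then written out and the workspace reset before moving to the next entry.

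Next, I would iterate this routine over all $n^2$ output positions $(i,j)\in [n]\times[n]$, using the same $O(\log n)$ qubits of workspace each time. A union bound over the $n^2$ entries bounds the total error by $n^2\cdot n^{-(c+2)} = n^{-c}$, as required. The total number of queries is $n^2 \cdot O(\sqrt n \log n) = O(n^{2.5}\log n)$, and because each Grover subroutine is executed in isolation with its output written to the output register between subroutines, the space usage remains $O(\log n)$ throughout.

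There is no real obstacle here beyond bookkeeping: the only subtlety is ensuring that amplification preserves the one-sided nature of the search (so that error can be driven below $1/n^{c+2}$ using only $O(\log n)$ repetitions rather than $\operatorname{poly}(n)$). This is standard because a candidate index $k$ returned by Grover can be checked with two additional queries to $A_{ik}$ and $B_{kj}$, so we can classically verify and reject false positives at negligible cost, leaving only the exponentially decaying probability of missing a marked element across $O(\log n)$ independent trials.
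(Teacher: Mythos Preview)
Your proposal is correct and follows essentially the same approach as the paper: compute each of the $n^2$ entries by Grover search on the length-$n$ OR with per-entry error $n^{-(c+2)}$ via $O(\log n)$ repetitions, then union bound. The paper's proof is a one-sentence version of exactly this argument.
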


\begin{proof}
    Fix $c>0$.  
    Each
    of the $n^2$ entries in the product is a disjointness function of length $n$ that can be computed with error at most
    $n^{-c-2}$ and space $O(\log n)$ using Grover's algorithm in time $O(\sqrt{n}\log n)$
    for error at most $n^{-c}$ overall.
\end{proof}

This is in contrast to the following result of Abrahamson which shows that classical algorithms as fast as this quantum algorithm
require space $\tilde \Omega(n^{0.5})$ rather than
$O(\log n)$.

\begin{proposition}[\cite{DBLP:conf/focs/Abrahamson90}]
\label{prop:abrahamson=boolean}
There is a probability distribution on input matrices and
constants $0<c_1<c_2$ under which the best classical algorithms (branching programs) for Boolean matrix multiplication $A\bullet B$ using time $T$ and space $S$ require
$T\cdot S$ that is
$\begin{cases}
    \Theta(n^{3.5})&\textrm{for }T\le c_1 n^{2.5}\\
    \Theta(n^3)&\textrm{for }T\ge c_2 n^{2.5}.
\end{cases}$
\end{proposition}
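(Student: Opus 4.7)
The plan is to follow the classical Borodin-Cook framework for time-space tradeoffs on branching programs, instantiated with the distribution $\mu$ under which every entry of $A$ and $B$ is independently $1$ with probability $p=1/\sqrt{n}$. Under $\mu$, the rows of $A$ and columns of $B$ have $\Theta(\sqrt{n})$ ones with high probability, and each entry $C_{ij}=\bigvee_k (A_{ik}\land B_{kj})$ takes each value with constant probability. For the matching upper bounds, two algorithms suffice: in the $T\approx n^{2.5}$ regime, compress $A$ and $B$ into sparse row/column lists in $O(\sqrt{n}\log n)$ space and intersect the appropriate lists pairwise in expected $O(\sqrt{n})$ comparisons per output, giving $T\cdot S = O(n^{3.5})$; in the $T\approx n^{3}$ regime, the standard $O(\log n)$-space scan that computes each $C_{ij}$ by iterating over $k$ gives $T\cdot S = O(n^{3}\log n)$, which can be refined to $O(n^3)$ by batching outputs.

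For the lower bound, let $\mathcal P$ be any branching program of length $T$ and width $2^S$. Partition its layers into $L=\lceil T/q\rceil$ consecutive stages of length $q$; the configuration at each stage boundary lies in one of $2^S$ states. The proof reduces to a combinatorial output-counting claim: for every fixed starting configuration, the expected number of entries of $A\bullet B$ that a deterministic $q$-query computation outputs correctly under $\mu$ is at most some $k(q)$. If the whole program succeeds with constant probability, then a union/averaging argument forces $2^S\cdot L\cdot k(q)\geq \Omega(n^2)$, and rearranging gives the desired $T\cdot S$ lower bounds after optimizing $q$.

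The main obstacle, and the heart of Abrahamson's argument, is the output-counting lemma. The key is to separately bound the number of $1$-outputs and $0$-outputs that a stage can commit to correctly. To commit to $C_{ij}=1$ the stage must have read some witness pair $(A_{ik},B_{kj})$ with both equal to $1$; under $\mu$ a random entry pair is a witness with probability $p^{2}=1/n$, so $q$ queries create at most $O(q^{2}/n)$ witness pairs in expectation, capping $1$-outputs. To commit to $C_{ij}=0$ the stage must effectively rule out witnesses for all $n$ values of $k$, i.e.\ for each $k$ it needs either $A_{ik}=0$ or $B_{kj}=0$; a careful amortization, tracking how many rows of $A$ and columns of $B$ are sufficiently explored and exploiting that each query touches only a single row or column, shows that $q$ queries can certify this for at most $O(q/\sqrt{n})$ output entries in expectation. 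Combining gives $k(q)=\widetilde O(q^{2}/n+q/\sqrt{n})$. Plugging into $2^{S}L\cdot k(q)\geq \Omega(n^{2})$ with $L=T/q$ and optimizing over $q$ yields $T\cdot S=\Omega(n^{3.5})$ in the fast-time regime $T\leq c_1 n^{2.5}$ (where the $q/\sqrt{n}$ term dominates and $0$-outputs are the binding constraint) and $T\cdot S=\Omega(n^{3})$ in the slower regime $T\geq c_2 n^{2.5}$ (where the $q^{2}/n$ term dominates), with $T\approx n^{2.5}$ being the crossover at which the two bounds meet.
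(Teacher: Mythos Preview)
The paper does not prove this proposition; it is quoted from Abrahamson's 1990 paper purely as background, with no proof given. So there is no ``paper's own proof'' to compare against, and your sketch must be judged on its own merits.

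You have the right hard distribution (the paper itself confirms it is i.i.d.\ Bernoulli$(1/\sqrt n)$) and the right high-level framework, but there are two genuine gaps. First, the inequality $2^{S}\cdot L\cdot k(q)\ge\Omega(n^{2})$ is not what the Borodin--Cook decomposition produces, and it does not give a $T\cdot S$ bound at all: substituting $L=T/q$ and $k(q)=q/\sqrt n$ yields only $2^{S}T\ge n^{2.5}$, which is vacuous once $S\ge 3\log n$. Space does not enter as a multiplicative $2^{S}$ on the left-hand side; the correct mechanism is to prove a \emph{tail} bound---that a fixed $q$-query procedure produces $\ge k$ correct outputs with probability $2^{-\Omega(k)}$---then union-bound over the $2^{S}$ start states to force $k=\Theta(S)$, after which the accounting is $(T/q)\cdot k\ge n^{2}$ and the tradeoff comes from how large $q$ may be as a function of $k$.

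Second, your expected-count lemma is false as stated. Under $\mu$ each output $C_{ij}$ equals $0$ with probability $(1-1/n)^{n}\approx 1/e$, so a zero-query procedure that blindly outputs $0$ everywhere is correct on $\Theta(n^{2})$ entries in expectation, contradicting $k(q)=O(q/\sqrt n)$. Your ``certify every $k$'' reasoning implicitly assumes worst-case correctness (the branching program must be right on \emph{all} inputs reaching that node), but the proposition concerns average-case performance under $\mu$, where nothing forces certification. Abrahamson's actual argument therefore cannot work with expected counts; it needs the exponentially small tail probability that many outputs are \emph{simultaneously} correct, and controlling the $0$-outputs in that setting is considerably more delicate than the amortized certification count you outline.
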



For quantum circuits, Klauck, \v{S}palek, and de Wolf~\cite{KSdW07} proved the following 
time-space tradeoff lower bound which proves that the
quantum algorithm in \cref{prop:grover-matrix} is nearly optimal when the space
$S$ is $O(\log n)$.

\begin{proposition}[\cite{KSdW07}]
\label{prop:ksdw}
Any bounded error quantum circuit that computes the $n \times n$ Boolean matrix multiplication $A\bullet B$ with $T$ queries and space $S$ requires $T$ to be $\Omega(n^{2.5}/S^{0.5})$.
\end{proposition}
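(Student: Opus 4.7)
The plan is to follow the Borodin--Cook partition paradigm, adapted to quantum circuits. I would divide the $T$-query, $S$-space circuit into $T/T_b$ consecutive blocks of $T_b$ queries each, where at most $S$ qubits of input-dependent state cross each block boundary by definition of space. Treating each block as a quantum algorithm with an $S$-qubit advice state, \cref{prop:quant-union} lets me replace the advice with a fixed initial state at the cost of a $2^{2S}$ factor in success probability, reducing the analysis to a per-block bound for no-advice quantum algorithms.

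The engine of the argument is the strong direct product theorem (SDPT) for $\mathrm{OR}_n$ from \cite{KSdW07}: any no-advice quantum algorithm making fewer than $c\sqrt{nm}$ queries solves $m$ pairwise disjoint instances of $\mathrm{OR}_n$ with probability at most $2^{-\Omega(m)}$. The key observation is that every output entry $C[i,j]$ of $A \bullet B$ equals the set-disjointness of row $i$ of $A$ with column $j$ of $B$; consequently, any set of output positions whose row and column indices form a matching of size $m$ yields $m$ copies of disjointness (equivalently $\mathrm{OR}_n$) on pairwise disjoint sets of input bits, exactly the setup to which SDPT applies.

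To turn this into an output count per block, I would run a combinatorial extraction step: from any set of $L$ correctly produced output positions, extract a large matching in the bipartite row/column graph. K\"onig's theorem guarantees a matching of size at least $L/n$, so SDPT plus the $2^{2S}$ advice overhead bounds the probability that the block produces $L$ correct entries by $2^{2S - \Omega(\text{matching size})}$, provided $T_b$ stays below the corresponding SDPT query threshold. Requiring this to be smaller than $1/T$ forces the matching size to be $\Omega(S)$ and gives an upper bound on the per-block correct-output count $L$ in terms of $T_b$ and $S$. Summing across blocks, $(T/T_b)\cdot L \ge n^2$ gives the final tradeoff.

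The main obstacle will be the careful quantitative balance between (i) the $2^{2S}$ advice penalty, (ii) the SDPT query threshold $c\sqrt{nm}$, and (iii) the extraction that turns a matching lower bound into an upper bound on $L$, so that the three ingredients together yield $T^2 S = \Omega(n^5)$. The extraction step in particular must produce disjoint $\mathrm{OR}_n$ instances from any correctly produced subset at a rate strong enough to absorb the $n^2$ total outputs into few enough blocks; this is precisely the step that the paper's improved \cref{thm:quantum-booolean-matrix} sharpens via a new coloring argument, pushing the denominator from $S^{1/2}$ down to $S^{1/3}$.
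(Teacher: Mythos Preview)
Your high-level template (Borodin--Cook partition, \cref{prop:quant-union} to kill the advice, SDPT for OR as the engine) is right, but the combinatorial extraction step is too weak to reach $T^2S=\Omega(n^5)$.

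The problem is the matching extraction. K\"onig's theorem only guarantees a matching of size $\lceil L/n\rceil$ from $L$ output positions, and this is tight: if all $L$ outputs lie in a single row (which a circuit with fixed output schedule is free to do), the matching has size $1$ and SDPT gives you nothing. Even if you grant the worst-case bound $m\ge L/n$, the arithmetic collapses: to make $2^{2S-\gamma m}$ small you need $m=\Theta(S)$, hence $L=\Theta(Sn)$; the SDPT query threshold for $m$ copies of full-size $\mathrm{OR}_n$ is $\varepsilon m\sqrt{n}$ (not $\varepsilon\sqrt{mn}$ as you wrote), so $T_b=\Theta(S\sqrt n)$; and then $(T/T_b)\cdot L\ge n^2$ yields only $T=\Omega(n^{1.5})$, a factor of $n/\sqrt S$ short.

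What the paper (following \cite{KSdW07}) actually does is embed $\mathrm{OR}_{n/k}^{\,k}$, not $\mathrm{OR}_n^{\,m}$, into \emph{any} set of $k$ output positions---including the all-in-one-row case. The trick is to partition $[n]$ into $k$ blocks of size $n/k$, assign each output position its own block, and then \emph{restrict both $A$ and $B$} so that the $(i,j)$-th output becomes an OR over just the $n/k$ unrestricted bits in its block. This is the ``trivial $k$-coloring'' in the paper's later terminology. Now SDPT gives the threshold $\varepsilon k\sqrt{n/k}=\varepsilon\sqrt{kn}$, and with $k=\Theta(S)$ you get $T_b=\Theta(\sqrt{Sn})$, each block produces at most $\Theta(S)$ outputs, and $(T/\sqrt{Sn})\cdot S\ge n^2$ gives $T=\Omega(n^{2.5}/\sqrt S)$. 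The improved $L$-coloring in \cref{lem:coloring} then sharpens this baseline, not your matching step.
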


A key difference between the methods used in Abrahamson's bounds and our results for linear algebra versus those in this proof
is that we require that the set of output values produced in each
part of the computation is fixed independent of the input. 
(See our discussion of such output-oblivious computation in~\cref{subsec:prelim-ts-tradeoffs}.)
Such an assumption was essential for the quantum time-space lower bounds in \cite{KSdW07, ASdW09}, although the bound for multiple disjoint collision pairs in \cite{HM21} and our results in \cref{sec:mat-vec,sec:mat-mult} apply to quantum query algorithms without such a restriction on output production.
Fixing the output values produced in each part of the computation allows one to go beyond using a single hard distribution
on inputs, and instead choose hard distributions for each part of the computation
depending on the target outputs.
To give a sense of how this works we sketch the lower bound method of~\cite{KSdW07} for Boolean matrix multiplication, which relies on a strong direct product lemma for the function $OR_n^k$ (i.e. $k$ independent copies of the $OR$ function each on inputs of size $n$):

\begin{proposition}[Strong Direct Product Theorem for $OR^k_n$~\cite{KSdW07}]
\label{prop:SDPT}
There are positive constants $\varepsilon$ and $\gamma$ such that the following
hold:
\begin{description}
    \item{(a)}   Any randomized algorithm making at most $\varepsilon kn$ queries
    has success probability at most $2^{-\gamma k}$ in computing $OR^k_n$.
    \item{(b)}   Any quantum algorithm making at most $\varepsilon k\sqrt{n}$ queries
    has success probability at most $2^{-\gamma k}$ in computing $OR^k_n$.
\end{description}
\end{proposition}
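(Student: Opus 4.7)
The plan is to work with a hard distribution $\mu$ on inputs to $OR^k_n$ in which, for each of the $k$ blocks $B_j \in \{0,1\}^n$ independently, $B_j$ is the all-zero string with probability $1/2$ and otherwise contains a single $1$ placed at a uniformly random position. Under $\mu$ the per-block target bits are independent fair coins, so the overall target string is uniform on $\{0,1\}^k$. By Yao's minimax principle, it suffices to prove the bound for deterministic algorithms against $\mu$.

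For part (a), I would argue as follows. Fix a deterministic algorithm that makes at most $\varepsilon kn$ queries, and let $T_j$ denote the number of queries it makes inside block $j$ (a random variable), so $\sum_j T_j \leq \varepsilon kn$ pointwise. Given that $t_j$ queries are spent in block $j$, the best possible conditional success probability on block $j$ is $1/2 + t_j/(2n)$: outputting $1$ exactly when a $1$ has been located catches the singleton case with probability $t_j/n$ and is always right in the all-zero case. Because the blocks are independent under $\mu$ and queries outside block $j$ carry no information about $B_j$, a block-by-block conditioning argument (equivalently, a martingale exposure of the blocks in a fixed order) bounds the overall success probability by $\mathbb{E}_\mu\!\prod_j (1/2 + T_j/(2n))$. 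By AM-GM applied pointwise under the constraint $\sum_j T_j \leq \varepsilon kn$, this is at most $(1/2 + \varepsilon/2)^k$, which is $2^{-\gamma k}$ for $\gamma = \log_2(2/(1+\varepsilon)) > 0$ whenever $\varepsilon < 1$.

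For part (b), I would use the polynomial method. The acceptance probability of a $T$-query quantum algorithm on $\{0,1\}^{kn}$ is a multilinear polynomial $p$ of degree at most $2T$ in the $kn$ input bits. Symmetrizing $p$ independently over the coordinates in each block yields a polynomial $q(s_1, \ldots, s_k)$ of total degree at most $2T$, where $s_j \in \{0, 1, \ldots, n\}$ is the Hamming weight of block $j$; moreover $0 \leq q \leq 1$ on the whole box $[0,n]^k$. The success probability against $\mu$ can be written as a fixed linear combination of values of $q$ on $\{0,1\}^k$, one term per target string in $\{0,1\}^k$. To finish, I would apply an iterated Paturi-type inequality variable by variable: for any univariate polynomial of degree $d$ bounded by $1$ on $[0,n]$, the value at $1$ is close to the value at $0$ unless $d = \Omega(\sqrt{n})$, with the "closeness" controlled by Chebyshev-type estimates. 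Iterating this bound across the $k$ coordinates, a polynomial of degree $2T = 2\varepsilon k\sqrt{n}$ cannot simultaneously match $k$ independent unbiased target bits on $\{0,1\}^k$ with average success better than $2^{-\gamma k}$ for a suitable $\gamma > 0$ depending on $\varepsilon$.

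The main obstacle is the quantum direct product step. A single OR instance has advantage bounded by $O(t/\sqrt{n})$ after $t$ queries, but naively multiplying these bounds across blocks is invalid in the quantum setting because queries can be superposed across blocks and an adaptive algorithm can in principle shift query budget between blocks based on interference patterns. Turning the per-block bound into a true product bound is exactly the content of the strong direct product theorem, and requires either the iterated multivariate Paturi/Chebyshev analysis sketched above, the eigenspace-adversary approach used in the original proof of \cite{KSdW07}, or Sherstov's pattern-matrix method, which recasts the problem as a communication discrepancy bound for which clean direct sum/product theorems are available. I expect the pattern-matrix route to give the shortest write-up, while the Paturi-based approach is the most elementary but requires care to handle the multivariate symmetrization uniformly.
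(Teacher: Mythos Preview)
The paper does not contain a proof of this proposition; it is quoted from \cite{KSdW07} and used as a black box. There is therefore no paper proof to compare your attempt against.

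On the substance of your sketch: part (a) is essentially correct, though the adaptivity step deserves more care---the $T_j$ are correlated random variables determined by the algorithm's adaptive choices, so the product bound $\mathbb{E}\prod_j(1/2+T_j/(2n))$ needs a genuine martingale/conditioning argument rather than a one-line appeal to block independence. For part (b), you correctly identify this as the hard part and list the right toolbox, but what you have written is a menu of approaches rather than a proof. In particular, the ``iterated Paturi-type inequality variable by variable'' step does not work as stated: a degree-$2T$ multivariate polynomial can place all of its degree in a single coordinate, so restricting coordinates one at a time yields no usable per-variable degree budget, and the univariate Paturi bound gives you nothing to iterate. The actual argument in \cite{KSdW07} is genuinely multivariate (and the later proofs via Sherstov's pattern-matrix method or the adversary eigenspace approach are different again); your proposal names the right techniques but does not supply the missing multivariate step.
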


\begin{proof}[Proof sketch for \cref{prop:ksdw}]
For any integer $k\le n/2$, the function $OR^{k}_{\floor{n/k}}$ can be embedded in any set $E\subseteq [n]\times [n]$ of $k$ outputs of the $n\times n$ Boolean matrix product
$A\bullet B$ as follows:
 Begin by dividing $[n]$ into $k$ blocks $b_1,\ldots,b_k$ each of size $\floor{n/k}$ (together with at most $k-1$ other elements) and associate each $(i,j)\in E$,
 with a distinct index $\ell=\ell(i,j)\in [k]$.
For each $(i,j)\in E$, for $\ell=\ell(i,j)$ set every entry in $A_{i,b_\ell}$ to 1 and set the vector of inputs in $B_{b_\ell,j}$ to the
$\ell$-th block of the input to $OR^{k}_{\floor{n/k}}$.
Set all other bits in $A$ and $B$ to 0.
It is easy to see that the $k$ outputs indexed by $E$ will be
the outputs for $k$ disjoint $OR$ functions on $\floor{n/k}$ bits.

Without loss of generality one can assume that the space bound $S$ is at most $\alpha n$ for some small constant $\alpha>0$ since the number of queries must be $\Omega(n^2)$ in the worst case\footnote{Note that this is not completely obvious since quantum algorithms for some problems may have a sublinear number of queries.}.
Choose $k=cS$ for some suitably large constant $c$ that 
depends on the constant $\gamma$ in \cref{prop:SDPT}. 
Begin by slicing the circuit into layers of $\varepsilon\sqrt{kn}$
queries each.   
There are $\Theta(T/\sqrt{kn})$ such layers.
By \cref{prop:SDPT} and the embedding, any circuit of depth $\varepsilon \sqrt{kn}=\varepsilon k\sqrt{n/k}$ queries can produce $k$ correct output values with
probability only $2^{-\gamma k}$ for some $\gamma>0$.
This is the same depth as each of the layers but each layer also gets an $S$ qubit input-dependent state to begin.
By~\cref{prop:quant-union}, the probability that the resulting
layer can produce $k$ correct output values is at most $2^{S}2^{-\gamma k}$ which is at most $2^{-S}$ if the constant $c$ used in defining $k$ is sufficiently large.

Therefore, the total number of correct output values that can be produced
with probability larger than $2^{-S}$ must be 
$O(T/\sqrt{kn})\cdot k$ which is $O(T\sqrt{S/n})$.
On the other hand this number of outputs produced must be at least $n^2$.
It follows that $T$ must be $\Omega(n^{2.5}/\sqrt{S})$.
\end{proof}

\subsubsection*{Our improved lower bound}

\begin{theorem}
\label{thm:quantum-booolean-matrix}
Any quantum circuit computing $n\times n$ Boolean matrix multiplication $A\bullet B$ with $T$ queries and space $S$ and success probability more than $2^{-S}$
must have 
 $T$ that is $\Omega(n^{2.5}/S^{1/4})$.
\end{theorem}

Though the form of our lower bound may seem somewhat unusual, both the exponent of $n$ and that of $S$ are optimal:  
The algorithm of \cref{prop:grover-matrix} shows that exponent of $n$ is optimal since there is only a gap of $O(\log^{5/4} n)$ for space $\Theta(\log n)$. 
In our quantum query model, at the other end of the scale, an algorithm with space $3n^2$ 
can query and completely remember both matrices in $2n^2$ time and $2n^2$ space, after which a single global unitary transformation will produce the
$n^2$ bits of output needed in the remaining $n^2$ qubits of working memory; hence the exponent of $1/4$ on $S$ cannot be reduced. 

\cref{thm:quantum-booolean-matrix} follows from the following key lemma which improves on the corresponding bound in~\cite{KSdW07} by a factor of $\Theta(k^{1/4})$.

\begin{lemma}
\label{lem:main-quantum-boolean}
There are constants $\varepsilon,\gamma>0$ such that the following holds.
Let  $k< n^2/100$  be an integer.
For any quantum circuit $\calC$ with at most $\varepsilon k^{3/4} n^{1/2}$ 
queries to $x$, the probability that $\calC$ produces $k$ correct output values of $n\times n$ Boolean matrix multiplication $A\bullet B$ is at most $2^{-\gamma k}$.
\end{lemma}

We first see how this lemma suffices for the theorem:

\begin{proof}[Proof of \cref{thm:quantum-booolean-matrix} via \cref{lem:main-quantum-boolean}]
Since there are $n^2$ outputs, it seems that $T\ge n^2$ queries 
are required, but that isn't quite obvious. 
Nonetheless, we can, for example, derive a $T=\Omega(n^2)$ lower bound by applying~\cref{lem:main-quantum-boolean} with $k=n^2/101$ which shows that a circuit with at most some $\beta n^2$ 
queries can only achieve exponentially small success probability for producing a small fraction of the output.
Therefore without loss of generality we can assume that $\sqrt{S}< \alpha n$ for some arbitrarily small constant $\alpha>0$. 
Let $\varepsilon$ and $\gamma$ be the constants from \cref{lem:main-quantum-boolean}.
Let $c=2/\gamma$ and define $k=cS$. 
Therefore for $\alpha\le  1/(10\sqrt{c})$ we obtain
that $5\sqrt{k}=5\sqrt{cS}< n/2$.
By \cref{lem:main-quantum-boolean}, since $k< n^2/100$, any quantum
query algorithm with at most $\varepsilon k^{3/4}n^{1/2}$ queries
has success probability at most $2^{-\gamma k}=2^{-2S}$ of producing $k$ correct output values. 

We prove the contrapositive of the theorem statement:  Suppose that $T\le \varepsilon n^{2.5}/(cS)^{1/4}=\varepsilon n^{2.5}/k^{1/4}$.
When we divide
$\calC$ into layers with $\varepsilon k^{3/4}n^{1/2}$
quantum queries each, there are at most $n^2/k$ layers.
Since there are a total of $n^2$ outputs, there must be some layer $i$
during which at least $k$ outputs are produced.
Let $E$ be the set of the first $k$ outputs produced in layer $i$.
By the argument above
since the space is at most $S$, by \cref{prop:quant-union}
the probability that 
 these $k$ outputs are correct given the $S$ qubits of input-dependent initial state at the beginning of layer $i$ is at most $2^{S}$ times larger than that of a circuit without them and the
 same number of queries, which is at most 
 $2^{S}\cdot 2^{-2S}=2^{-S}$ which is what we needed to show.   
\end{proof}

The main idea behind the proof of this key lemma is
an improved method for embedding the direct product of $OR$
functions into outputs of the Boolean matrix multiplication
problem;
this uses the following definition of an $L$-coloring
of subsets of $[n]\times [n]$.

\begin{definition} 
\label{dfn:L-coloring}
For $E\subseteq [n]\times [n]$ an $L$-\emph{coloring} of
$E$ is a map $\chi:E\rightarrow [L]$ such that
\begin{itemize}
    \item 
within each color class either all rows are distinct or all columns are distinct, and
\item for each color $\ell$ there is a rectangle given by sets $R_\ell\subseteq [n]$ of
rows and $C_\ell\subseteq[n]$ of columns such that the set of
points of color $\ell$ is precisely $E\cap (R_\ell\times C_\ell)$.  
\end{itemize}    
(Note that the rectangles $R_\ell\times C_\ell$ may overlap, but their overlap must not
contain any points in $E$, see \cref{fig:coloring-example}.)
\begin{figure}[t]
    \centering
    \includegraphics[width=0.7\linewidth]{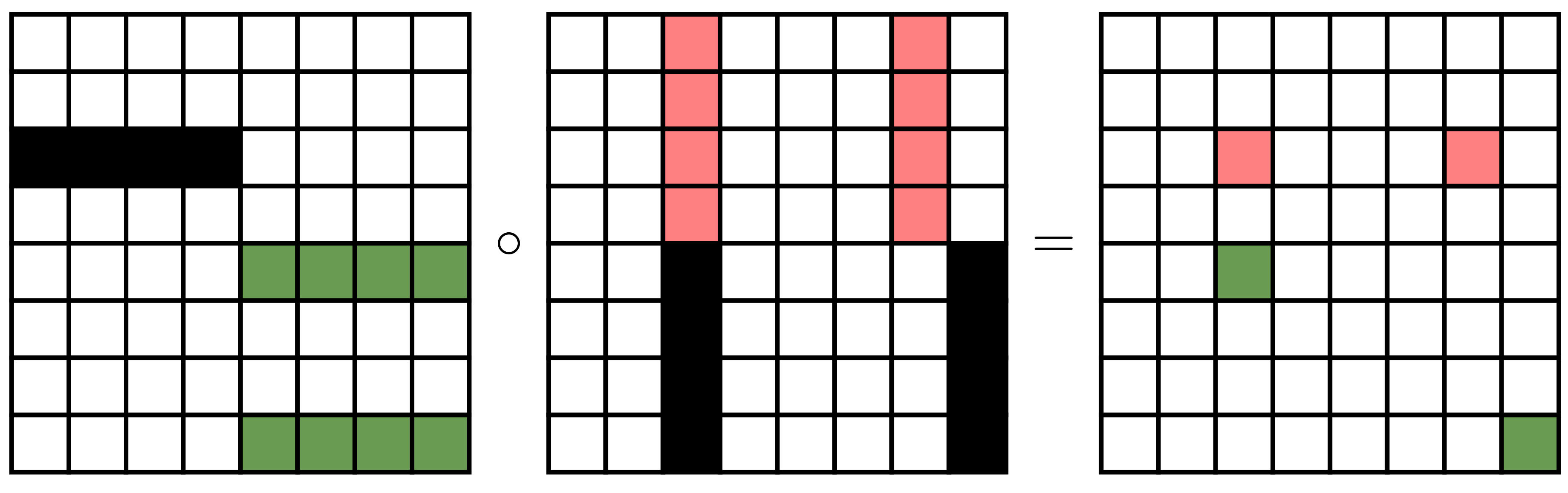}
    \caption{An example of a valid 3-coloring (as in \cref{dfn:L-coloring}), where the pink and green squares on the right matrix correspond to the colored outputs. For the left two matrices, the black squares are fixed to the input $1$ while the white square are fixed to the input $0$. The pink and green squares in the left two matrices encode an input to $OR_4^4$ whose outputs are the colored entries of the right matrix.}
    \label{fig:coloring-example}
\end{figure}

We say that a rectangle $R\times C\in [n]\times [n]$ is \emph{colorable}
iff $E\cap (R\times C)$ either has all its elements in different rows or
all its elements in different columns.
\end{definition}

The motivation for this definition is given by the following
lemma.

\begin{lemma}
\label{lem:coloring-or}
    Let $E\subseteq [n]\times [n]$ with $|E|=k$ and $L$ be an integer with $L\le n/2$.
    If $E$ has an $L$-coloring then $OR^k_{\floor{n/L}}$ is a sub-function of the
    function that produces the $k$ outputs of 
    $A\bullet B$ indexed by $E$ for $n\times n$ Boolean matrices $A$ and $B$.
\end{lemma}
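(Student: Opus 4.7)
The plan is to show that for any tuple of inputs $(z^{(i,j)})_{(i,j)\in E}$ with each $z^{(i,j)}\in \{0,1\}^{\lfloor n/L\rfloor}$, one can construct matrices $A,B\in \{0,1\}^{n\times n}$ so that $(A\bullet B)_{ij}=\bigvee_t z^{(i,j)}_t$ for every $(i,j)\in E$. The intuition is that the $L$ color classes should act as disjoint ``channels'' that can be embedded into the length-$n$ inner products without interference, with the rectangle structure of the coloring being exactly what makes the channels non-interfere.

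First I would carve out $L$ disjoint blocks $P_1,\ldots,P_L\subseteq [n]$ each of size $\lfloor n/L\rfloor$; this fits because $L\lfloor n/L\rfloor\le n$, and the hypothesis $L\le n/2$ ensures each block has size at least two. For each color $\ell\in [L]$ the coloring hypothesis gives sets $R_\ell,C_\ell\subseteq [n]$ such that every $E$-cell inside $R_\ell\times C_\ell$ is colored $\ell$, and additionally the color-$\ell$ cells of $E$ either all lie in distinct rows (Case~1) or all lie in distinct columns (Case~2). Write $P_\ell=\{p_{\ell,1},\ldots,p_{\ell,\lfloor n/L\rfloor}\}$.

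Next I would define the encoding color by color. In Case~1, for $l=p_{\ell,t}\in P_\ell$ set $B_{lj}=1$ if $j\in C_\ell$ and $0$ otherwise, and set $A_{il}=z^{(i,j(i))}_t$ when $i\in R_\ell$ has a (necessarily unique) partner $j(i)$ with $(i,j(i))$ of color $\ell$, setting $A_{il}=0$ in all other cases. Case~2 is completely symmetric, swapping the roles of rows/columns and of $A$/$B$. All entries in coordinates outside $\bigcup_\ell P_\ell$ are set to $0$ in both matrices.

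Finally I would verify correctness by splitting $(A\bullet B)_{ij}=\bigvee_{\ell'\in [L]}\bigvee_{l\in P_{\ell'}}(A_{il}\wedge B_{lj})$ for $(i,j)\in E$ with $\chi(i,j)=\ell$. The $\ell'=\ell$ term evaluates to $\bigvee_t z^{(i,j)}_t$ directly from the construction. For any $\ell'\ne \ell$, a nonzero term would require both $i\in R_{\ell'}$ and $j\in C_{\ell'}$ simultaneously, placing $(i,j)\in E\cap (R_{\ell'}\times C_{\ell'})$; but then the rectangle property forces $\chi(i,j)=\ell'\ne \ell$, a contradiction, so the off-color contributions vanish. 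The main obstacle I anticipate is the cross-color bookkeeping when different colors use different cases, but since the vanishing step depends only on the rectangle property---not on which of the two cases each color was assigned---the argument goes through uniformly and no additional care is needed.
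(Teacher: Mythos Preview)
Your proposal is correct and follows essentially the same approach as the paper's proof: partition $[n]$ into $L$ disjoint blocks, and for each color class set one matrix to be all-$1$ on the relevant block (on the side where rows or columns repeat) while placing the free OR inputs in the other matrix, then use the rectangle property of the coloring to rule out cross-color contributions. The only cosmetic differences are that you make the OR inputs $z^{(i,j)}$ explicit where the paper speaks of ``unset'' entries, and you phrase the interference argument via the rectangles $R_\ell\times C_\ell$ rather than invoking the second coloring condition directly.
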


\begin{proof}
Write $E=\dot{\bigcup}_{\ell=1}^L E_\ell$ where $E_\ell$ is the set of $(i,j)$ in $E$ in color class $\ell$.
We now divide $[n]$ into $L$ disjoint blocks $b_1,\ldots, b_L$
of at least $\floor{n/L}\ge 2$ elements each.
Given the coloring and division into blocks, we define a partial assignment to the matrices $A$ and $B$ as follows:

\begin{itemize}
\item If color class $\ell$ consists of points that do not share a
column, for each $(i,j)\in E_\ell$, we set all entries of 
$A_{i,b_\ell}$ to 1 and leave all entries of $B_{b_\ell,j}$ unset.

\item If color class $\ell$ consists of points that do not share a
row, for each $(i,j)\in E_\ell$, we set all entries of
$B_{b_\ell,j}$ to 1 and leave all the entries
of $A_{i,b_\ell}$ unset.

\item All entries of $A$ and $B$ that are not defined by the above 
two cases are set to 0.
\end{itemize}

In particular, this means that if $E_\ell$ does not contain
any element of the form $(i,\cdot)$ then the submatrix $A_{i,b_\ell}$ is all 0 and
if $E_\ell$ does not contain any element of the form $(\cdot,j)$
then the submatrix $B_{b_\ell,j}$ is all 0.

It remains to show
that the outputs in $E$ of this matrix product are $k$ disjoint ORs on at least $\floor{n/L}$ bits each.

Observe that if the color of $(i,j)$ is $\ell$, there cannot be another color $\ell'\neq \ell$ and $i' \neq i,\ j'\neq j$ such that $(i, j'), (i',j) \in E$ both have color $\ell'$, as this would violate the rectangle condition for color $\ell'$. 
This implies that either all entries of $A_{i, b_{\ell'}}$ are $0$ or all entries of $B_{b_{\ell'}, j}$ are $0$ for all $\ell' \neq \ell$. 
Therefore, assuming that $(i,j)$ is colored $\ell$, the $(i,j)$ entry of the product must equal $A_{i, b_\ell} \bullet B_{b_\ell, j}$.

If color class $E_\ell$ consists of points that do not share a column then
 the output for each $(i,j)\in E_\ell$  is the $OR$ of the 
$\ge \floor{n/L}$ unrestricted input bits of $B_{b_\ell,j}$;
the inputs for different $(i,j)$ are disjoint since no two points of $E_\ell$ share a column.
The analogous property holds for each color class 
$E_\ell$ whose points do not share rows.  
In that case, each output $(i,j)\in E_\ell$
is the $OR$ of $\ge \floor{n/L}$ unrestricted input bits of $A_{i,b_\ell}$ and input bits of $A_{i,b_\ell}$ are disjoint from each other.
Finally, the disjointness of the inputs to the $OR$ functions associated with
different color classes is inherited from the disjointness of 
$b_1,\ldots, b_L$, and the lemma  follows since $|E|=k$.
\end{proof}

The lower bound of~\cite{KSdW07} in \cref{prop:ksdw} embedded
$OR^k_{\floor{n/k}}$ into any set $E$ of $k$ outputs of $A\bullet B$.
Their argument corresponds to the trivial $k$-coloring
that assigns each element of $E$ to its own color class.

\begin{definition}
    For integer $k>0$ define $L_\alpha(k)$ to be the minimum number
    of colors $L$ such that for all subsets $E\subseteq [n]\times [n]$ with $|E|\le k$, there is an $L$-coloring of a subset $E'\subseteq E$ with $|E'|\ge \alpha |E|$.
\end{definition}

\begin{lemma}
\label{lem:general-quantum-boolean}
There are constants $c,c'>0$ such that the following holds.
Let $\alpha>0$ and $k$ be an integer such that $L_\alpha(k)\le n/2$.
For any quantum circuit $\calC$ with at most $c k n^{1/2}/L_{\alpha}(k)^{1/2}$ 
queries to $x$, the probability that $\calC$ produces $k$ correct output values of $n\times n$ Boolean matrix product $A\bullet B$ is at most $2^{-c'\alpha k}$.
\end{lemma}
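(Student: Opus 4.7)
The plan is a straightforward reduction from the strong direct product theorem for $OR^k_{\lfloor n/L(k)\rfloor}$ via the coloring embedding of \cref{lem:coloring-or}. The key observation is that in the fixed-output-order model for Boolean matrix multiplication used in \cite{KSdW07} (which the paper's Boolean matrix bounds inherit), the set $E\subseteq [n]\times[n]$ of $k$ output positions that $\calC$ attempts to produce is determined by the circuit structure, independent of the input. So I would first fix such an $E$ with $|E|=k$; by definition of $L(k)$ and the hypothesis $L(k)\le n/2$, $E$ admits an $L(k)$-coloring and $\lfloor n/L(k)\rfloor \ge 2$, so \cref{lem:coloring-or} applies.

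Second, I would use \cref{lem:coloring-or} to embed an instance of $OR^k_{\lfloor n/L(k)\rfloor}$ into the $k$ outputs of $A\bullet B$ indexed by $E$. Concretely, the partial assignment constructed in the proof of that lemma hard-codes most entries of $A$ and $B$ to fixed constants, and identifies the remaining $k\cdot \lfloor n/L(k)\rfloor$ unrestricted entries with the $k$ disjoint OR inputs of length $\lfloor n/L(k)\rfloor$. This embedding is query-preserving: each quantum query to an OR input bit is simulated by a single query to the corresponding entry of $(A,B)$, and queries to hard-coded entries are folded into the input-independent unitaries. Hence any quantum circuit making $T$ queries that correctly produces the $k$ outputs indexed by $E$ immediately yields a quantum algorithm making $T$ queries for $OR^k_{\lfloor n/L(k)\rfloor}$ with the same success probability (under the input distribution induced by the SDPT's hard OR distribution pushed through the embedding).

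Third, I would invoke part (b) of \cref{prop:SDPT}: any quantum algorithm making at most $\varepsilon k\sqrt{\lfloor n/L(k)\rfloor}$ queries to $OR^k_{\lfloor n/L(k)\rfloor}$ has success probability at most $2^{-\gamma k}$ for some absolute constants $\varepsilon,\gamma>0$. Since $L(k)\le n/2$ forces $\lfloor n/L(k)\rfloor\ge n/(2L(k))$, we get $\varepsilon k\sqrt{\lfloor n/L(k)\rfloor} \ge (\varepsilon/\sqrt{2})\,k n^{1/2}/L(k)^{1/2}$. Setting $c:=\varepsilon/\sqrt{2}$ and $c':=\gamma$ therefore makes the hypothesis $T\le ckn^{1/2}/L(k)^{1/2}$ enough to conclude success probability $\le 2^{-c'k}$, as required.

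The main obstacle is largely bookkeeping rather than a deep issue: I must be careful that the fixed-output-order assumption legitimately fixes $E$ before the input is seen (so the embedding is well defined and input-independent), and that the ``hard distribution'' implicit in the SDPT is inherited through the partial assignment so that the probability statement of the lemma is meaningful. A minor point is verifying that the lower-order-term $\lfloor\cdot\rfloor\ge (\cdot)/2$ estimate works under $L(k)\le n/2$; this is where the hypothesis is used, and it is the only place the constant $c$ absorbs a factor.
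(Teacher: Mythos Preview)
Your proposal is correct and follows essentially the same route as the paper's proof: fix the output set $E$ (using the fixed-output-order assumption), apply \cref{lem:coloring-or} to embed $OR^k_{\lfloor n/L(k)\rfloor}$ as a sub-function of those $k$ outputs, and then invoke \cref{prop:SDPT}(b) together with the estimate $\lfloor n/L(k)\rfloor \ge \Theta(n/L(k))$ coming from $L(k)\le n/2$. The only cosmetic difference is the choice of constant (the paper uses $\lfloor n/L\rfloor \ge 2n/(3L)$ and sets $c=4\varepsilon/5$, while you use $\lfloor n/L\rfloor \ge n/(2L)$ and set $c=\varepsilon/\sqrt{2}$), which is immaterial.
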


\begin{proof}
Let $E$ be any fixed set of $k$ output positions in
$A\bullet B$.
We show that for each fixed value of $E$ the probability that
$\calC$ can correctly guess the output values at these indices is exponentially small in $k$.
Let $L\le L_\alpha(k)$ be such that there is an $L$-coloring of a subset $E'\subseteq E$ with $|E'|\ge \alpha |E|$.
By \cref{lem:coloring-or}, $OR^{\lceil\alpha k\rceil}_{\floor{n/L}}$ is a sub-function
of the $\lceil\alpha k\rceil$ outputs indexed by the set $E'$.
Since $L\le n/2$, $\floor{n/L} \ge 2n/(3L)$ and $\sqrt{\floor{n/L}}\ge 4\sqrt{n/L}/5.$
Choose $c=4\varepsilon\alpha/5$ and $c'=\gamma$ for $\varepsilon$ and $\gamma$ given in~\cref{prop:SDPT}.
By that proposition, the probability that $\calC$ produces the values of these $k$
outputs correctly is at most the probability that $\calC$ produces the $\lceil\alpha k\rceil$ outputs in $E'$ correctly which is $2^{-\gamma \lceil\alpha k\rceil}\le 2^{-c' \alpha k}$.
\end{proof}

Then~\cref{lem:main-quantum-boolean} is an immediate corollary of
\cref{lem:general-quantum-boolean} and
the following bound on $L_{1/2}(k)$.

\begin{lemma}[Coloring Lemma\protect\footnote{In a preliminary version of this paper~\cite{DBLP:conf/stoc/BeameKW24} there was an error in this lemma, which claimed to show that $L_1(k) \leq 2 \sqrt{6k}$.
    We thank the anonymous reviewers for asking the question that led to us find and address this error.}]\label{lem:coloring}    
    $L_{1/2}(k)\le 2\sqrt{6k} < 5 \sqrt{k}$.
\end{lemma}

\begin{proof}
Without loss of generality, $E$ is contained in a grid with side lengths at least $ n > 2\sqrt{6k}$, as otherwise we could just use a single color for each row (or column).
For a given subset $A \subseteq [n]$ or rows or columns, we use $\overline{A}$ to denote $[n] \setminus A$.

Our strategy is as follows: for some constant $c$ to be determined we show that either
\begin{enumerate}
    \item there is a row containing at least $c\sqrt{k}$ points of $E$, or
    \item there is a rectangle $R\times C$ such that there are at least $c \sqrt{k}$ points in the rectangle, all of which can be colored with a single color.
    Moreover, in this case, we show that $|(\overline{R}\times C)\cap E|\leq|(R\times C)\cap E|$.
\end{enumerate}
We now argue why the above two conditions are enough to prove that $L_{1/2}(k) \leq \frac{2}{c}\sqrt{k}$.

If we colored a single row or column, then we can inductively color the remaining points of $E'\subseteq E$ outside that row/column with no issue. 
However, if we colored the points in $R\times C$, inductively coloring the remaining points could cause an issue because of the rectangle requirement for colors.
 To address this, we discard the points of $(\overline{R}\times C)\cap E$ and proceed inductively on $E':= E \cap ([n] \times \overline C)$.
At the end of the procedure, since we always color at least the number of points we discard, we will have discarded at most $k/2$ points, as desired.

It remains to show that this such a coloring would always use at most $\frac{2}{c}\sqrt{k}$ colors. We prove this using induction. 
Indeed, applying induction to color at least $1/2$ of the
remaining $k'\leq k-c\sqrt k$ elements of $E'$ in $[n]\times \overline C$
will require at most
$\frac{2}{c}\, \sqrt{k'}=\frac{2}{c}\, \sqrt{k-s} \leq \frac{2}{c} \sqrt{k}(1-\frac{2s}{k}) \leq \frac{2}{c}\sqrt{k} - 1$ colors.
It follows that at
most $\frac{2}{c}\, \sqrt{k}$ colors are needed to color at least 1/2 the points in $E$, as required.

We now show that we can execute this strategy with the constant $c=1/\sqrt{6}$, which will prove the lemma. That is, we show how to find either a row containing at least $\sqrt{k/6}$ points of $E$ or a colorable rectangle $R\times C$ with at least $\sqrt{k/6}$ points of $E$ such that $|E\cap(\overline R \times C)| \leq |E\cap (R \times C)|$.

For any column $j$ we write $E^j$ for the set of $i$ such that $(i,j)\in E$.   
Build $R\times C$ in the following way:\footnote{In \cref{alg:finding-colorable-rectangle}, instead of the constant 3/4 in \cref{line:while}, we could have chosen any $(1-\gamma)$ instead. In this case, we would achieve a bound for $L_{1-2\gamma}(k) \leq 2\sqrt{\frac{1-\gamma}{\gamma(1-2\gamma)} k}$. For simplicity, we have chosen $\gamma=1/4$, which is quite close to optimal and has a larger value of $\alpha = 1-2\gamma$.}

\begin{algorithm}[ht]
    \caption{Finding a colorable rectangle with many points.}
    \label{alg:finding-colorable-rectangle}
      Initialize $R\leftarrow \varnothing$;\, $C\leftarrow \varnothing$;\, $D\leftarrow \varnothing$\\
       \While{there is a $j$ such that $|E^j\setminus (R\cup D)|\ge \frac{3}{4}|E^j|$ \label{line:while}}{
            $C\leftarrow C\cup \{j\}$\label{line:C-update}\\
            $D\leftarrow D\cup (R\cap E^j)$ \label{line:D-update}\\
            $R\leftarrow (R\setminus E^j)\cup (E^j\setminus  D)$ \label{line:R-update}
        }
        
    \Return{$R\times C$}
    
    \end{algorithm}
\begin{figure}[t]
    \centering
    \includegraphics[width=0.3\linewidth]{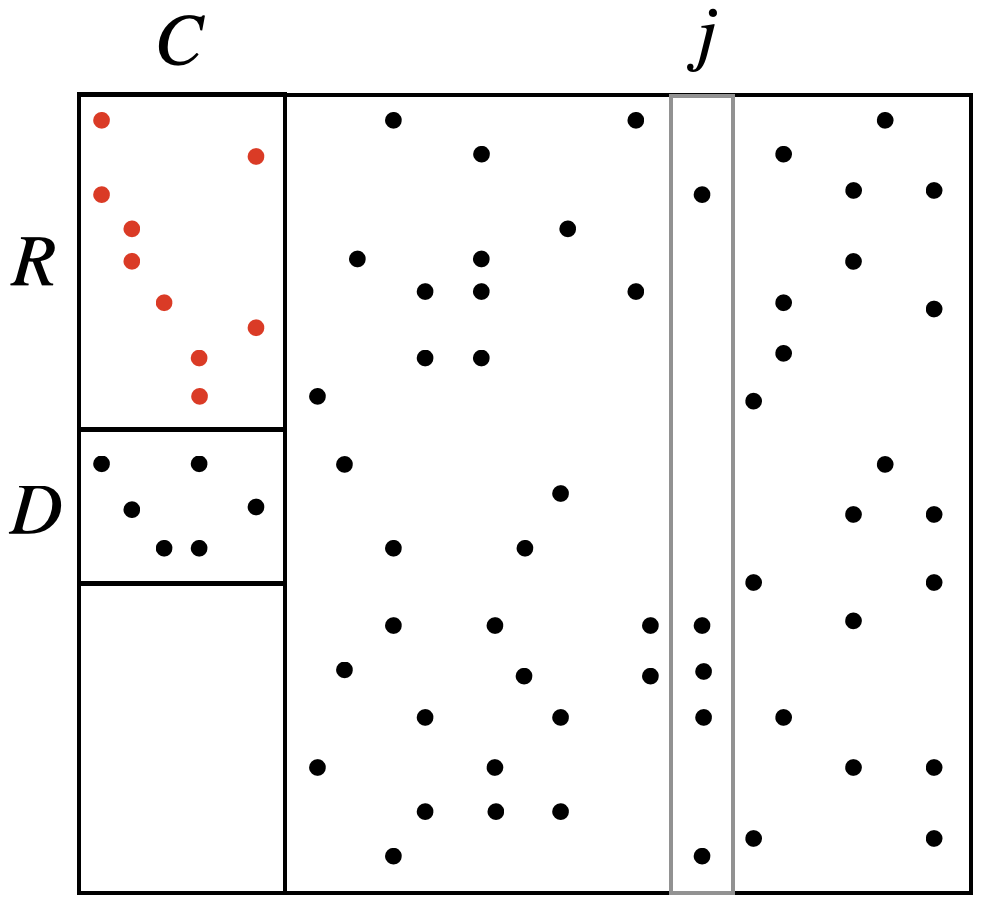}
    \caption{Visualization of a single iteration of \cref{alg:finding-colorable-rectangle}.}
    \label{fig:coloring-alg-visualization}
\end{figure}

First, observe that at the end of the procedure (and indeed at the end of every iteration) the rectangle $R\times C$ contains
exactly one element of $E$ in every row, every row of $D\times C$
contains at least two elements of $E$, and there are no
elements of $E$ in $\overline{(R\cup D)}\times C$ -- see \cref{fig:coloring-alg-visualization} for a visualization of these observations.

Our first simple claim lets us bound the number of points in $\overline{R}\times C$.

\begin{claim}
    $|E \cap (\overline R \times C)| \leq |E\cap (R\times C)|$,
    and
    $|D| \leq |R|/2$.
\end{claim}

\begin{proof}[Proof of Claim]
    The claim is true initially. 
    Suppose that it is true at the beginning of an iteration.
    When we add $j$ to $C$ on \cref{line:C-update},  we have
    $|E^j\setminus (R\cup D)|\ge 3|E^j|/4$, and therefore have
    $|R\cap E^j|\le|E^j|/4$.

    \cref{line:D-update} therefore adds at most $|E^j|/4$ row indices to
    $D$.
    Since each element of $R\times C$ contained exactly one element of $E$
    at the end of the previous iteration, each row added to $D$
    by \cref{line:D-update} has exactly two points of $E$ in the columns of $C$ and there are no points of $E$
    in $\overline{(R\cup D)}\times C$,  the iteration adds at most $2|E^j|/4=|E^j|/2$ points of $E$ to $\overline{R}\times C$.
    
    On the other hand, \cref{line:R-update} adds at least $3|E^j|/4$ elements of $E^j$ to $R$ and only removes the at most $|E^j|/4$ elements
    of $R\cap E^j$, so $R$ grows
    by at least $|E^j|/2$ rows in total.
    Since each row
    of $R\times C$ has exactly one point
    in the columns of $C$, at
     least $|E^j|/2$ points of $E$ get added to $R \times C$.

    Counting rows, we have added
    at most $|E^j|/4$ rows to $D$ and at
    least $|E^j|/2$ rows to $R$, which maintains that $|D|\le |R|/2$.

    Counting points, the increase in size of
    $E\cap (\overline R \times C)$ is at most $|E^j|/2$
    which lower bounds the
     net gain for $E\cap (R \times C)$.
     This maintains
     $|
    E\cap (\overline R \times C)|\le |E\cap (R \times C)|$ as required.
\end{proof}

We let $s$ be the larger of $|R|$ and the maximal number of points in $E$ of any row. 
For convenience, write $Z=R\cup D$.

When \cref{alg:finding-colorable-rectangle} finishes, for every column $j\in \overline C$, fewer than $3/4$ of its points are in rows of $\overline Z$ and
hence more than $1/4$ of its points are in rows of $Z$.
So we must have that
\begin{displaymath}
|E\cap (Z\times \overline C)|>
|E\cap (\overline Z\times \overline C)|/3.
\end{displaymath}
As $\overline Z\times C$ has no points of $E$ and each
row has at most $s$ points of $E$,
the total number of points is
\begin{align*}
    k &=|E\cap (Z\times [n])|+|E\cap (\overline Z\times [n])|\\
    &=|E\cap (Z\times [n])|+|E\cap (\overline Z\times \overline{C})|\\
    &\le |E\cap (Z\times [n])| + 3|E\cap (Z\times \overline{C})|\\
    &\leq 4|Z|\,s \leq 4\cdot (3|R|/2)\, s \leq 6s^2.
\end{align*}
Therefore $s\geq\sqrt{k/6}$.
\end{proof}

\cref{lem:main-quantum-boolean} is an immediate corollary of \cref{lem:general-quantum-boolean,lem:coloring} which
completes the proof of \cref{thm:quantum-booolean-matrix}.

\cref{thm:quantum-booolean-matrix} can be directly extended to an equivalent lower bound on the quantum cumulative memory complexity for Boolean matrix multiplication.

\begin{corollary}
    Any quantum circuit computing $n \times n$ Boolean matrix multiplication $A \bullet B$ with $T$ queries, space $S$, and success probability more than $1/(2T)$ must have cumulative memory that is $\Omega(n^{10}/T^3)$
\end{corollary}

\begin{proof}
    Using \cref{lem:general-quantum-boolean,lem:coloring}, we can apply \cref{prop:TS-to-cm-lb} with $C=1$, $m'(n) = n^2/8$, $h(k,n)=ck^{3/4}n^{1/2}/2^{1/4}$, $K(n) = 2^{c'}$ where constants $c,c'>0$. This gives us a cumulative memory lower bound of:
    \begin{equation*}
        \Omega(\min(n^{10}/T^3, n^{4})) = \Omega(n^{10}/T^3)
    \end{equation*}
    as $T$ must be $\Omega(n^2)$.
\end{proof}

We also obtain a general classical lower bound from these arguments. We start by showing a classical analogue of \cref{lem:general-quantum-boolean}.

\begin{lemma}
\label{lem:general-classical-boolean}
Let $\varepsilon,\gamma>0$ be the constants from \cref{prop:SDPT}.
Let  $k$ be an integer such that $L(k)\le n/2$.
Any randomized algorithm with at most $(2\varepsilon/3) k n / L(k)$ 
queries to $x$ can only produce $k$ correct output values of $n\times n$ Boolean matrix product $A\bullet B$ with probability at most $2^{-\gamma k}$.
\end{lemma}
\begin{proof}
    Let $E$ be any fixed set of $k$ output indices in $A \bullet B$.
    Let $L \leq L(k)$ be the smallest number such that $E$ can be colored with $L$ colors.
    By \cref{lem:coloring-or} we know that $OR^k_{\floor{n/L}}$ is a sub-function of the outputs indexed by $E$.
    Thus, by \cref{prop:SDPT} any randomized algorithm making at most $\varepsilon k \floor{n/L} \geq (2 \varepsilon / 3) k n / L(k)$ queries can compute these outputs with probability at most $2^{-\gamma k}$.
\end{proof}

\begin{theorem}
\label{thm:classical-boolean-matrix}
Any output-oblivious classical query algorithm computing $n\times n$ Boolean matrix-multiplication with $T$ queries and space $S$ with success probability more than $2^{-S}$
must have 
 $T$ that is $\Omega(n^{3}/\sqrt{S})$.
\end{theorem}

\begin{proof}
    Since there are $n^2$ outputs, which is a trivial time lower bound for sequential algorithms, we can assume that 
    $\sqrt{S}$ is at most $\alpha n$ for some arbitrarily small constant
    $\alpha>0$.
    Let $c=2/\gamma$ for $\gamma$ given by \cref{prop:SDPT} and let $k=cS$.   
    Our assumption with $\alpha<1/(10\sqrt{c})$ implies, by~\cref{lem:coloring} that $L(k)<5\sqrt{k}=5\sqrt{cS}< n/2$.
    The main difference in parameters from the quantum case is
    that we need to apply \cref{lem:general-classical-boolean} instead of \cref{lem:general-quantum-boolean}  to say that classical output-oblivious branching programs of width $2^S$ have success probability at most $2^{-\gamma k}=2^{-2S}$ of computing $k$ correct output values of $A \bullet B$.
    There are at most $2^S$ nodes
    at a layer boundary and hence the probability that a layer of
    height $(2\varepsilon/3) kn/L(k)$ correctly
    produces $k$ output values is at most $2^{-S}$.
    Rewriting using $L(k)< 5\sqrt{k}$,
    we obtain that a layer of height $(2\varepsilon/15) \sqrt{k}\,n$ correctly produces outputs with probability at most $2^{-S}$.
    Since there are $n^2$ outputs, for any circuit of depth $T$ at most $(2 \varepsilon/15) n^3/\sqrt{k}$ must have some layer of depth $2 \varepsilon/15) \sqrt{k}\, n$ during which at most $k$ outputs are produced and each output value must be correct for the algorithm to be correct, so the overall success probability is at most $2^{-S}$.
\end{proof}

This achieves the goal suggested by 
Klauck, \v{S}palek, and de Wolf~\cite{KSdW07} who ventured that the likely tight tradeoff for classical computation of Boolean matrix multiplication is $T^2 S=\Omega(n^6)$.
Note that our quantitative bound asymptotically dominates the bounds of Abrahamson~\cref{prop:abrahamson=boolean} for all values of $S$; it always is at least as large (up to a constant factor) and
the only regimes where our quantitative bound does not strictly dominate that of Abrahamson are when $S$ is $\Theta(1)$ and when $S$ is $\Theta(n)$. 
Of course, Abrahamson's
lower bounds are for the branching program model which allows for the timing of each output bit to depend on the input.
(The classical lower bound of \cite{KSdW07} for output-oblivious query algorithms is exactly the same as that of Abrahamson for space $O(\sqrt{n})$.)
Abrahamson's bound on the number of queries becomes the trivial $\Theta(n^2)$ when $S=\Theta(n^{3/2})$ which is tight
for the distribution used in Abrahamson's paper, whereas the lower bound of~\cref{thm:classical-boolean-matrix} remains non-trivial so long as $S$ is $o(n^2)$.
In fact, just as with our quantum lower bound in~\cref{thm:quantum-booolean-matrix}, the exponents of $n$ and $S$ in \cref{thm:classical-boolean-matrix} are optimal for a circuit model that allows arbitrary gates between queries since that would allow
the circuit to simulate a decision tree of height $2n^2$
that reads and remembers the entire input and produces all of the outputs at its leaves; our lower bounds also apply to such a model.
See \cref{fig:boolean-mat-mult-comparisons} for a 
comparison of our lower bounds with those of prior work
for both classical and quantum computation.

\begin{figure}  
    \resizebox{0.49\textwidth}{!}{
        \begin{tikzpicture}
            \begin{axis}[xmin=-0.05, ymin=1, xmax=2.05,ymax=3.05, samples=5, xlabel={$\log_n S$}, ylabel={$\log_n T $}, title=Quantum Boolean Matrix Multiplication, legend pos=south west, legend style={draw=none}]
              \addplot[black, thick, domain=0:2] {2.5-x/4};
              \addlegendentry{\Cref{thm:quantum-booolean-matrix}}
              \addplot[black, thick, dotted, domain=0:1] {2.5-x/2};
              \addlegendentry{\cite{KSdW07}}
              \addplot[name path=lowerbound,domain=-10:10, draw=none] {2};
              \path[name path=axis] (axis cs:-10,1) -- (axis cs:5,1);
              \addplot [
                    thick,
                    color=gray,
                    fill=gray, 
                    fill opacity=0.2
                ]
                fill between[
                    of=lowerbound and axis,
                    soft clip={domain=-10:10},
                ];
                \node[circle,fill,inner sep=2pt] at (axis cs:0,2.5) {};
                \node[circle,fill,inner sep=2pt] at (axis cs:2,2) {};
            \end{axis}
        \end{tikzpicture}
        }
    \hspace{0.1cm}
    \resizebox{0.49\textwidth}{!}{
        \begin{tikzpicture}
            \begin{axis}[xmin=-0.05, ymin=1, xmax=2.05,ymax=3.05, samples=5, xlabel={$\log_n S$}, ylabel={$\log_n T$}, title=Classical Boolean Matrix Multiplication, legend pos=south west, legend style={draw=none}]
              \addplot[black, thick, domain=0:2] {3-x/2};
              \addlegendentry{\Cref{thm:classical-boolean-matrix}}
              \addplot[black, thick, dotted, domain=0:1] {3-x};
              \addlegendentry{\cite{KSdW07}}
              \addplot[black, thick, dashed, domain=1:3/2] {3.5-x};
              \addplot[black, thick, dashed, domain=1/2:1] {2.5};
              \addplot[black, thick, dashed, domain=0:1/2] {3-x};
              \node[circle,fill,inner sep=2pt] at (axis cs:0,3) {};
              \addplot[name path=lowerbound,domain=-10:5, draw=none] {2};
              \addlegendentry{\cite{DBLP:conf/focs/Abrahamson90}}
    
              \node[circle,fill,inner sep=2pt] at (axis cs:2,2) {};
              \path[name path=axis] (axis cs:-10,1) -- (axis cs:10,1);
              \addplot [
                    thick,
                    color=gray,
                    fill=gray, 
                    fill opacity=0.2
                ]
                fill between[
                    of=lowerbound and axis,
                    soft clip={domain=-10:10},
                ];
            \end{axis}
        \end{tikzpicture}
        }
    \caption{Comparison of our lower bounds for Boolean matrix multiplication with those of prior work for both quantum and classical computation.  The shaded region comes from the
    fact that the time must always be $\Omega(n^2)$.  
    The endpoints mark choices of parameters where the
    upper and lower bounds match.}
    \label{fig:boolean-mat-mult-comparisons}
\end{figure}

We can extend the above to get a matching lower bound on the classical cumulative memory complexity.
\begin{corollary}
    Any output-oblivious classical query algorithm computing $n\times n$ Boolean matrix-multiplication with $T$ queries and space $S$ with success probability more than $1/(2T)$ must have cumulative memory that is $\Omega(n^{6}/T)$.
\end{corollary}
\begin{proof}
    Using \cref{lem:general-classical-boolean} we can apply \cref{prop:TS-to-cm-lb} with $m'(n) = n^2, h(k,n)=(2 \epsilon/15) \sqrt{k}n$ and $K(n) = 2^{\gamma /2}$ to get that the cumulative memory must be
    \begin{equation*}
        \Omega(\min(n^6 / T, n^4)) = \Omega(n^6/T)
    \end{equation*}
    As $T$ must be $\Omega(n^2)$.
\end{proof}

Using the same proof idea as in \cref{cor:mat-square}, the bounds in \cref{thm:quantum-booolean-matrix,thm:classical-boolean-matrix} immediately imply lower bounds for Boolean matrix squaring.
\begin{corollary}\label{cor:bool-matrix-sq}
    Any quantum circuit computing $n \times n$ Boolean matrix squaring on all inputs with $T$ queries, space $S$, and success probability more than $2^{-S}$ must have $T$ that is $\Omega(n^{2.5}/S^{1/4})$. Any such output-oblivious classical query algorithm must have $T$ that is $\Omega(n^3 / S^{1/2})$.
    Quantum and classical circuits for Boolean matrix squaring with success probability larger than $1/(2T)$ must have cumulative memories $\Omega(n^{10}/T^3)$ or $\Omega(n^6/T)$ respectively.
\end{corollary}

\subsection{Boolean matrix-vector product}

Finally, we discuss the problem of quantum computation of Boolean matrix-vector product and the closely-associated problem of systems of linear inequalities. 
Here, rather than producing quantitative improvements which seem unlikely, we focus on a qualitative improvement in existing results.

Though \cite{DBLP:conf/focs/Abrahamson90} does not contain an explicit theorem statement on time-space tradeoffs for Boolean matrix-vector products that is the analog of the linear algebra bound in \cite{Abr91} or our \cref{thm:time-space-matrix-vector}, \cite{DBLP:conf/focs/Abrahamson90}
contains the claim that analogous results do indeed hold for this problem using the same ideas.   (The lower bound would be a factor $n$ smaller than the lower bound for linear algebra.)

For quantum circuits, Klauck, \v{S}palek, and de Wolf~\cite{KSdW07} prove the following results for computing Boolean matrix-vector products. (They also prove a similar result for the case of output-oblivious classical query algorithms, though that does not apply to unconstrained branching programs.)

\begin{proposition}[Theorem 23 in \cite{KSdW07}]\label{prop-bool-mat-vec}
For every $S$ in $o(n/\log n)$, there is an $n \times n$ Boolean matrix $A^{(S)}$ such that every bounded-error quantum circuit with space at most $S$ that computes Boolean matrix-vector product $A^{(S)} \bullet x$ in $T$ queries requires that $T$ is  $\Omega(\sqrt{n^{3}/S})=\Omega(n^{1.5}/S^{0.5})$.
\end{proposition}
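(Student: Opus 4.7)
The plan is a Borodin--Cook style layer argument combined with the quantum strong direct product theorem for OR (\cref{prop:SDPT}(b)), analogous to the sketch of \cref{prop:ksdw} for Boolean matrix multiplication. The creative step is to design the matrix $A^{(S)}$ so that, for $k = cS$ with a constant $c > 2/\gamma$ to be chosen, every $k$-subset $K$ of its rows admits pairwise disjoint ``unique'' column sets $U_i \subseteq [n]$ (for $i \in K$) of size $\Omega(n/k)$ such that row $i$ of $A^{(S)}$ is $1$ on $U_i$ while every other row in $K$ is $0$ on $U_i$. Given any such matrix, a layer of the circuit that tries to produce $k$ specified outputs reduces to computing $k$ disjoint OR instances: setting $x_j = 0$ for $j \notin \bigcup_{i \in K} U_i$ and feeding fresh bits on each $U_i$, the $i$-th output becomes exactly $\bigvee_{j \in U_i} x_j$, with no cross-interference among the $k$ ORs.

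To construct $A^{(S)}$, I would sample each entry independently as $\text{Bernoulli}(1/k)$. For any fixed $K$ and $i \in K$, the expected size of $U_i$ is $n(1/k)(1 - 1/k)^{k-1} \approx n/(ek)$, and Chernoff gives concentration with failure probability $\exp(-\Omega(n/k))$; a union bound over $i \in K$ and over all $\binom{n}{k} \le 2^{k \log(en/k)}$ row-subsets succeeds whenever $n/k = \omega(k \log(n/k))$, i.e., $n/k^2 = \omega(\log(n/k))$. Since $S = o(n/\log n)$ and $k = cS$, this slack is present, and any sample realizing the high-probability event gives the desired $A^{(S)}$.

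For the Borodin--Cook step, slice the circuit into $\lceil T / q \rceil$ layers of $q = \Theta(\sqrt{kn})$ queries each, matching the threshold from \cref{prop:SDPT}(b) applied to $\mathrm{OR}^k$ on blocks of size $\Omega(n/k)$. If $T$ is below $\alpha n^{3/2} / \sqrt{S}$ for a small enough constant $\alpha$, the number of layers is less than $n/k$, so by pigeonhole some layer is scheduled to produce at least $k$ outputs. The reduction above turns the first $k$ of these into $\mathrm{OR}^k_{\Omega(n/k)}$; the SDPT bounds the success of a state-free quantum subroutine by $2^{-\gamma k}$, and \cref{prop:quant-union} charges an extra factor of $2^{2S}$ for the up-to-$S$-qubit input-dependent state inherited from previous layers, giving per-layer success at most $2^{-\gamma k + 2S} = 2^{-(c\gamma - 2)S}$. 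Choosing $c$ large makes this exponentially small in $S$, so even a union bound over the polynomially many layers leaves overall success well below $1/3$, contradicting bounded error.

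The main obstacle is the existence argument for $A^{(S)}$: the Chernoff concentration must hold against all $\binom{n}{k}$ row-subsets simultaneously, and it is exactly this union bound that pins down the restriction $S = o(n/\log n)$ through the requirement $n/k^2 = \omega(\log(n/k))$. Once the matrix is in hand, the remainder is a direct translation of the KSdW07 Boolean-matrix-multiplication argument to the matrix-vector setting, with the coloring construction replaced by the far simpler disjoint-uniqueness structure that the sparse random matrix provides for free.
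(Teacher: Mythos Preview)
Your overall architecture matches the paper's argument (which in turn follows \cite{KSdW07}): a sparse random matrix so that any $k$-subset of rows exhibits disjoint ``unique-one'' column sets, a reduction to $\mathrm{OR}^k_{\Theta(n/k)}$, and the Borodin--Cook slicing with \cref{prop:SDPT}(b) and \cref{prop:quant-union}.

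There is, however, a genuine gap in your existence argument for $A^{(S)}$. You require that for \emph{every} $k$-subset $K$ and \emph{every} $i\in K$ the unique-column set $U_i$ has size $\Omega(n/k)$, and you invoke a union bound over all pairs $(K,i)$. Your own stated sufficient condition is $n/k^2=\omega(\log(n/k))$, but this does \emph{not} follow from $S=o(n/\log n)$: for instance if $S=n^{0.9}$ then $k=\Theta(n^{0.9})$, so $n/k^2=\Theta(n^{-0.8})\to 0$ while $\log(n/k)\to\infty$. Your construction therefore only yields the theorem for $S=o(\sqrt{n/\log n})$, not for the full range $S=o(n/\log n)$.

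The paper (following \cite{KSdW07}) avoids this by weakening the requirement: for each $k$-set $I$ it only asks that at least $k/2$ of the rows are ``good'' (have $\ge n/(6k)$ unique ones in their column within $A_I$). The bad event for a fixed $I$ is then that $\ge k/2$ rows are simultaneously bad; since each row is bad with probability $\exp(-\Omega(n/k))=n^{-\omega(1)}$ (using only $k=o(n/\log n)$) and these events are negatively associated, the failure probability for a fixed $I$ is $\binom{k}{k/2}\,n^{-\omega(k)}=n^{-\omega(k)}$, which easily beats the $\binom{n}{k}\le n^k$ union bound. The reduction then uses only the $k/2$ good rows $J\subseteq I$, still yielding $\mathrm{OR}^{k/2}_{\Theta(n/k)}$, and the rest of your layer argument goes through unchanged.
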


This result is weaker than a standard time-space tradeoff since the function involved is not independent of the circuits that might compute
it.
In particular, \cite{KSdW07} does not find a single function that is hard for all space bounds, as the matrix $A^{(S)}$ that they use changes depending on the value of $S$. 
Because~\cite{KSdW07} does not express this dependence in the statement of their results, we provide a detailed discussion of their arguments to make the need for that dependence clear.
We will also need their definitions in  our results.

For $S=o(n/\log n)$, the matrix $A^{(S)}$ is produced
via the probabilistic method using the following distribution:
Choose $k$ to be a sufficiently large constant multiple of $S$.
This distribution chooses matrices $A\subseteq \{0,1\}^{n\times n}$ by
selecting a uniformly random subset of $n/(2k)$ positions in each row to set to 1, with the remainder of the entries in each row being 0.
They show that with positive probability over the choice of $A$,
for all sets $I\subseteq [n]$ of size $k$, at least $k/2$ of
the rows of $A_I$ contain at least $n/(6k)$ 1's that are unique in
their column of $A_I$; that is, those columns are
0 in all of the $k - 1$ other rows of $A_I$.
$A^{(S)}$ is then some fixed matrix for which this property is true.

More precisely, when we fix a row $j\in I$ and the $n/(2k)$ columns
where it is 1, the expected number of the $(k-1)n/(2k)<n/2$ 1's
among the rows in $I\setminus\{j\}$ that land in those $n/(2k)$ 
columns is less than $n/(4k)$.  By a Hoeffding bound, the number of
those 1's is at most $n/(3k)$ except with
probability exponentially small in $n/k$, which is $n^{-\omega(1)}$
since $k=O(S)=o(n/\log n)$.
Hence, except with probability $n^{-\omega(1)}$, a row $j\in I$ is \emph{good
for $I$} in that at least $n/(2k)-n/(3k)=n/(6k)$ of the 1's in row $j$ are unique in their respective columns in $A_I$.
For a fixed $I$, the probability that there is no $J\subseteq I$ of size $k/2$ all of whose rows are good for $I$ is less than the probability that there are $k/2$ rows of $I$ that are not
good for $I$.   This happens with probability at most $n^{-\omega(k)}$ since are at most $\binom{k}{k/2}$ such subsets of rows of size $k/2$, each of which is not good for $I$ with probability $n^{-\omega(k)}$ (and the probabilities are negatively associated).
Since there are only $\binom{n}{k}$ choices of $I$, the total
probability that $A$ does not have desired properties is only $n^{-\omega(k)}$.

The proof of \cref{prop-bool-mat-vec} follows from the usual time-space lower bound methodology and the following
lemma:

\begin{lemma}
There is an $\alpha>0$ such that for every quantum circuit $\mathcal{C}$ that makes at most $\alpha\sqrt{kn}$ queries to $x\in \{0,1\}^n$, the probability that $\mathcal{C}$ produces at least
$k$ correct output values of $A^{(S)}\bullet x$ is at most $2^{-\Omega(k)}$.
\end{lemma}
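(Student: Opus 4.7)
The plan is to embed a disjoint direct product of OR functions into any collection of $k$ outputs of $A^{(S)}\bullet x$, and then invoke the quantum strong direct product theorem for OR (\cref{prop:SDPT}(b)).

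Fix any circuit $\calC$ and any set $I\subseteq [n]$ of $k$ output rows that $\calC$ claims to produce. By the defining property of $A^{(S)}$, there is a subset $J\subseteq I$ with $|J|\ge k/2$ of \emph{good} rows, where each $j\in J$ has a set $U_j$ of at least $n/(6k)$ columns on which $A^{(S)}_{j,c}=1$ but $A^{(S)}_{j',c}=0$ for every $j'\in I\setminus\{j\}$. The key observation is that the sets $\{U_j\}_{j\in J}$ are pairwise disjoint, since any column $c\in U_j\cap U_{j'}$ would contradict the uniqueness condition for both rows.

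Next I would restrict the input distribution by hard-coding $x_c=0$ for every column $c$ which lies in the support of some row of $A^{(S)}_I$ but which is \emph{not} in $\bigcup_{j\in J}U_j$. Under this restriction, for each $j\in J$, the output $(A^{(S)}\bullet x)_j$ collapses to $\bigvee_{c\in U_j} x_c$, because every non-unique 1-column of row $j$ has been zeroed out. Computing the outputs indexed by $J$ is therefore exactly solving $\mathrm{OR}^{|J|}_{|U_j|}$, a direct product of at least $k/2$ disjoint OR instances on at least $n/(6k)$ free bits each. Any quantum circuit that produces the $k$ outputs of $A^{(S)}\bullet x$ correctly with some probability $p$ on the restricted input family also solves this embedded $\mathrm{OR}^{k/2}_{n/(6k)}$ instance with probability at least $p$ (since successful outputs on $I$ include successful outputs on $J$).

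Applying \cref{prop:SDPT}(b) to $\mathrm{OR}^{k/2}_{n/(6k)}$, any quantum algorithm making at most $\varepsilon\cdot(k/2)\cdot\sqrt{n/(6k)}=\Theta(\sqrt{kn})$ queries has success probability at most $2^{-\gamma k/2}$. Choosing $\alpha>0$ smaller than the implied constant $\varepsilon/(2\sqrt{6})$ makes the circuit's query budget $\alpha\sqrt{kn}$ lie within the SDPT regime, yielding success probability $2^{-\Omega(k)}$, as claimed. The only delicate point is the embedding step, where one must check that zeroing out the non-unique support of the good rows really does reduce the $J$-outputs to disjoint ORs on independent blocks of inputs; this is where the tailored construction of $A^{(S)}$ (with its guarantee of $n/(6k)$ column-unique 1's in half of any $k$-row submatrix) is precisely what is needed, so that no post-hoc correlation between the embedded OR instances can be exploited by the algorithm.
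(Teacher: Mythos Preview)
Your proposal is correct and follows essentially the same approach as the paper: fix the set $I$ of $k$ output rows, use the defining property of $A^{(S)}$ to extract a good subset $J$ of size $k/2$ with column-disjoint unique supports $U_j$, zero out the non-unique support to reduce the $J$-outputs to a disjoint OR direct product, and apply the quantum strong direct product theorem for OR. The only cosmetic difference is that the paper zeros out \emph{all} of $x$ outside $\bigcup_{j\in J}U_j$ (not just the columns in the support of $A^{(S)}_I$), but since the remaining free bits do not touch any row of $I$ this changes nothing.
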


\begin{proof}
    Let $I\subseteq [n]$ be the set of indices of the first $k$ outputs of $A^{(S)}\bullet x$ produced by $\mathcal{C}$.
    Let $J\subseteq I$ be the set of size $k/2$ rows that are
    good for $I$ guaranteed by the properties of $A^{(S)}$.  
    We show that the probability that $\mathcal{C}$ produces
    all outputs even for the rows in $J$ is exponentially small in $k$:
    For each row $j\in J$ there is a set $C_j$ of $n/(6k)$ columns of
    $A^{(S)}_I$ where the unique 1 is in row $j$.
    Consider the restriction to input vectors $x\in \{0,1\}^n$ that are 0 outside of $\bigcup_{j\in J} C_j$. 
    Then the outputs for $j\in J$ are a direct product of $k/2$
    OR functions of size $n/(6k)$ on the bits of $\bigcup_{j\in J} C_j$.
    By a strong direct product theorem for OR (Theorem 14 of \cite{KSdW07}), for $\varepsilon$ a sufficiently small constant, any circuit of height at most 
    $\varepsilon (k/2) \sqrt{n/(6k)} =\varepsilon \sqrt{kn/24}$ is correct with probability at most
    $2^{-\gamma k}$ for some constant $\gamma>0$.
\end{proof}

On the algorithmic side, we have the following:

\begin{proposition}
    For every $c>0$ and every Boolean matrix $A\in \{0,1\}^{m\times n}$
    there is a quantum circuit using space $O(\log n)$ and
    time $O(mn^{1/2}\log m)$ that computes Boolean
    matrix-vector product $A\bullet x$ with error at most $m^{-c}$.
    More precisely, the algorithm runs in time
    $O(|A|_{1/2} \log m)$ where $|A|_{1/2}=\sum_{i=1}^m \sqrt{|A_i|_1}$.
\end{proposition}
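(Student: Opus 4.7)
The plan is to compute the $m$ outputs of $A\bullet x$ one at a time using Grover-based OR computation, reusing workspace across rows. For each row $i$, let $S_i=\{j:A_{ij}=1\}$ so that the $i$-th output is $\bigvee_{j\in S_i}x_j$. Since $A$ defines the circuit rather than serving as input, the sets $S_i$ and their sizes $|A_i|_1$ are hard-wired into the algorithm and need not be stored in memory.

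First I would invoke the BBHT variant of Grover's search (which handles an unknown number of marked elements) as a subroutine to decide whether there exists $j\in S_i$ with $x_j=1$; this uses $O(\sqrt{|A_i|_1})$ queries with constant error and $O(\log n)$ qubits of workspace. To drive the per-row error down to $m^{-(c+1)}$, I would repeat this subroutine $O(\log m)$ times independently and take the majority vote, producing an OR-computing procedure that uses $O(\sqrt{|A_i|_1}\log m)$ queries and $O(\log n)$ qubits. Equivalently, one can apply exact amplitude amplification with $O(\log m)$ repetitions once an estimate of $|S_i|$ is obtained, which is trivial here because $|A_i|_1$ is known from the hard-wired $A$.

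Then I would execute these subroutines sequentially for $i=1,\ldots,m$. After the $i$-th row's output bit is written out, all ancillary qubits used by the Grover subroutine are uncomputed and released, so the $(i{+}1)$-st iteration starts from a fresh $O(\log n)$-qubit workspace; an $O(\log m)$-bit classical row counter (absorbed into the $O(\log n)$ space bound under the usual assumption that $m$ is polynomially bounded in $n$) controls which $S_i$ is being processed. A union bound over the $m$ rows yields overall error at most $m\cdot m^{-(c+1)}=m^{-c}$, and the total query count is $\sum_{i=1}^{m}O(\sqrt{|A_i|_1}\log m)=O(|A|_{1/2}\log m)$. Applying Cauchy--Schwarz gives $|A|_{1/2}=\sum_{i}\sqrt{|A_i|_1}\le \sqrt{m\sum_{i}|A_i|_1}\le\sqrt{m\cdot mn}=m\sqrt{n}$, yielding the stated $O(mn^{1/2}\log m)$ bound. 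There is no serious obstacle here: the proposition is a routine consequence of Grover's algorithm plus logarithmic amplification for error reduction plus sequential output with workspace reuse.
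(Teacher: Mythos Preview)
Your proposal is correct and follows essentially the same approach as the paper: compute each row's OR via Grover search with per-row error $m^{-(c+1)}$, then take a union bound. The paper's proof is just a terser statement of exactly this idea, and your added details (BBHT, repetition for amplification, workspace reuse, the Cauchy--Schwarz step) are all fine elaborations.
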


\begin{proof}
    For each row in turn, run Grover's algorithm to 
    compute the OR of the bits indexed by the 1's of
    $A_i$, the $i$-th row of $A$ with probability of error at most $m^{-c-1}$ per row for a total error of at most $m^{-c}$.
\end{proof}

We note that for the fixed matrix $A^{(S)}$, each row has
$\Theta(n/S)$ 1's so $|A^{(S)}|_{1/2}=\Theta(n^{3/2}/S^{1/2})$.  
This is an odd situation in that the matrix $A^{(S)}$ designed to require large time for space $S$ algorithms can be solved in nearly the same time bound by space $O(\log n)$ algorithms.

\subparagraph{Systems of linear inequalities}
The same space-dependent matrix $A^{(S)}$ in \cref{prop-bool-mat-vec} was also used in \cite{ASdW09} for systems of inequalities.
\begin{proposition}[Theorem 11 in \cite{ASdW09}]\label{prop-system-of-inequality}
Let $\vec{b}$ be the length $n$ all-$b$ vector. For every $S$ in $\min(O(n/b), o(n/ \log n))$ there exists an $n \times n$ Boolean matrix $A^{(S)}$ such that every bounded error quantum circuit with space at most $S$ that decides the system $A^{(S)}x \geq \vec{b}$ of $n$ inequalities requires that $T$ is $\Omega(\sqrt{bn^3/S})$.
\end{proposition}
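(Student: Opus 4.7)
The plan is to mirror the proof of \cref{prop-bool-mat-vec} almost verbatim, replacing the role of the OR function by a threshold-$t$ function. The same probabilistically constructed matrix $A^{(S)}$ from \cite{KSdW07} that was used in the discussion preceding \cref{prop-bool-mat-vec} will serve here as well, with $k$ chosen as a sufficiently large constant multiple of $S$. The only new ingredient is invoking a strong direct product theorem for quantum threshold computation from \cite{ASdW09} (the extension of the direct-product method that motivated that paper).

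First I would recall the key combinatorial property of $A^{(S)}$: for every subset $I \subseteq [n]$ of size $k$, there is a set $J \subseteq I$ of $k/2$ ``good'' rows such that each $j \in J$ has a set $C_j$ of at least $n/(6k)$ columns where the unique 1 in $A^{(S)}_I$ lies in row $j$. The condition $S \in O(n/t)$ enters here: since $k = \Theta(S)$ we get $n/(6k) = \Omega(n/S) = \Omega(t)$, so each good row has enough unique-column 1's to support a non-trivial threshold-$t$ instance. The additional constraint $S \in o(n / \log n)$ is inherited from the original probabilistic construction of $A^{(S)}$.

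Second, I would establish the per-layer lemma: there is a constant $\alpha > 0$ such that any quantum circuit $\mathcal{C}$ making at most $\alpha \sqrt{k n t}$ queries to $x \in \{0,1\}^n$ correctly produces $k$ outputs of $A^{(S)} x \geq \vec{t}$ with probability at most $2^{-\Omega(k)}$. To prove this, let $I$ be the set of indices of the first $k$ outputs that $\mathcal{C}$ produces and let $J \subseteq I$ be the set of good rows. Restrict inputs $x$ to those that vanish outside $\bigcup_{j \in J} C_j$. Then for each $j \in J$ the inequality $A^{(S)}_j \cdot x \geq t$ is exactly $\mathrm{Th}_t$ on the bits of $C_j$, and the outputs for the $k/2$ rows of $J$ decompose into $k/2$ \emph{disjoint} copies of $\mathrm{Th}_t^{n/(6k)}$. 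The strong direct product theorem for threshold functions from \cite{ASdW09} then gives that any quantum algorithm with at most $\varepsilon \cdot (k/2) \cdot \sqrt{t \cdot n/(6k)} = \Theta(\sqrt{k n t})$ queries is correct with probability at most $2^{-\gamma k}$ for some constant $\gamma > 0$.

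Third, I would wrap up with the standard Borodin--Cook argument combined with \cref{prop:quant-union}. Partition the circuit into layers of $\alpha \sqrt{k n t}$ queries and choose $k$ a large enough constant multiple of $S$ so that $2^{2S} \cdot 2^{-\gamma k} \leq 2^{-S}$. By \cref{prop:quant-union} and the per-layer lemma, each layer produces its first $k$ outputs correctly with probability at most $2^{-S}$, so for success probability larger than this threshold some layer must produce fewer than $k$ outputs. Since there are $n$ outputs in total, at least $\Omega(n/k)$ layers are required, yielding
\begin{equation*}
    T \geq \Omega\!\left(\frac{n}{k}\right) \cdot \Theta(\sqrt{k n t}) = \Omega\!\left(\sqrt{\frac{n^3 t}{k}}\right) = \Omega\!\left(\sqrt{\frac{n^3 t}{S}}\right).
\end{equation*}
The main obstacle is just bookkeeping: the strong direct product theorem for $\mathrm{Th}_t$ must be applied with the precise parameters $n/(6k)$ and $k/2$ so that the final query bound is $\Theta(\sqrt{knt})$, and the combinatorial embedding step goes through verbatim once one verifies that good rows exist with unique-column budget at least $t$, which is exactly what $S \in O(n/t)$ guarantees.
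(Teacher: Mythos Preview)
The paper does not give its own proof of this proposition: it is cited directly as Theorem~11 of \cite{ASdW09}, with only the one-sentence remark that ``this matrix is used so that any quantum circuit that computes $Ax \geq \vec{t}$ can be broken down into slices that solve independent instances of the $t$-threshold function.'' Your reconstruction follows exactly that sketch---same matrix $A^{(S)}$, same good-rows/unique-columns embedding, strong direct product for $\mathrm{Th}_t$ in place of OR, and the standard Borodin--Cook wrap-up via \cref{prop:quant-union}---so it is both correct and aligned with the intended argument.
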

Similar to \cite{KSdW07} this matrix is used so that any quantum circuit that computes $A^{(S)}x \geq \vec{b}$ can be broken down into slices that solve independent instances of the $b$-threshold function.

\subsubsection*{Our results}

Using \cref{prop-bool-mat-vec}, we can obtain a time-space tradeoff lower bound for quantum computation of Boolean matrix-vector product that has an only slightly weaker lower bound in terms of the matrix dimensions but, unlike
the previous bound, defines a fixed computational problem whose definition
is independent of the space bound allowed.

\begin{theorem}
\label{cor:ts-boolean-matrix-vector}
    There is a fixed $m\times n$ Boolean matrix $A$ with $m\le n\log_2 n$
    such that for every $S$ that is $o(n/\log n)$
    every bounded-error quantum circuit with space at most $S$ that computes Boolean matrix-vector product $A \bullet x$ in $T$ queries requires that $T$ is $\Omega(\sqrt{n^{3}/S})$.
\end{theorem}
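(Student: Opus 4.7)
The plan is to remove the space dependence of the hard matrix in \cref{prop-bool-mat-vec} by vertically stacking a geometric sequence of the space-dependent hard matrices, one for each relevant space scale. Concretely, I would set $S_i = 2^i$ for $i$ ranging from $\lceil \log_2 \log_2 n \rceil$ up to $\lfloor \log_2(n/\log_2 n) \rfloor$, giving at most $\log_2 n$ such indices. Let $A$ be the matrix obtained by stacking the $n \times n$ matrices $A^{(S_i)}$ promised by \cref{prop-bool-mat-vec} on top of each other. Then $A$ has at most $n \log_2 n$ rows, as required, and is a single fixed matrix independent of any algorithm's space budget.

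Next, I would argue the lower bound. Fix any $S$ with $S = o(n/\log n)$ and any bounded-error quantum circuit $\calC$ of space at most $S$ and $T$ queries that computes $A \bullet x$. Choose $i^\ast$ to be the smallest index in our range with $S_{i^\ast} \ge S$; since the $S_i$ form a doubling sequence covering the range $[\log_2 n,\, n/\log_2 n]$, this gives $S_{i^\ast} \le 2S$ (up to handling the trivial boundary where $S$ is smaller than the first $S_i$, for which the lower bound already holds). The outputs of $\calC$ include the block of $n$ rows corresponding to $A^{(S_{i^\ast})} \bullet x$, so ignoring the other outputs we obtain a quantum circuit of space at most $S \le S_{i^\ast}$ and $T$ queries that computes $A^{(S_{i^\ast})} \bullet x$ with bounded error.

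Applying \cref{prop-bool-mat-vec} with parameter $S_{i^\ast}$, we conclude that $T$ must be $\Omega(\sqrt{n^3/S_{i^\ast}})$, and since $S_{i^\ast} \le 2S$ this is $\Omega(\sqrt{n^3/S})$, completing the proof. The only thing that needs checking is that $S_{i^\ast}$ still satisfies the hypothesis $S_{i^\ast} = o(n/\log n)$ needed to invoke \cref{prop-bool-mat-vec}; this holds because by construction $S_{i^\ast} \le 2S = o(n/\log n)$, and this is precisely why the stacking is restricted to the range $i \le \lfloor \log_2(n/\log_2 n) \rfloor$.

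There is no real technical obstacle here: the argument is an embedding/padding reduction. The one subtlety worth flagging is the direction of the inequality when choosing $S_{i^\ast}$. One must pick $S_{i^\ast} \ge S$ (not $\le S$) so that the space-$S$ circuit automatically qualifies as a valid space-$S_{i^\ast}$ circuit against which the previous lower bound applies; picking the other direction would only give the weaker bound $\Omega(\sqrt{n^3/S_{i^\ast}})$ with $S_{i^\ast}$ potentially much smaller than $S$, which is the wrong direction. Once the correct direction is used, the doubling ratio $S_{i^\ast}/S \le 2$ is absorbed into the $\Omega(\cdot)$.
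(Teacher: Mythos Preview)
Your proposal is correct and follows essentially the same stacking argument as the paper. The only minor difference is the direction in which you round: you take $S_{i^\ast}\ge S$ and apply \cref{prop-bool-mat-vec} as a black box, whereas the paper takes $S_i\le S$ within a factor of two and appeals to the flexibility of the constant relating $k$ and $S$ in the construction of $A^{(S_i)}$; either choice works.
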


\begin{proof}
  The matrix $A$ consists of a stacked version of the matrices $A_{(S_i)}$ from \cref{prop-bool-mat-vec} for each choice of $S_i=2^i \log_2 n$ and $0\le i\le \log_2 n - 2\log_2 \log_2 n -\omega(1)$.
  Any quantum circuit computing $A\bullet x$ using space $S$ must compute $A^{(S_i)}\bullet x$ for some $S_i$ where $S_i\le S$ is within factor of 2 of
$S$.   It is easy to see that the construction of $A_{(S)}$ for
\cref{prop-bool-mat-vec} is flexible in terms
of the constant factor by which $k$ exceeds $S$ and hence computing
matrix $A^{(S_i)}\bullet x$ also requires time $T$ that
is $\Omega(\sqrt{n^{3}/S})$ as required.
\end{proof}

\subparagraph{Systems of linear inequalities}
This same matrix $A$ can be substituted into \cref{prop-system-of-inequality} to obtain a time-space tradeoff for systems of inequalities.
\begin{corollary}
Let $\vec{b}$ be the length $n$ all-$b$ vector.
There is a fixed $m \times n$ Boolean matrix $A$ with $m \leq n \log_2 n$ such that for every $S$ in $\min(O(n/b), o(n/ \log n))$ every bounded error quantum circuit with space at most $S$ that decides the system $Ax \geq \vec{b}$ requires $T$ that is $\Omega(\sqrt{bn^3/S})$.
\end{corollary}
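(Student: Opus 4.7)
The plan is to follow the same stacking construction used in the proof of \cref{cor:ts-boolean-matrix-vector}, but applied to the space-dependent matrices $A^{(S)}$ from \cref{prop-system-of-inequality} rather than those of \cref{prop-bool-mat-vec}. The key idea is that the hardness construction of $A^{(S)}$ can be made flexible in its constants so that a single stacked matrix simultaneously encodes a hard instance for every relevant space bound.

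First I would define the target matrix $A$ by vertically concatenating the matrices $A^{(S_i)}$ for $S_i = 2^i \log_2 n$, ranging over $0 \le i \le \log_2 n - 2\log_2 \log_2 n - \omega(1)$, so that each $S_i$ lies in $\min(O(n/t), o(n/\log n))$ and the total row count satisfies $m \le n \log_2 n$. The right-hand side for the full system $Ax \ge \vec{t}$ is then simply the all-$t$ vector of the new length $m$, which when restricted to the rows coming from $A^{(S_i)}$ is precisely the original instance $A^{(S_i)} x \ge \vec{t}$.

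Next I would argue that any quantum circuit with space $S$ that decides $Ax \ge \vec{t}$ must in particular decide $A^{(S_i)} x \ge \vec{t}$ for every $i$; in particular, choosing $S_i$ to be the largest $S_i \le S$, we have $S_i = \Theta(S)$. Since the construction of $A^{(S)}$ in \cref{prop-system-of-inequality} is insensitive to the exact constant multiplier relating the threshold parameter $k$ to $S$ (just as observed in \cref{cor:ts-boolean-matrix-vector} for the matrix-vector construction), the lower bound of \cref{prop-system-of-inequality} applied to this sub-instance yields $T = \Omega(\sqrt{t n^3 / S_i}) = \Omega(\sqrt{t n^3 / S})$, as required.

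The main obstacle, and the only thing that requires actual verification rather than inheritance, is confirming that the distributional argument underlying \cref{prop-system-of-inequality} is genuinely robust to replacing $k = \Theta(S)$ by $k = \Theta(S_i)$ with $S_i \in [S/2, S]$: the structural guarantee on $A^{(S_i)}$ (that most row-subsets of size $k$ embed many disjoint $t$-thresholds on blocks of $\Theta(n/k)$ bits) must survive the constant-factor slack in the relationship between $k$ and the actual space bound used by the hypothetical circuit. This is a straightforward recheck of the Hoeffding/union-bound probabilities from the construction, and given that the exponent is $n^{-\omega(k)}$ in the original argument, the slack is ample. With that verified, the theorem follows immediately from the stacking plus the original proposition.
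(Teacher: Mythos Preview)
Your proposal is correct and follows essentially the same approach as the paper: stack the space-dependent hard matrices over geometrically spaced values of $S_i$ and, given a circuit of space $S$, invoke the lower bound for the block with $S_i=\Theta(S)$. The paper's argument is in fact even shorter because it notes explicitly that the matrices $A^{(S)}$ in \cref{prop-system-of-inequality} are literally the same matrices as those in \cref{prop-bool-mat-vec}, so the stacked matrix $A$ already constructed in \cref{cor:ts-boolean-matrix-vector} works verbatim here with no new construction needed.
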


\bibliographystyle{alpha}
\bibliography{sources}

\section{Deterministic query algorithms}\label{sec:query-algs}

Here we review the matching time-space space tradeoffs that match our quantum and classical lower bounds. Most of these results were mentioned in~\cite{Abr91} but are more fully sketched here.  In the following, for simplicity, we describe versions of several of these algorithms over finite fields
rather than finite subsets of size $d$ over arbitrary fields.
For the more general case, the output values are sums of
products of input values and may take more bits to represent; because of this the $\log p$ in our bounds below can be replaced by $O(\max{\log d, \log n})$.

The first gives classical algorithms for matrix-vector products matching~\cref{thm:time-space-matrix-vector}.

\begin{proposition}\label{prop:mat-vec-upper-bound-alg}
Let $A$ be any $n\times n$ matrix over a finite field $\F_p$.
For any $S \in [\log_2 n, n \log_2 p]$ there is a deterministic classical query algorithm computing the matrix vector product $f(x) = Ax$ for all inputs $x \in \F_p$ that uses space $S$ and only $O(n^2 \log p\ /S)$ queries to the input.
\end{proposition}

\begin{proof}
Let $s=S/\log_2 p$.  The query algorithm (which has the matrix $A$ encoded in it) reads one entry of the input $x$ at a time and maintains a block of $s$ different partial sums (using $s \log_2 p$ space). This algorithm produces $S$ outputs every $n$ queries and thus produces all outputs with $n^2/s=n^2 \log_2 p\ /S$ queries.
\end{proof}

Note that in the special case of computing the Discrete Fourier Transform (DFT)~(\cref{cor:DFT-ts-lb}), this deterministic query bound can be made explicit using standard operations:

\begin{proposition}[\cite{DBLP:journals/tit/SavageS78}]
    There is a deterministic classical algorithm computing
    the Discrete Fourier Transform (DFT) $DFT_n(x)=W x$ using
    space $S\ge \log_2 n$ and time $O(n^2/S + n\log S)$.
\end{proposition}

\begin{proof}
Assume without loss of generality that $S$ and $n$ are powers of 2 and we have $O(S)$ space. 
This follows by evaluating the graph of the fast Fourier transform (FFT) algorithm for computing the DFT as shown
in~\cref{fig:FFT}.
 \begin{figure}[hb]
\begin{center}
    \includegraphics[height=1.5in]{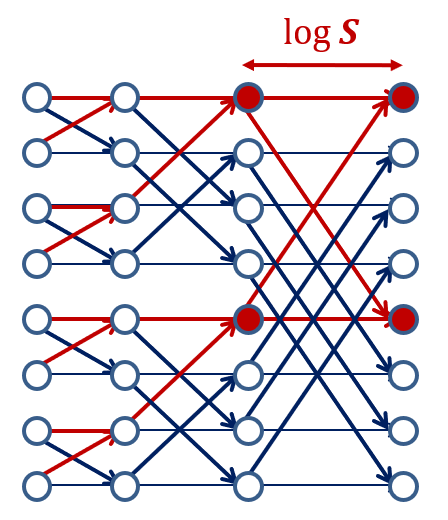}
    \caption{The FFT graph with the space-efficient evaluations on one pass
    highlighted.}\label{fig:FFT}
    \end{center}
\end{figure}
In a single pass over the input $x$ in $O(n+S\log S)$ time the algorithm can compute the values of $S$ of the outputs using space $O(S)$ as follows: 
while
maintaining $\log_2(n/S)\le S$ entries for the depth-first evaluation
of each subproblem at depth $\log_2 S$ and uses space $2S$
to iterate through the top $\log_2 S$ levels which are evaluated together in a size $S$ FFT computation.
This pass is repeated for each of the $n/S$ such blocks in turn.
\end{proof}

The following deterministic algorithms for convolution match~\cref{cor:convo-ts-lb}.

\begin{proposition}
For any $S\in [\log_2 n, n \log_2 p]$ there is a deterministic classical query algorithm that computes the convolution $f(u,v) = u * v$ where $u,v \in \F_p^{n}$ that uses $S$ space and only $O(n^2 \log p \ / S)$ queries.
\end{proposition}
\begin{proof}
    Let $s=S/ (2\log_2 p)$.
    The indices of $u,v$ and $w = u *v$ are reduced modulo $n$.
    The query algorithm computes outputs $w_i, \ldots w_{i + s}$ of the convolution as follows: 
    Initialize $w_i, \ldots w_{i+s}$ to the value zero.
    First query and record the values of $v_{i-1}, \ldots v_{i+s-1}$.
    Then query values of $u$ one at a time in increasing order ($u_1, u_2, \ldots u_n$).
    After reading $u_j$, for each $k \in \{i, \ldots, i+s\}$, add $u_{j} \cdot v_{k-j}$ to the value of $w_k$.
    Then forget the value of $v_{i+s-j}$ and query the value of $v_{i-j-1}$, remembering this value.
    After all of $u$ has been queried, we have that $w_k = \sum_{j\in[n]} u_jv_{k-j}$ which is the correct value for these outputs.
    Repeating this procedure $n/s$ times gives the convolution of $u$ and $v$ using only $S$ space and $2n$ queries per iteration.
    Since there are $n/s$ iterations, we get $O(n^2 \log p \ / S)$ queries.
\end{proof}

The algorithms below show that our matrix-inversion lower bound for upper-triangular matrices in~\cref{cor:mat-invert-ts-lb} cannot be improved for large space bounds, even for deterministic query algorithms.  This is open for small space bounds.

\begin{proposition}
    For any $S\in [n \log_2 p, n^2 \log_2 p]$ there is a deterministic classical query algorithm computing the inverse $f(A) = A^{-1}$ where $A\in \F_p^{n \times n}$ is a unit upper triangular matrix that uses $S$ space and only $O(n^4 \log p \ / S)$ queries.
\end{proposition}
\begin{proof}
    Let $s= S/(2 n \log p)$. We will produce columns $j_1, \ldots j_s$ of $A^{-1}$ as follows:
    Let $e_j$ be the column vector with entry $1$ at index $j$ and $0$ everywhere else.
    We use back substitution to solve the systems $Ax_1 = e_{j_1}, \ldots, Ax_s=e_{j_s}$ by querying each entry of $A$ exactly once. In particular, the $i$-th entry of $x_k$ is $1-\sum_{\ell \in [n-i]} A_{i,n-\ell+1} x_{n-\ell+1}$ when $i=k$ and $-\sum_{\ell \in [n-i]} A_{i,n-\ell+1} x_{n-\ell+1}$ otherwise.
    We start by computing the $n$-th entry of each $x_k$ and work backward toward the first entry.
    We record each entry of each $x_k$ as is it computed for use in the subsequent computational steps.
    Note that the $i$-th entries of all the $x_k$ only require making queries to the $i$-th row of $A$ and so all the $x_k$ can be computed with only $O(n^2)$ queries.
    Finally, each $x_k$ is output as the $j_k$-th column of $A^{-1}$.
    This procedure uses $O(n^2)$ queries and at most $S$ space to produce $s$ columns of the output.
    Thus the procedure must be repeated $n/s=2 n^2 \log p \ / S$ times to produce all $n$ columns of output.
    This gives a total query complexity of
    $O(n^4 \log p \ / S)$.
\end{proof}

The following give the deterministic algorithms matching our matrix-multiplication, Boolean matrix-multiplication  
(\cref{thm:mat-mult,thm:quantum-booolean-matrix}) and squaring lower bounds (\cref{cor:mat-square,cor:bool-matrix-sq}).

\begin{proposition}\label{prop:matrix-product-upperbound}
There are deterministic query algorithms for $n\times n$ Matrix Multiplication over $\mathbb{F}_p$ using space $S$ that make $O(n^3\sqrt{\log p}/\sqrt{S})$ queries.
Further, $O(n^3/\sqrt{S})$ queries suffice for deterministic algorithms using space $S$ to compute $n\times n$ Boolean Matrix Multiplication.
\end{proposition}

\begin{proof}
Let $s=S/(3\log p)$.   We partition
each input matrix $A$ and $B$ into 
$\sqrt{s}\times \sqrt{s}$ blocks $A_{ij}$ and $B_{ij}$
for $i,j\in [\ell]$ where $\ell=n/\sqrt{s}$.
We compute the $\sqrt{s}\times\sqrt{s}$ blocks $C_{ij}$ of
the product as follows:
Initialize the block $C_{ij}$ to $0$.
For $k=1$ to $\ell$, query all entries of $A_{ik}$ and $B_{kj}$ and add their product $A_{ik}B_{kj}$ to $C_{ij}$.
The 3 matrices $A_{ik}$, $B_{kj}$, and $C_{ij}$ together
require space $S$ since each entry can be expressed using
$\log p$ bits. The total number of queries to compute $C_{ij}$ is $n\sqrt{s}$ and there are $\ell^2=n^2/s$ blocks to
compute for a total of $n^3/\sqrt{s}=O(n^3 \sqrt{\log p}/\sqrt{S})$ queries as claimed.

The query algorithm for Boolean Matrix Multiplication is analogous
with $s=S/3$ and entry-wise $\lor$ instead of addition.
\end{proof}

Finally, we see that the matrix triple-product and cubing lower bounds in~\cref{cor:mat-mat-mat-product-ts-lb,cor:mat-cube-ts-lb} have matching deterministic query algorithms.

\begin{proposition}
For any $S\in [\log_2 n, n^2 \log_2 p]$ there is a deterministic classical query algorithm computing the Matrix Triple Product $f(A,B,C) = ABC$ where $A,B,C \in \F_p^{n \times n}$ that uses $S$ space and only $O(n^4 \log p \ / S)$ queries.
\end{proposition}
\begin{proof}
Let $s=S/(4 \log p)$.
We view the product $ABC$ as $(AB)C$ and use the same strategy as in \cref{prop:matrix-product-upperbound} to compute partial products of $(AB)$ and then $ABC$.
We partition the input, partial product, and output matrices into blocks $A_{ij}, B_{ij}, C_{ij}, (AB)_{ij},$ and $(ABC)_{ij}$ for $i,j \in [\ell]$ where $\ell = n/\sqrt{s}$.
To compute $(AB)_{ij}$ we initialize the values in the block to zero.
Then, for each $k \in [\ell]$, we query each $A_{ik}$ and $B_{kj}$ and then perform the multiplication of these submatrices, adding the result into $(AB)_{ij}$.
After iterating over all $k$, we have computed the value of $(AB)_{ij}$.
Now to compute $(ABC)_{ij}$ we start by initializing the values in $(ABC)_{ij}$ to zero.
For each $k \in [\ell]$, we first compute $(AB)_{ik}$ as a subroutine and then query $C_{kj}$ and add the partial product $(AB)_{ik}C_{kj}$ into $(ABC)_{ij}$.
After iterating over all $k$, we have computed the block $(ABC)_{ij}$.
This query algorithm stores at most $4$ different $\sqrt{s} \times \sqrt{s}$ blocks at any time step.
It requires $\sqrt{s} n$ queries to compute each $(AB)_{ij}$ and needs to compute $n/\sqrt{s}$ such blocks for each $(ABC)_{ij}$.
Adding the $\sqrt{s} $ queries to $C$ needed to compute $(ABC)_{ij}$ gives $n\sqrt{s} (1+n/\sqrt{s})$ total queries to compute each block $(ABC)_{ij}$.
Since there are $n^2 / s$ such blocks, we get $O(n^4 / s)$ or $O(n^4 \log p \ / S)$ queries.
\end{proof}

\end{document}